\pdfoutput=1
\documentclass{amsart}

\usepackage{hyperref} % formerly supplied by eptcs.cls
\hypersetup{
  hidelinks,
  pdftitle={Essential Unitarity for Higher-Order Quantum Computation},
  pdfauthor={Samson Abramsky and Radha Jagadeesan},
  pdfsubject={Categorical semantics for higher-order quantum computation},
  pdfkeywords={essential unitarity, categorical quantum mechanics,
    Kelly--Laplaza, coherent quantum control, quantum switch}
}

\usepackage{amsmath,amssymb,amsthm}
\usepackage{mathrsfs}  % \mathscr notation
\usepackage{stmaryrd}
\let\parr\bindnasrepma
\usepackage{tikz}
\usepackage{tikz-cd}
\usetikzlibrary{positioning, calc}
\usepackage{enumitem}
\setlist{nosep,leftmargin=*}% tight vertical spacing and flush margins for all lists
\usepackage{mathtools}
\usepackage{comment}

\newif\ifconfversion
\confversionfalse% ← change to \confversiontrue for the conference version

\ifconfversion
  \AtBeginDocument{%
    \setlength{\abovedisplayskip}{2pt plus 1pt minus 1pt}%
    \setlength{\belowdisplayskip}{2pt plus 1pt minus 1pt}%
    \setlength{\abovedisplayshortskip}{0pt plus 1pt minus 1pt}%
    \setlength{\belowdisplayshortskip}{0pt plus 1pt minus 1pt}%
    \setlength{\topsep}{2pt}%
    \setlength{\partopsep}{0pt}%
  }
\fi

\newif\ifshowproofs
\showproofstrue% ← change to \showproofsfalse to hide proofs in extended mode

\ifconfversion
  \showproofsfalse
\fi

\ifshowproofs\else
  \excludecomment{proof}
\fi

\newcommand{\extendedonly}[1]{\ifconfversion\else#1\fi}
\newcommand{\confonly}[1]{\ifconfversion#1\fi}

\usepackage{iftex}
\ifpdf
  \usepackage{underscore}
  \usepackage[T1]{fontenc}
  \usepackage{lmodern}
\else
  \usepackage{breakurl}
\fi

\newtheorem{definition}{Definition}[section]
\newtheorem{theorem}[definition]{Theorem}
\newtheorem{lemma}[definition]{Lemma}
\newtheorem{proposition}[definition]{Proposition}
\newtheorem{corollary}[definition]{Corollary}

\theoremstyle{definition}
\newtheorem{remark}[definition]{Remark}

\numberwithin{equation}{section}

\newcommand{\Perm}{\mathrm{Perm}}

\newcommand{\QC}{\mathbb{QC}}

\newcommand{\id}{\mathrm{id}}

\DeclareMathOperator{\Pos}{Pos}
\DeclareMathOperator{\Neg}{Neg}

\DeclareMathOperator{\ev}{ev}

\newcommand{\CC}{\mathbb{C}}
\newcommand{\NN}{\mathbb{N}}
\newcommand{\RR}{\mathbb{R}}

\newcommand{\ii}{\mathrm{i}}
\newcommand{\ket}[1]{\lvert #1\rangle}
\newcommand{\bra}[1]{\langle #1\rvert}
\newcommand{\Unit}{\mathbf{1}}

\title[Essential Unitarity for Higher-Order Quantum Computation]
      {Essential Unitarity for Higher-Order Quantum Computation}

\author[Abramsky]{Samson Abramsky}
\address{University College London, London, UK}
\email{s.abramsky@ucl.ac.uk}

\author[Jagadeesan]{Radha Jagadeesan}
\address{DePaul University, Chicago, USA}
\email{rjagadee@depaul.edu}

\begin{document}

%============================================================
% amsart requires the abstract BEFORE \maketitle.
%============================================================
\begin{abstract}
\confonly{\emph{(Extended abstract.)}\;\,}We develop a boundary-centric
semantic framework for higher-order quantum computation, building on
the Kelly--Laplaza description of compact closure and Abramsky's
execution account. In the semantic carrier \(\Perm(\CC)\), morphisms
are complex-linear combinations of polarized boundary linkings, composed
by execution. Finite-family addresses provide coherent control over
finite-level quantum registers (qudits) while retaining the multiplicative
boundary structure.

We identify \emph{essential unitarity}, a boundary condition extending
ordinary unitarity to higher-order interfaces. On positive qudit
registers it coincides with ordinary matrix unitarity; at higher order
it expresses preservation of information across the full polarized
boundary.

We define a unit-free coherent quantum core generated by multiplicative
wiring, unitary gates on positive qudit registers, and contextual
coherent control, and prove that every one of its morphisms is
essentially unitary. The framework realizes an applied coherent quantum switch
and the unitary stages of equal-ratio one-slot supermap dilations with
explicit memory.

\extendedonly{An extended abstract of this work was accepted for QPL 2026
and is forthcoming in its proceedings.}
\end{abstract}

\maketitle

\section{Introduction}
\label{sec:intro}

Since the foundational work of Abramsky and
Coecke~\cite{abramskyCoecke2004}, Categorical Quantum Mechanics has
shown that tensorial composition, duality, and entanglement fit within
dagger compact closed categories~\cite{Selinger2007CPM}. Adding a
symmetric monoidal sum accommodates classical interfaces and
measurement~\cite{Coecke_Kissinger_2017}. String diagrams give a
representation-free account of process composition, independent of
any Hilbert-space model.

In finite-dimensional closed-system quantum mechanics, reversible
dynamics is represented by unitary operators: an evolution $f:A\to A$
satisfies $f^\dagger f=ff^\dagger=\id_A$. Higher-order processes,
however, take processes rather than states as inputs. The quantum
switch~\cite{chiribella2013quantum}, for instance, coherently
superposes the two causal orders $f\circ g$ and $g\circ f$ of
evolutions $f,g:A\to A$ under quantum control; it is physically
realizable and reversible, yet is not a first-order unitary
endomorphism because its inputs are processes. This raises the
question: \emph{What boundary invariant extends first-order unitarity
to higher-order typed interfaces?}

We work from below. The base is the free compact closed category of
Kelly--Laplaza (KL)~\cite{coherence}, in the concrete combinatorial
presentation of Abramsky~\cite{AbramskyAS}, which separates
permutations, loops, and polarities. Its finite-coproduct completion
$\mathrm{Cop}$ records distinguishable layouts. Rather than mediate
distributivity by coherence isomorphisms~\cite{Laplaza72}, we work in
distributed normal form: objects are finite disjoint unions of
interfaces, and tensor distributes over sum at the object level. The
$\CC$-linear matrix/biproduct completion is the carrier category $\Perm(\CC)$,
a dagger compact closed category with finite biproducts, supplying
superposition and quantum amplitudes
(Section~\ref{sec:structural-core}).

A KL morphism is read on the two polarities of its compact-name
boundary. A basis linking contributes its permutation matrix; linear
combinations contribute the corresponding complex matrix, with closed
loop factors already in the coefficients. Essential unitarity (EU)
requires the resulting boundary matrix to be two-sided unitary
(Definition~\ref{def:EU}). On positive qudit registers this is ordinary
matrix unitarity. For qudits carrying multiplicative payloads, the full
matrix retains both the qudit-family address and the KL port
(Section~\ref{sec:eu-predicate}).

The source category $\QC$ (Definition~\ref{def:basic-QC}) has raw
types generated from positive qudit literals and their duals by
$\otimes$ and $\parr$. Its morphisms are generated by the unit-free no-Mix fragment of
multiplicative linear logic (MLL), unitary gates on positive qudit registers,
and contextual qudit control. A controlled family may share an
arbitrary higher-order input, but has a common positive output and
retains one selector port throughout. The two carrier counterexamples
of Section~\ref{subsec:QC-why-coherence} explain these restrictions.

For a source type \(X\), let \(\mathsf A_X^+\) and
\(\mathsf A_X^-\) collect its positive and negative boundary
coordinates, respectively. For \(h:A\to B\), compact naming exposes
the boundary \((A^*,B)\). Its fixed polarization determines the
incoming and outgoing axes, and typed un-naming gives
\[
 M(h):
 \mathsf A_A^+\otimes\mathsf A_B^-
 \longrightarrow
 \mathsf A_A^-\otimes\mathsf A_B^+.
\]
Our main theorem proves both Gram identities for \(M(h)\). Its
fixed-width representation is therefore a unitary numerical
qudit--MLL matrix, hence \(h\) is essentially unitary
(Theorem~\ref{thm:calculus-EU}).

The proof follows the source syntax. An administrative representation
makes every constructor matrix explicit. Compact coherence and the
naturality of contextual control replace composition through a mixed
intermediate type by composition through pure, meaning one-polarity,
intermediate types. The
unit-free multiplicative base, positive-register gates, contextual
control, and composition through pure interfaces all preserve the two
Gram identities. This yields a direct proof path from a source term to
its essentially unitary full matrix.

The calculus also contains the applied contextual quantum switch of
Section~\ref{subsec:contextual-switch-v2}. Function types support
resource-linear one-slot transformations by pre- and postcomposition,
tensoring with constants, and synchronized qudit control. Finally, the
calculus realizes the unitary stages of deterministic one-slot
supermap dilations whenever the input/output dimension ratios agree
and the memory remains an explicit boundary wire
(Section~\ref{sec:expressiveness}). The recovery identity is stated in
external Hilbert-space/channel semantics; measurement, partial trace,
and mixed-state semantics are left to future work.

\subsection{Related Work}\label{sec:related-work}

\noindent\emph{Categorical quantum mechanics.}
Diagrammatic calculi such as the ZX-calculus~\cite{ZX} provide rewrite
systems for linear maps on qubits; dagger Frobenius algebras capture
observables and classical interfaces~\cite{Coecke2013}; and the CPM
construction~\cite{Selinger2007CPM} shows how mixed-state quantum
mechanics and completely positive maps arise canonically from dagger
compact structure.

\noindent\emph{Higher-order quantum processes.}
The study of \emph{supermaps}---transformations on quantum
operations---originates
with~\cite{Chiribella2008Supermaps,Chiribella2008,Chiribella2009},
which characterize deterministic one-slot supermaps and give
sequential realizations by pre- and post-isometries connected by a
memory system, followed by partial trace over the final memory.
Subsequent work studies causal and indefinite-order
processes~\cite{Oreshkov2012,A2014,KissingerUijlen2019,WilsonChiribellaKissingerLocality}
and higher-order-process accounts based, respectively, on BV logic,
closed symmetric monoidal categories, enriched monoidal categories, and
strong profunctors~\cite{simmons2022higherorder,Wilson2021HOPT,Wilson2022Framework,HeffordWilson2024}.
These approaches characterize higher-order admissibility relative to
an underlying first-order process theory through operational, causal,
or structural conditions. Two concurrent language-level accounts place
higher-order quantum control in causal process semantics. Hirata and
Tsukada~\cite{HirataTsukada2026} develop a pure, measurement-free
higher-order calculus with quantum conditionals and a causality type
system, interpreted in $\mathbf{CausHilb}$, which connects
finite-dimensional Hilbert spaces with $\mathbf{Caus}[\mathbf{CPM}]$.
Barsse, P\'echoux, and Perdrix~\cite{BarssePechouxPerdrix2026}
extend coherent control to measurement and arbitrary quantum channels,
with denotational semantics in $\mathbf{Caus}[\mathbf{CPM}]$. At the
level of physical operations, $\QC$ isolates the qudit-controlled,
essentially unitary boundary fragment of this shared design space, leaving
measurement and discarding external.

\noindent\emph{The present approach.}
The construction begins with
Kelly--Laplaza compact-closed syntax and its complex-linear
finite-biproduct carrier. The source category $\QC$ adds positive
qudit-register gates and contextual qudit control to the unit-free
multiplicative base. Correctness is expressed by essential unitarity of
the full qudit--MLL boundary matrix of each $\QC$-morphism.

Our resource-linear contexts provide a unitary analogue of part of the
usual supermap
picture~\cite{WilsonChiribellaKissingerLocality,HeffordWilson2024}:
the distinguished hole is used once, substitution remains in $\QC$,
and its ambient carrier extension is $\CC$-linear. Wilson observes that symmetric polycategories arise by forgetting
representability from linearly distributive categories and support the
partial applicability of holes to bipartite
processes~\cite{Wilson2026HigherOrderCircuits}. The unit-free
multiplicative structure of $\QC$ determines a symmetric polycategory
representable on nonempty lists; its weak distributors implement
polycategorical composition on nonempty
interfaces~\cite{CockettSeely1997WDC}. Resource-linear contexts provide
a source-level analogue of
the single-hole local-application fragment isolated by Wilson and
Chiribella~\cite{WilsonChiribella2026MonoidalSU}. Equal input/output
dimension ratios permit padding the ancillary spaces and extending the
pre- and post-isometries to unitaries while memory wires remain
explicit. An internal source/carrier treatment of measurement, partial
trace, and mixed-state semantics is left to future work.

\noindent\emph{Structure of the paper.}
Section~\ref{sec:orientation} introduces the boundary viewpoint and
essential unitarity.
Section~\ref{sec:structural-core} constructs the finite-coproduct
precursor $\mathrm{Cop}$ and the carrier $\Perm(\CC)$.
Sections~\ref{sec:boundary-transport} and~\ref{sec:eu-predicate}
develop KL boundary matrices, qudit--MLL matrices, and EU.
Sections~\ref{sec:QC} and~\ref{sec:calculus} define $\QC$ and prove
that every $\QC$-morphism is essentially unitary.
Section~\ref{sec:expressiveness} develops resource-linear contexts
and equal-ratio one-slot dilation stages with explicit memory.
\extendedonly{The appendices provide the carrier, boundary-execution,
and motivating-witness calculations used in the main text.}

\section{Essential unitarity, informally}
\label{sec:orientation}

%----------------------------------------------------------
% Figure 1: KL view and boundary view
%----------------------------------------------------------
\begin{figure}[htbp]
\centering
\begin{tikzpicture}[scale=0.85,
  port/.style={font=\scriptsize, inner sep=1pt},
  wire/.style={semithick},
]
  % === Panel (a): KL view ===
  \begin{scope}[local bounding box=klview]
    \node[port] (s1) at (0,1.8)   {$c^{+}$};
    \node[port] (s2) at (0,1.2)   {$a_1^{+}$};
    \node[port] (s3) at (0,0.6)   {$a_2^{+}$};
    \node[port] (s4) at (0,0)     {$b^{-}$};
    \node[port] (t1) at (2.6,1.8) {$b_1^{+}$};
    \node[port] (t2) at (2.6,1.2) {$b_2^{+}$};
    \node[port] (t3) at (2.6,0.6) {$b_3^{+}$};
    \node[port] (t4) at (2.6,0)   {$a^{-}$};
    \draw[wire] (s1) -- (t1);
    \draw[wire] (s2) -- (t2);
    \draw[wire] (s3) -- (t3);
    \draw[wire] (s4) -- (t4);
  \end{scope}
  \node[below=3pt of klview.south, font=\scriptsize]
    {(a)\ KL view};

  % === Panel (b): Boundary view ===
  \begin{scope}[xshift=4.4cm, local bounding box=bview]
    \node[port] (p1) at (0,1.8)   {$a^{-}$};
    \node[port] (p2) at (0,1.2)   {$b_1^{+}$};
    \node[port] (p3) at (0,0.6)   {$b_2^{+}$};
    \node[port] (p4) at (0,0)     {$b_3^{+}$};
    \node[port] (n1) at (2.6,1.8) {$b^{-}$};
    \node[port] (n2) at (2.6,1.2) {$c^{+}$};
    \node[port] (n3) at (2.6,0.6) {$a_1^{+}$};
    \node[port] (n4) at (2.6,0)   {$a_2^{+}$};
    \draw[wire] (p1) -- (n1);
    \draw[wire] (p2) -- (n2);
    \draw[wire] (p3) -- (n3);
    \draw[wire] (p4) -- (n4);
  \end{scope}
  \node[below=3pt of bview.south, font=\scriptsize]
    {(b)\ boundary view};
\end{tikzpicture}

\caption{A KL matching and its polarized boundary view.}
\label{fig:kl-vs-boundary}
\end{figure}
The starting point is the Kelly--Laplaza picture of a
morphism~\cite{coherence}, in the abstract-scalar presentation of
Abramsky~\cite{AbramskyAS}. A KL interface is a signed pair
$A=(A^+,A^-)$, with $\Pos(A):=A^+$ and $\Neg(A):=A^-$. Duality swaps
the two components, while tensor takes their disjoint union componentwise.
A KL morphism
$f:A\to B$ is a port matching
$A^+\sqcup B^-\cong B^+\sqcup A^-$, possibly with closed internal
loops. Its geometry is a wiring diagram.

Take $C=(1,0)$, $A=(2,1)$, and $B=(3,1)$. Then
$C\otimes A=(3,1)$, so a morphism $C\otimes A\to B$ matches four ports
on each side (Figure~\ref{fig:kl-vs-boundary}(a)). A KL basis morphism
is a port matching. The carrier $\Perm(\CC)$ adds complex amplitudes
and finite-family indices; within the single-interface block considered
here, its morphisms are finite $\CC$-linear combinations of such
matchings.

The boundary view regroups the same external ports as
$P_{A,B}:=\Neg(A)\sqcup\Pos(B)=\Pos(A^*\otimes B)$ and
$N_{A,B}:=\Pos(A)\sqcup\Neg(B)=\Neg(A^*\otimes B)$. The KL matching is
then read from $N_{A,B}$ (incoming) to $P_{A,B}$ (outgoing); in the running
example, $|P_{C\otimes A,B}|=|N_{C\otimes A,B}|=4$
(Figure~\ref{fig:kl-vs-boundary}(b)). The compact-name object
$A^*\otimes B$ changes the boundary partition, not the wiring. For KL
interfaces $A,B$, the carrier evaluation
$\ev_{A,B}:(A^*\otimes B)\otimes A\to B$ contracts the $A^*$- and
$A$-ports and carries the $B$-ports through.
%----------------------------------------------------------
% Figure 2: Eval — KL view (A = B = (1,1))
%----------------------------------------------------------
\begin{figure}[ht]
\centering
\begin{tikzpicture}[
  port/.style={font=\scriptsize, inner sep=1pt},
  wire/.style={semithick},
]
  \node[port] (s1) at (0,1.5)   {$\widetilde{a}^{+}$};
  \node[port] (s2) at (0,1.0)   {$b_1^{+}$};
  \node[port] (s3) at (0,0.5)   {$a^{+}$};
  \node[port] (s4) at (0,0)     {$b_2^{-}$};
  \node[port] (t1) at (2.2,1.5) {$a^{-}$};
  \node[port] (t2) at (2.2,1.0) {$b_2^{+}$};
  \node[port] (t3) at (2.2,0.5) {$\widetilde{a}^{-}$};
  \node[port] (t4) at (2.2,0)   {$b_1^{-}$};
  \draw[wire] (s1) -- (t1);
  \draw[wire] (s2) -- (t2);
  \draw[wire] (s3) -- (t3);
  \draw[wire] (s4) -- (t4);
\end{tikzpicture}
\caption{$\ev_{A,B}$.}
\label{fig:eval}
\end{figure}
Evaluation is syntactically higher-order but diagrammatically ordinary:
the $A^*$-part contracts with $A$, and the $B$-part passes through. For
$A=B=(1,1)$ it is the four-wire KL matching of Figure~\ref{fig:eval};
tildes mark ports from the $A^*$-factor, and subscripts distinguish the
two occurrences of $B$. Its higher-order character lies in the typed
boundary partition, not the wiring.
%----------------------------------------------------------
% Tiny currying figure — stacked u / ũ
%----------------------------------------------------------
\begin{figure}[ht]
\centering
\begin{tikzpicture}[
  ty/.style={draw, dashed, rounded corners, inner sep=2pt,
             minimum width=0.45cm, minimum height=0.3cm,
             font=\scriptsize},
  body/.style={draw, semithick, fill=black!4, rounded corners,
               inner sep=3pt, minimum width=0.75cm,
               minimum height=0.9cm, font=\scriptsize},
]
  % tiny panel (a): u — on top
  \begin{scope}[local bounding box=tinya]
    \node[ty]   (Ca) at (-0.85, 0.24) {$C$};
    \node[ty]   (Aa) at (-0.85,-0.24) {$A$};
    \node[body] (ua) at (0,0)         {$u$};
    \node[ty]   (Ba) at ( 0.85, 0)    {$B$};
    \draw[semithick] (Ca.east) -- ($(ua.west)+(0, 0.2)$);
    \draw[semithick] (Aa.east) -- ($(ua.west)+(0,-0.2)$);
    \draw[semithick] (ua.east) -- (Ba.west);
  \end{scope}

  % tiny panel (b): ũ — below
  \begin{scope}[xshift=3.0cm, local bounding box=tinyb]
    \node[ty]   (Cb)  at (-0.85, 0)     {$C$};
    \node[body] (ub)  at ( 0, 0)        {$\tilde u$};
    \node[ty]   (Asb) at ( 0.85, 0.24)  {$A^{*}$};
    \node[ty]   (Bb)  at ( 0.85,-0.24)  {$B$};
    \draw[semithick] (Cb.east) -- (ub.west);
    \draw[semithick] ($(ub.east)+(0, 0.2)$) -- (Asb.west);
    \draw[semithick] ($(ub.east)+(0,-0.2)$) -- (Bb.west);
  \end{scope}
\end{tikzpicture}
\caption{Currying}
\label{fig:currying}
\end{figure}
Currying preserves each typed Kelly--Laplaza component and its
coefficient under the canonical identification of compact-name ports
(Figure~\ref{fig:currying} and Remark~\ref{rem:curry-invisible}). It
changes the boundary partition, not the underlying matching.
Let $M,N$ be signed KL interfaces. For a loop-free matching
$g:M\to N$, write $[g]$ for its carrier basis vector,
$\tau_g:N_{M,N}\xrightarrow{\cong}P_{M,N}$ for its boundary bijection,
and $\underline{\tau_g}$ for the corresponding permutation matrix. For
a finite set $S$, let $\CC[S]$ be the Hilbert space with orthonormal
basis $S$. Thus
\[
 F=\sum_g c_g[g]:M\longrightarrow N
\]
has boundary matrix
\[
 \mathsf B_{M,N}(F)
 :=\sum_g c_g\,\underline{\tau_g}:
 \CC[N_{M,N}]\longrightarrow\CC[P_{M,N}],
\]
where the coefficients already contain the evaluated closed-loop
factors. Essential unitarity requires this operator to be two-sided
unitary. It is a boundary condition, not the assertion
that $F$ is a dagger-unitary morphism of $\Perm(\CC)$; evaluation is
the basic example separating the two notions.

Finite-family indices introduce one further coordinate. For a positive
qudit register carrying an MLL payload, the full matrix keeps the qudit
address separate from the KL-port matching. In a fixed-width sector,
every payload row has the same number \(r\) of KL ports; \(i,j\) are
qudit addresses and \(g\) ranges over \(r\)-port matchings. The matrix
has the form
\[
 \sum_{j,i,g}c_{ji,g}\,
 \lvert j\rangle\!\langle i\rvert
 \otimes\underline{\tau_g}.
\]
Section~\ref{sec:eu-predicate} develops this qudit--MLL matrix and its
sequent-level version. For a source morphism $h:A\to B$, compact naming
of its carrier image exposes the boundary $(A^*,B)$. The type fixes the
incoming factors $\mathsf A_A^+\otimes\mathsf A_B^-$ and the outgoing
factors $\mathsf A_A^-\otimes\mathsf A_B^+$; un-naming along this
partition produces $M(h)$. Its fixed-width representation turns the
carrier Gram identities into ordinary unitarity of the full numerical
matrix $\mathbf M(h)$.

Propositions~\ref{prop:omg-prime-h-witness}
and~\ref{prop:QC-F2-witness} motivate the qudit-control discipline of
Section~\ref{sec:QC}. Section~\ref{sec:calculus} proves it essentially
unitary by reducing mixed-intermediate composition to pure-intermediate
composition.\par\smallskip
\emph{Throughout, the test is the same: retain both the qudit address
and the polarized KL ports, form the full boundary matrix, and ask
whether it is unitary.}

\section{The Polarized Carrier Category \texorpdfstring{\(\Perm(\CC)\)}{Perm(C)}}
\label{sec:structural-core}

Three names are reserved throughout: \emph{source} is $\QC$,
\emph{carrier} is $\Perm(\CC)$, \emph{precursor} is $\mathrm{Cop}$.
Put $\Bbbk:=\CC[\delta]$, with $\delta$ a formal indeterminate
recording closed execution cycles.  We construct the polynomial
matrix/biproduct envelope $\Perm(\Bbbk)$ in three steps --- the Kelly--Laplaza
base $\mathbf{KL}$~\cite{AbramskyAS,coherence} (ports), its free
finite-coproduct completion $\mathrm{Cop}$ (finite-family indexing,
with the groupoid $\mathbf{FinBij}$ of finite sets and bijections
governing structural reindexings), and the
full $\Bbbk$-linear matrix/biproduct completion (amplitudes) --- and
then specialize to
the carrier $\Perm(\CC)$, on which essential unitarity will live,
with loop-free KL matrix units as its basis morphisms.
\emph{Notation:} $\mathbf{KL}$ and $\Perm(\CC)$ carry a single
tensor; when a formula tracks source polarity we write
$X\parr Y:=X\otimes Y$ and $f\parr g:=f\otimes g$, the
identification fixed by the De Morgan coherences.

%-----------------------------------------------------------------------------
\subsection{Kelly--Laplaza base and polarized interfaces}
\label{subsec:kl-polarized}
%-----------------------------------------------------------------------------

The free dagger compact closed category on one generating object is
described concretely by Kelly and
Laplaza~\cite{AbramskyAS,coherence}.  The generator produces the
atomic positive interface $(1,0)$; duality gives $(0,1)$; tensor
builds all signed interfaces
$(m,n)=(1,0)^{\otimes m}\otimes(0,1)^{\otimes n}$.  Thus an object
of the Kelly--Laplaza base $\mathbf{KL}$ is a signed finite set
$M=(M^+,M^-)$, and a morphism $f:M\to N$ is a polarized bijection
$f^\sharp:M^+\sqcup N^-\cong N^+\sqcup M^-$ equipped with a loop
count $\kappa\in\NN$.

Composition is path execution with cycle formation: gluing along the
shared boundary decomposes each connected component into either a
path, contributing to the composite wiring, or a closed cycle; if
$\ell(g,f)$ denotes the number of new cycles, then
\[
\kappa(g\circ f) \;=\; \kappa(f)+\kappa(g)+\ell(g,f).
\]
Associativity and the correctness of the loop accounting follow from
Kelly--Laplaza coherence~\cite{AbramskyAS,coherence}.  The dagger is
inverse wiring; duality is polarity reversal, $M^*:=(M^-,M^+)$; the
tensor unit is $\Unit:=(\varnothing,\varnothing)$, the empty
interface, arising as the nullary tensor.

%-----------------------------------------------------------------------------
\subsection{The finite-coproduct precursor
\texorpdfstring{$\mathrm{Cop}$}{Cop}}
\label{subsec:ambient-coproducts}
%-----------------------------------------------------------------------------

We lift from individual interfaces to finite indexed families,
obtaining genuine finite coproducts.

\begin{definition}[Objects in $\oplus/\otimes$ normal form]
\label{def:distributed-objects}
An object in $\oplus/\otimes$ \emph{normal form} is a finite indexed
family
\[
A \;=\; \bigoplus_{i\in I} M_i,
\qquad M_i\in\mathbf{KL},
\]
with $I$ possibly empty (the empty family is the initial
object~$0$).  In family normal form the tensor unit is the singleton
family containing the empty KL interface~$\Unit$; it is distinct from
the empty-family zero object~$0$.  Tensor and objectwise polarity
reversal of normal-form objects are
\[
\Bigl(\bigoplus_{i\in I}M_i\Bigr)\otimes
\Bigl(\bigoplus_{j\in J}N_j\Bigr)
:=
\bigoplus_{(i,j)\in I\times J}(M_i\otimes N_j),
\qquad
\Bigl(\bigoplus_{i\in I}M_i\Bigr)^{*}
:=\bigoplus_{i\in I}M_i^{*},
\]
where at the port level
$(M^+,M^-)\otimes(N^+,N^-)=(M^+\sqcup N^+,\,M^-\sqcup N^-)$.
\end{definition}

At the index level, $\oplus$ is the $+$ of $\mathbf{FinBij}$
(concatenation of families) and $\otimes$ is its $\times$ (bilinear
distribution). For $A=\bigoplus_{i\in I}M_i$ and
$\epsilon\in\{+,-\}$, put
\[
 \operatorname{Ports}^{\epsilon}(A)
 :=\bigsqcup_{i\in I}M_i^{\epsilon},
 \qquad
 \Pos(A):=\operatorname{Ports}^{+}(A),
 \qquad
 \Neg(A):=\operatorname{Ports}^{-}(A).
\]
For a single interface,
$\operatorname{Ports}^{\epsilon}(M)=M^\epsilon$, extending the earlier
$\Pos/\Neg$ convention.

Every object of $\mathrm{Cop}$ and of its full $\Bbbk$-linear
matrix/biproduct completion is
in normal form, so $\otimes$ distributes over $\oplus$ by construction.
This indexing discipline prevents distributivity from being read as
port copying; see Remark~\ref{rem:normal-form-load-bearing}.
\begin{definition}[The finite-coproduct precursor \(\mathrm{Cop}\)]
\label{def:cop-pm}
$\mathrm{Cop}:=\mathrm{Cop}(\mathbf{KL})$ has as objects the
normal-form objects of Definition~\ref{def:distributed-objects}, and
as morphisms $f:A\to B$, for $A=\bigoplus_{i\in I}M_i$ and
$B=\bigoplus_{j\in J}N_j$, pairs
$f=\bigl(\varphi,(f_i)_{i\in I}\bigr)$ with
\begin{enumerate}[label=\textup{(\roman*)},nosep]
\item $\varphi:I\to J$ a function;
\item $f_i:M_i\to N_{\varphi(i)}$ a Kelly--Laplaza morphism, for
each $i\in I$.
\end{enumerate}
Composition is componentwise:
$\bigl(\psi,(g_j)\bigr)\circ\bigl(\varphi,(f_i)\bigr)
 :=\bigl(\psi\varphi,\,(g_{\varphi(i)}\circ f_i)\bigr)$.
\end{definition}

Equivalently, $f$ may be displayed as a $J\times I$ incidence array
with exactly one occupied KL entry in each column; the remaining
positions are empty --- not zero morphisms, which $\mathrm{Cop}$
does not have.  The canonical injections and cotuplings exhibit
$\mathrm{Cop}$ as the free finite-coproduct completion
of~$\mathbf{KL}$.

%-----------------------------------------------------------------------------
\subsection{The full \texorpdfstring{$\Bbbk$}{k}-linear matrix/biproduct completion \texorpdfstring{$\Perm(\Bbbk)$}{Perm(k)}}
\label{subsec:k-linear}
%-----------------------------------------------------------------------------

Equip $\Bbbk=\CC[\delta]$ with the involution that conjugates complex
coefficients and fixes $\delta$. Definition~\ref{def:perm-complex}
below specializes $\delta\mapsto d$ along $\CC[\delta]\to\CC$, for a
chosen $d\in\RR_{>0}\setminus\{1\}$.\footnote{The choice $d\neq1$ makes
closed trace loops detectable on a nonempty residual boundary.
Lemma~\ref{lem:final-multiplicative-base} shows that every correct
unit-free no-Mix multiplicative component used by the source calculus
is a coefficient-one matching with no detached execution component,
and hence contributes no power of $d$.}

\begin{definition}[Full $\Bbbk$-linear matrix/biproduct completion]
\label{def:perm-C}
Write $\mathbf{KL}^{0}(M,N)$ for the loop-free KL morphisms and $[f]$
for the basis vector of $f\in\mathbf{KL}^{0}(M,N)$. For
$A=\bigoplus_{i\in I}M_i$ and $B=\bigoplus_{j\in J}N_j$, set
\[
 \Perm(\Bbbk)(A,B)
 :=\mathrm{Mat}_{J\times I}
 \bigl(\Bbbk[\mathbf{KL}^{0}(M_i,N_j)]\bigr).
\]
Each entry is therefore a finite $\Bbbk$-linear combination of
loop-free matchings. For $f\in\mathbf{KL}^{0}(M,N)$ and
$g\in\mathbf{KL}^{0}(N,P)$, let $g\star f$ be the external-path
matching of their glued graph and let $\ell(g,f)$ be its number of
closed cycles. Define
\begin{equation}
 \label{eq:delta-twisted-basis-composition}
 [g]\circ[f]:=\delta^{\ell(g,f)}[g\star f],
\end{equation}
Extend this law $\Bbbk$-bilinearly to
$\mathcal L:=\Bbbk[\mathbf{KL}^{0}]$. Composition in $\Perm(\Bbbk)$
is matrix multiplication over $\mathcal L$. Dagger is conjugate transpose with
$[f]^\dagger=[f^\dagger]$ and $\bar\delta=\delta$. The linearization
of a general KL morphism is
\[
 \mathbf{KL}\longrightarrow\mathcal L,
 \qquad
 (f^\sharp,\kappa)\longmapsto\delta^{\kappa}[f^\sharp].
\]
\end{definition}

An occupied Kelly--Laplaza component is \emph{coefficient-one} if it is
a single basis matching with scalar \(1\). On pure-axis summands this
matching is a Kelly--Laplaza basis isomorphism. A matrix is
coefficient-one if every occupied entry is coefficient-one.

Brackets are suppressed when no confusion can result.  Explicitly,
if
\[
 F_{ji}=\sum_r\alpha_{ji,r}[f_{ji,r}],
 \qquad
 G_{kj}=\sum_s\beta_{kj,s}[g_{kj,s}],
\]
then matrix composition reads
\begin{equation}
 \label{eq:delta-twisted-matrix-composition}
 (G\circ F)_{ki}
 =\sum_{j,r,s}\beta_{kj,s}\alpha_{ji,r}
   \delta^{\ell(g_{kj,s},f_{ji,r})}
   [g_{kj,s}\star f_{ji,r}].
\end{equation}
Thus the image of $(\tau,\kappa)$ is $\delta^{\kappa}[\tau]$: the
loop count contributes to the coefficient, not to the basis label.

$\mathrm{Cop}\hookrightarrow\Perm(\Bbbk)$ on the same objects: its
image is the subcategory with exactly one nonzero entry per column,
that entry a single $\mathbf{KL}$ monomial $\delta^{\kappa}[\tau]$
supplied by the data of Definition~\ref{def:cop-pm}.  So
$\mathrm{Cop}$ pins the $\oplus$-indexing, and $\Perm(\Bbbk)$ adds
$\Bbbk$-linear combinations in each entry.  Equivalently, one may
first form finite-family objects and then take full matrix
hom-spaces over $\mathcal L$, recovering the same hom-space: the
coproduct-first view fixes the indexing, the $\mathcal L$-first view
exposes the basis rule~\eqref{eq:delta-twisted-basis-composition}.
Once compact closure has been verified, its canonical trace recovers
the same gluing rule.

\extendedonly{%
For a pair of loop-free KL basis morphisms, the gluing can be computed
by a finite transient-path sum after its closed internal cycles have
been extracted.  No inverse over \(\Bbbk\) is used;
Appendix~\ref{app:structural-proofs} gives the basiswise path-execution
proof.%
}
\begin{proposition}[The loop-erased envelope]
\label{prop:well-defined-assoc}
The rule~\eqref{eq:delta-twisted-basis-composition} makes the
loop-erased $\Bbbk$-linear envelope
$\mathcal L=\Bbbk[\mathbf{KL}^{0}]$ a dagger compact closed category.
Its compact structure supplies the canonical Joyal--Street--Verity
trace, the linearization $\mathbf{KL}\to\mathcal L$ is functorial,
and the matrix composition~\eqref{eq:delta-twisted-matrix-composition}
makes $\Perm(\Bbbk)$ a dagger compact closed category with finite
coproducts.
\end{proposition}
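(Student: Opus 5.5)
The plan is to reduce everything to the Kelly--Laplaza category $\mathbf{KL}$, where associativity and the loop accounting $\kappa(g\circ f)=\kappa(f)+\kappa(g)+\ell(g,f)$ are already available from coherence~\cite{AbramskyAS,coherence}. The key observation is that the linearization $(f^\sharp,\kappa)\mapsto\delta^\kappa[f^\sharp]$ identifies $\mathcal L$ with the quotient of $\Bbbk[\mathbf{KL}]$ by the relations $(f^\sharp,\kappa)=\delta^\kappa(f^\sharp,0)$.

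First I check that the basis rule~\eqref{eq:delta-twisted-basis-composition} is exactly the image of $\mathbf{KL}$-composition under this identification. For loop-free $f,g$, the composite in $\mathbf{KL}$ is $(g\star f,\ell(g,f))$, whose image is $\delta^{\ell(g,f)}[g\star f]$. For general $f=(f^\sharp,\kappa_f)$ and $g=(g^\sharp,\kappa_g)$, the extra cycles of the glued graph depend only on $f^\sharp$ and $g^\sharp$. This gives functoriality of the linearization directly. Associativity in $\mathcal L$ follows by bilinearity: it suffices to check it on basis vectors, and there it is the image of associativity in $\mathbf{KL}$. Concretely, both bracketings yield $[h\star g\star f]$ with exponent $\ell(g,f)+\ell(h,g\star f)=\ell(h,g)+\ell(h\star g,f)$, each side counting the closed cycles of the triple gluing. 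Identities are the loop-free identity wirings, and they create no cycles.

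The dagger is $[f]\mapsto[f^\dagger]$ extended antilinearly. It is contravariant on basis vectors because reversing wires preserves both path structure and cycle count, so $\ell(f^\dagger,g^\dagger)=\ell(g,f)$; since $\bar\delta=\delta$, the coefficients match. The tensor is disjoint union of matchings. Bifunctoriality holds because $\ell$ is additive under disjoint union, so the tensor of two composites picks up $\delta^{\ell_1+\ell_2}$. The unit, associator, symmetry, and the cup and cap are the images of the corresponding $\mathbf{KL}$ wirings. They are loop-free, and every coherence or snake equation holds in $\mathbf{KL}$ with loop count $0$ (the yanking gluings create no cycles), so these equations transfer. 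Dagger compatibility is checked in the same way. The Joyal--Street--Verity trace is then defined from cup and cap. Its value on a basis matching is $\delta^{\#\text{cycles}}[\text{residual paths}]$, which recovers the gluing rule.

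For $\Perm(\Bbbk)$, I take the standard matrix construction over the $\Bbbk$-linear dagger compact category $\mathcal L$. Composition is matrix product with entries composed in $\mathcal L$; it is associative by bilinearity together with associativity in $\mathcal L$. The dagger is the conjugate transpose. The tensor on normal-form objects is given by the Kronecker-product formula indexed by $I\times J$. Duals are taken entrywise, with the unit and counit diagonal in the index. Finite coproducts, indeed biproducts, come from concatenating index sets; the zero object is the empty family.

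The main obstacle I anticipate is associativity of the loop-count exponent, that is, the identity $\ell(g,f)+\ell(h,g\star f)=\ell(h,g)+\ell(h\star g,f)$. I would prove it by a direct graph argument: the union of the three matchings is a graph of maximum degree $2$, so every closed cycle lies in exactly one of the two staged gluings. Alternatively I would cite Kelly--Laplaza coherence, which is exactly the statement that $\mathbf{KL}$ is a category. A secondary subtlety is that the tensor on $\Perm(\Bbbk)$ must be checked as strictly functorial with respect to the chosen index orderings in normal form. I would handle this by fixing the lexicographic identification $(I\times J)\times K\cong I\times(J\times K)$ through $\mathbf{FinBij}$ coherence isomorphisms.
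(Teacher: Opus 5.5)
Your proposal is correct and follows essentially the same route as the paper's proof in Appendix~\ref{app:structural-proofs}. Both arguments check the laws basiswise: associativity via the triple-gluing count $\ell(g,f)+\ell(h,g\star f)=\ell(h,g)+\ell(h\star g,f)$, the dagger via cycle-preserving wire reversal, tensor via additivity of $\ell$ under disjoint union, compact closure via loop-free cups and caps (diagonal on families), functoriality via $L(g)\circ L(f)=L(g\circ_{\mathbf{KL}}f)$, and the trace from cups and caps; the only difference is that you package $\mathcal L$ as a quotient of $\Bbbk[\mathbf{KL}]$ and transfer the laws from $\mathbf{KL}$, whereas the paper verifies them directly on loop-free matchings and checks functoriality of the linearization afterwards.
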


\begin{proof}
Appendix~\ref{app:structural-proofs} gives the direct componentwise
proof of the category laws, compact equations, functoriality of the
linearization, and the basis-level feedback calculation.
\end{proof}

Its tensor is inherited from $\mathbf{KL}$, while its
finite-coproduct indexing is governed by $\mathbf{FinBij}$; every
object remains in the $\oplus/\otimes$ normal form of
Definition~\ref{def:distributed-objects}.

\paragraph*{Biproducts in $\Perm(\Bbbk)$.}
$\Perm(\Bbbk)$ is $\Bbbk$-linear with zero object (the empty
family), so by the standard semiadditivity argument finite
coproducts are biproducts: $A\oplus B$ carries injections $\iota_i$
and projections $\pi_i$ satisfying $\pi_i\circ\iota_j=\id$ for
$i=j$, $\pi_i\circ\iota_j=0$ for $i\neq j$, and
$\sum_i\iota_i\circ\pi_i=\id_{A\oplus B}$.  Biproducts are not
available in $\mathrm{Cop}$, where the one-occupied-entry-per-column
condition rules out product projections.  The source calculus
introduced later does not expose biproduct injections or projections
as source constructors; such maps remain available for
carrier-level matrix arguments in $\Perm(\CC)$.

\begin{remark}[Currying]
\label{rem:curry-invisible}
A morphism $f:A\otimes B\to C$ and its curry
$\tilde f:A\to B^*\otimes C$ share the same underlying wiring:
currying merely repartitions boundary ports.
On basis morphisms this is the identity on polarized bijections;
it extends $\Bbbk$-linearly to all of $\Perm(\Bbbk)$.
\end{remark}

\begin{definition}[$\Perm(\CC)$]
\label{def:perm-complex}
Fix $d\in\RR_{>0}\setminus\{1\}$ and let
$\operatorname{ev}_d:\CC[\delta]\to\CC$ be evaluation at
$\delta=d$.  Then $\Perm(\CC)$ is the coefficientwise specialization
of $\Perm(\Bbbk)$ along $\operatorname{ev}_d$: it has the same
objects, and its hom-spaces are obtained by applying
$\operatorname{ev}_d$ to every coefficient.
\end{definition}

Because $d$ is real, the evaluation preserves the involution;
consequently $\Perm(\CC)$ is a $\CC$-linear dagger compact closed
category with tensor unit $\Unit$ and finite biproducts $\oplus$.
By abuse of notation we continue to write $\delta$ for the chosen
scalar~$d$.  For a finite set $S$, write $\CC[S]$ for the
finite-dimensional Hilbert space with orthonormal basis $S$; this
set-linearization notation is distinct from the polynomial ring
$\CC[\delta]$.

Every morphism $F:A\to B$ of $\Perm(\CC)$, with
$A=\bigoplus_{i\in I}M_i$ and $B=\bigoplus_{j\in J}N_j$, is by
definition a $J\times I$ matrix
\[
  F=(F_{ji}),\qquad F_{ji}\in\CC[\mathbf{KL}^{0}(M_i,N_j)].
\]
Writing each entry uniquely as a finite collected sum of distinct
loop-free KL matchings,
\[
  F_{ji}=\sum_r \alpha_{ji,r}\,g_{ji,r},
  \qquad g_{ji,r}\in\mathbf{KL}^{0}(M_i,N_j),
\]
gives the \emph{collected KL matrix form} of $F$; each coefficient
$\alpha_{ji,r}$ already contains the evaluated powers of $d$
produced by closed cycles, so loop counts are not additional basis
data.

\begin{definition}[First-order objects]
\label{def:first-order}
An object $A=\bigoplus_{i\in I}M_i$ is \emph{first-order} if all its
summands are pure-sign of the same polarity: $M_i=(m_i,0)$ for all
$i$, or $M_i=(0,m_i)$ for all $i$.  A morphism is
\emph{first-order} if its domain and codomain are first-order of the
same polarity.  Write $\mathbf{H}_n := (1,0)^{\oplus n}$ for the
first-order object of \emph{size}~$n$.
\end{definition}

Fix either polarity.  In the positive sector there are no negative
ports (the negative sector is its dual/opposite), so no new closed cycles
form under composition: the $\delta$-twist disappears, and
composition is the usual matrix-sum formula with entry
multiplication given by composition of loop-free KL matchings.  For
the one-port objects $\mathbf H_n$ the matching spaces are
one-dimensional, and this reduces to ordinary complex matrix
multiplication.

Section~\ref{sec:boundary-transport} first assigns boundary matrices
to morphisms between individual KL interfaces, then incorporates
finite-family qudit addresses in fixed-width qudit--MLL matrices and
uniform sequent matrices. The $\oplus/\otimes$ normal form and KL
matrix-unit basis make each stage unambiguous.

\section{Kelly--Laplaza Boundaries and Essential Unitarity}
\label{sec:boundary-transport}

Boundary transport reads a Kelly--Laplaza morphism \(M\to N\) on the
two polarities of its compact-name interface \(M^*\otimes N\). A basis
morphism contributes the permutation matrix of its axiom-link matching;
linear combinations contribute the corresponding \(\CC\)-linear
combination, with evaluated loop factors already present in the
coefficients. The type repartitions the boundary; the wiring is
unchanged. On the one-port registers \(\mathbf H_n\), this
recovers ordinary complex matrices. Qudits tensored with multiplicative
payloads will give the full matrices used later.

%-----------------------------------------------------------------------------
\subsection{KL boundary matrices}
\label{subsec:boundary-ports}
%-----------------------------------------------------------------------------

\begin{definition}[KL boundary matching]
\label{def:boundary-objects}
\label{def:boundary-bijection}
For signed KL interfaces \(M,N\), put
\[
 P_{M,N}:=\Neg(M)\sqcup\Pos(N),
 \qquad
 N_{M,N}:=\Pos(M)\sqcup\Neg(N).
\]
If \(g:M\to N\) has polarized wiring
\[
 g^\sharp:N_{M,N}\xrightarrow{\cong}
 \Pos(N)\sqcup\Neg(M),
\]
let \(\chi_{M,N}\) be the canonical swap onto \(P_{M,N}\), and set
\[
 \tau_g:=\chi_{M,N}g^\sharp:
 N_{M,N}\xrightarrow{\cong}P_{M,N}.
\]
Write
\(\underline{\tau_g}:\CC[N_{M,N}]\to\CC[P_{M,N}]\)
for its permutation matrix.
\end{definition}

We read the boundary covariantly, from \(\CC[N_{M,N}]\) to
\(\CC[P_{M,N}]\): \(g^\sharp\) is used without inversion, and its
codomain is placed in the displayed \(P\)-order.

\begin{remark}[Axiom links and the compact-name boundary]
\label{rem:axiom-link-extraction}
The map \(\tau_g\) is the same axiom-link wiring reorganized around
\(M^*\otimes N\). It is therefore square even when \(M\) and \(N\)
have different shapes. In particular, evaluation has differently
shaped source and target but a boundary consisting of straight wires.
Currying likewise preserves each component matching under the canonical
identification of compact-name ports
(Remark~\ref{rem:curry-invisible}).
\end{remark}

Every \(F:M\to N\) in \(\Perm(\CC)\) has a unique KL-basis
expansion \(F=\sum_g c_g[g]\). Its coefficients already include all
evaluated loop factors.

\begin{definition}[KL boundary matrix]
\label{def:KL-boundary-matrix}
Define
\[
 \mathsf B_{M,N}(F)
 :=\sum_g c_g\,\underline{\tau_g}:
 \CC[N_{M,N}]\longrightarrow\CC[P_{M,N}].
\]
\end{definition}

\begin{definition}[Essential unitarity]
\label{def:EU}
A morphism \(F:M\to N\) is \emph{essentially unitary} if
\[
 \mathsf B_{M,N}(F)^\dagger\mathsf B_{M,N}(F)
 =I_{\CC[N_{M,N}]},
 \qquad
 \mathsf B_{M,N}(F)\mathsf B_{M,N}(F)^\dagger
 =I_{\CC[P_{M,N}]}.
\]
Write \(\operatorname{EU}(F)\) for this condition.
\end{definition}

Every loop-free KL basis morphism is EU. A component with loop-free
part \(g\) and loop count \(\kappa\) has boundary matrix
\(d^\kappa\underline{\tau_g}\); on a nonempty boundary it is EU exactly
when \(\kappa=0\). The empty boundary cannot detect scalars.

This boundary condition is distinct from carrier dagger-unitarity. For
nonempty \(M\), the evaluation component
\[
 \ev_{M,N}:(M^*\otimes N)\otimes M\longrightarrow N
\]
is EU because its boundary matrix is a permutation, but
\[
 \ev_{M,N}\ev_{M,N}^\dagger
 =d^{|M^+|+|M^-|}\id_N\ne\id_N
\]
in \(\Perm(\CC)\).

The boundary representation also need not be injective on linear
combinations. For \(R_3:=(3,0)\), put
\[
 K:=\sum_{\sigma\in S_3}\operatorname{sgn}(\sigma)[\sigma].
\]
Then \(K\ne0\), whereas
\(\mathsf B_{R_3,R_3}(K)=0\): at each matrix entry, one even and one
odd permutation cancel. Hence \(\id_{R_3}+\lambda K\) is EU for every
\(\lambda\in\CC\). Since \(K^\dagger=K\) and \(K^2=6K\),
\[
 (\id_{R_3}+\lambda K)^\dagger(\id_{R_3}+\lambda K)
 =\id_{R_3}
  +(\lambda+\bar\lambda+6|\lambda|^2)K,
\]
so it is generally not carrier dagger-unitary. Thus EU records the
induced action on the polarized port space, not the entire carrier
element.

%-----------------------------------------------------------------------------
\subsection{Execution and loop coefficients}
\label{subsec:composition-law}
%-----------------------------------------------------------------------------

Because the boundary retains external paths and absorbs cycles into
coefficients, composition is governed by KL execution.

\begin{proposition}[KL execution]
\label{prop:boundary-execution-gluing}
For morphisms \(f:M\to N\) and \(g:N\to P\) in \(\mathbf{KL}\),
the external paths
determine \(\tau_{g\circ f}\), while
\[
 \kappa(g\circ f)
 =\kappa(f)+\kappa(g)+\ell(g,f),
\]
where \(\ell(g,f)\) is the number of new internal cycles.
\end{proposition}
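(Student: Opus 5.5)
We prove Proposition~\ref{prop:boundary-execution-gluing} by forming the glued graph of \(f\) and \(g\) along the shared interface \(N\) and reading off its connected components. Model \(f\) as a graph whose vertices are the ports of \(M\) and \(N\). It has one edge for each pair matched by \(f^\sharp:M^+\sqcup N^-\cong N^+\sqcup M^-\), together with \(\kappa(f)\) detached closed cycles. Model \(g\) in the same way on the ports of \(N\) and \(P\).

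The glued graph \(\Gamma\) is their union with the ports of \(N\) identified. Each port of \(N\) lies on exactly one \(f\)-edge and exactly one \(g\)-edge, so it has degree two in \(\Gamma\). Each external port of \(M\) or \(P\) has degree one.

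The first step is to classify components. A finite graph with all degrees at most two, in which the degree-one vertices are exactly the external ports, decomposes into two kinds of component. There are paths whose two endpoints are external. There are cycles that consist solely of \(N\)-ports, or that are inherited detached loops.

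The second step checks that the external paths define a polarized bijection \(M^+\sqcup P^-\cong P^+\sqcup M^-\). The endpoint pairing is a perfect matching of the external ports. The polarity check is the one place needing care. Orient each edge from its input-polarity end (\(M^+\), \(N^-\) for \(f\); \(N^+\), \(P^-\) for \(g\)) to its output end. An \(f\)-edge entering a port of \(N^+\) is continued by the \(g\)-edge leaving it as input. Symmetrically, a \(g\)-edge entering \(N^-\) is continued by an \(f\)-edge. So orientations propagate consistently along every path. Hence every path starts at an input port in \(M^+\sqcup P^-\) and ends at an output port in \(P^+\sqcup M^-\).

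This gives \((g\circ f)^\sharp\). Applying the fixed swap \(\chi_{M,P}\) then yields \(\tau_{g\circ f}\) as a function of the external paths alone.

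The third step counts cycles. The cycles of \(\Gamma\) are the \(\kappa(f)+\kappa(g)\) detached loops already present, plus the new cycles lying entirely in \(N\); by definition there are \(\ell(g,f)\) of the latter. This gives the stated formula for \(\kappa(g\circ f)\).

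Well-definedness under the choice of port labels, and agreement with composition via the compact structure (unit/counit yanking), follow from Kelly--Laplaza coherence~\cite{coherence,AbramskyAS}. Proposition~\ref{prop:well-defined-assoc} already fixes this composition law.

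The main obstacle is the orientation-consistency argument in the second step. Making it precise means checking every polarity case at \(N\), including paths that bounce between \(N^+\) and \(N^-\) several times. Finiteness and the degree-two condition guarantee that such paths terminate or close up.
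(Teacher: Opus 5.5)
Your proposal is correct and is essentially the paper's proof. The paper follows the alternating path $x_{k+1}=\tau_f(x_k)$ or $\tau_g(x_k)$ from each port of $\Pos(M)\sqcup\Neg(P)$ until it reaches $\Neg(M)\sqcup\Pos(P)$, which is your oriented-edge propagation, and it counts the remaining closed components on the identified $N$-ports as $\ell(g,f)$, citing the paths-and-cycles decomposition of \cite[Proposition~1]{AbramskyAS}; your degree-and-orientation argument also makes explicit the termination of these paths, which the paper's iteration leaves implicit.
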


This is the paths-and-cycles decomposition of
\cite[Proposition~1]{AbramskyAS}; the port calculation is in
Appendix~\ref{app:boundary-proofs}. For mixed interfaces it is a
geometric execution law, not multiplication of the square boundary
matrices: the output partition of \(f\) need not be the input partition
of \(g\).

\begin{lemma}[Dagger]
\label{lem:KL-boundary-dagger}
For every \(F:M\to N\),
\[
 \operatorname{EU}(F^\dagger)\iff\operatorname{EU}(F).
\]
\end{lemma}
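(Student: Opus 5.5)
The plan is to show that the boundary matrix of \(F^\dagger\) is, after a canonical relabelling of ports, the conjugate transpose of the boundary matrix of \(F\). Both Gram identities then transfer directly.

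Write \(F=\sum_g c_g[g]\) for the unique KL-basis expansion of \(F:M\to N\). By Definition~\ref{def:perm-C}, \(F^\dagger=\sum_g \bar c_g[g^\dagger]\), where \(g^\dagger:N\to M\) is the inverse wiring. The map \(g\mapsto g^\dagger\) is a bijection on loop-free matchings, so this is again the collected KL-basis expansion, with coefficients \(\bar c_g\).

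Next compare the boundaries. For \(F^\dagger:N\to M\),
\[
 N_{N,M}=\Pos(N)\sqcup\Neg(M)\cong P_{M,N},
 \qquad
 P_{N,M}=\Neg(N)\sqcup\Pos(M)\cong N_{M,N},
\]
via the canonical swaps. The key basiswise claim is
\[
 \tau_{g^\dagger}=\tau_g^{-1}
\]
under these identifications. Indeed, \((g^\dagger)^\sharp=(g^\sharp)^{-1}\) as polarized bijections \(\Pos(N)\sqcup\Neg(M)\to\Pos(M)\sqcup\Neg(N)\), and \(\chi_{N,M}\) and \(\chi_{M,N}\) only reorder disjoint summands. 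Passing to permutation matrices gives \(\underline{\tau_{g^\dagger}}=\underline{\tau_g}^{\,T}=\underline{\tau_g}^{\,\dagger}\), with the relabelling permutations \(\Pi,\Pi'\) conjugating on both sides. Summing linearly,
\[
 \mathsf B_{N,M}(F^\dagger)=\Pi'\,\mathsf B_{M,N}(F)^\dagger\,\Pi,
\]
where \(\Pi,\Pi'\) are unitary permutation matrices.

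Then \(\mathsf B_{N,M}(F^\dagger)^\dagger\mathsf B_{N,M}(F^\dagger)=\Pi^\dagger\,\mathsf B\mathsf B^\dagger\,\Pi\) and \(\mathsf B_{N,M}(F^\dagger)\mathsf B_{N,M}(F^\dagger)^\dagger=\Pi'\,\mathsf B^\dagger\mathsf B\,\Pi'^\dagger\), where \(\mathsf B=\mathsf B_{M,N}(F)\). So each Gram identity for \(F^\dagger\) is equivalent to the other Gram identity for \(F\). This gives \(\operatorname{EU}(F^\dagger)\Rightarrow\operatorname{EU}(F)\) and the converse. Alternatively, the converse follows from the forward direction applied to \(F^\dagger\), using \(F^{\dagger\dagger}=F\).

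The main obstacle is the bookkeeping in the basiswise identity \(\tau_{g^\dagger}=\tau_g^{-1}\). One must check that the two canonical swaps \(\chi\) compose with the inverse wiring to give exactly the inverse bijection, rather than the inverse up to some stray permutation. I would settle this by tracking a single port of each of the four kinds, \(\Pos(M),\Neg(M),\Pos(N),\Neg(N)\), through both definitions. Any residual permutation would in any case be unitary and harmless for the conclusion.
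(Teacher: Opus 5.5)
Your proposal is correct and follows the paper's own argument: dagger inverts each basis matching and swaps the two boundary orders by fixed permutations, so $\mathsf B_{N,M}(F^\dagger)$ is $\mathsf B_{M,N}(F)^\dagger$ conjugated by unitary permutation matrices, and the two Gram identities trade places. Your explicit identity $\tau_{g^\dagger}=\chi_{N,M}\,\tau_g^{-1}\,\chi_{M,N}$, with swaps independent of $g$, is exactly the bookkeeping the paper's two-line proof leaves implicit.
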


\begin{proof}
On basis morphisms, dagger inverts the matching and exchanges the two
boundary orders. These exchanges are unitary permutation matrices;
conjugating coefficients gives the two Gram identities.
\end{proof}

\begin{remark}[Cup--cap loops break closure]
\label{rem:eta-epsilon-EU}
For nonempty \(M\), the unit \(\eta_M\) and the composable counit
\(\epsilon_{M^*}\) are loop-free and hence EU, whereas
\[
 s_M:=\epsilon_{M^*}\eta_M
 =d^{|M^+|+|M^-|}\id_\Unit.
\]
For every nonempty \(X\),
\[
 (\epsilon_{M^*}\otimes\id_X)
 (\eta_M\otimes\id_X)
 =s_M\otimes\id_X
\]
has boundary matrix
\(d^{|M^+|+|M^-|}\mathsf B_{X,X}(\id_X)\).
Thus EU is not closed under composition or tensor in \(\Perm(\CC)\).
\end{remark}

%=============================================================================
\section{Qudits, MLL Payloads, and Full Matrices}
\label{sec:eu-predicate}
%=============================================================================

On a pure-sign intermediate, KL execution becomes ordinary matrix
composition. Naming and polarized un-naming place a qudit with an
arbitrary multiplicative payload on such pure-positive axes; this
assignment is not functorial for arbitrary composition through a mixed
payload. These two steps give the full matrix used by the source
calculus.

%-----------------------------------------------------------------------------
\subsection{Positive qudit matrices}
%-----------------------------------------------------------------------------

For \(m,n\ge1\), write \([n]=\{1,\ldots,n\}\), and similarly
for \([m]\). Recall that
\(\mathbf H_n=(1,0)^{\oplus n}\) and that \(\CC[S]\) has orthonormal
basis \(S\). A morphism
\(U:\mathbf H_n\to\mathbf H_m\) is therefore its ordinary coefficient
matrix
\[
 U=\sum_{j,i}U_{ji}e_{ji}:\CC[n]\longrightarrow\CC[m],
 \qquad
 e_{ji}:=\lvert j\rangle\!\langle i\rvert.
\]
Carrier dagger-unitarity and ordinary matrix unitarity coincide here.

The same formula retains a nontrivial KL matching at each matrix entry.
Let \(R_r:=(r,0)\) for \(r\ge0\). Fix \(r\ge1\), and write
\[
 G_{ji}=\sum_{g\in\mathbf{KL}^{0}(R_r,R_r)}c_{ji,g}[g]
\]
for \(G:R_r^{\oplus n}\to R_r^{\oplus m}\).

\begin{definition}[Fixed-width qudit--KL matrix]
\label{def:first-order-boundary-matrix}
Define
\[
 \mathsf B_r(G)
 :=\sum_{j,i,g}c_{ji,g}\,
 e_{ji}\otimes\underline{\tau_g}:
 \CC[n]\otimes\CC[r]\longrightarrow
 \CC[m]\otimes\CC[r].
\]
\end{definition}

The two coordinates have different roles: \(i,j\) are qudit indices,
while \(\underline{\tau_g}\) acts on the \(r\) KL ports.

\begin{remark}[Distributivity and family indices]
\label{rem:normal-form-load-bearing}
Within one pure-sign sector, normal form keeps biproduct-family indices
separate from KL ports. Carrier distributivity only reindexes the
families; its KL components are identity matchings, so its matrix is a
unitary permutation. No port is copied.
\end{remark}

\begin{proposition}[Fixed-width representation]
\label{prop:first-order-representation}
For composable \(G,H\) in the fixed \(r\)-sector,
\[
 \mathsf B_r(\id)=I,\qquad
 \mathsf B_r(HG)=\mathsf B_r(H)\mathsf B_r(G),\qquad
 \mathsf B_r(G^\dagger)=\mathsf B_r(G)^\dagger.
\]
\end{proposition}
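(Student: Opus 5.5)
The plan is to reduce everything to the basis level, where the qudit--KL matrix of a single matrix unit is a Kronecker product, and then extend bilinearly. By Definition~\ref{def:perm-C} and Definition~\ref{def:perm-complex}, a morphism \(G:R_r^{\oplus n}\to R_r^{\oplus m}\) is a sum of terms \(c\,e_{ji}\otimes[g]\), meaning the matrix with single occupied entry \((j,i)\) equal to \(c[g]\). The map \(\mathsf B_r\) is \(\CC\)-linear by construction. Composition in \(\Perm(\CC)\) is matrix multiplication over \(\mathcal L\), hence bilinear, and dagger is conjugate-linear. It therefore suffices to verify the three identities on such elementary terms.

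\textbf{Composition.} The key point is that \(R_r=(r,0)\) is pure positive, so gluing two matchings \(g,h\in\mathbf{KL}^{0}(R_r,R_r)\) along the shared \(R_r\) creates no closed cycles: every glued path runs from a positive input port to a positive output port, and a cycle would need a negative port. Thus \(\ell(h,g)=0\), and by~\eqref{eq:delta-twisted-basis-composition} we have \([h]\circ[g]=[h\star g]\), with no power of \(d\). By Proposition~\ref{prop:boundary-execution-gluing}, \(h\star g\) is just the composite bijection. I then check the boundary convention. Here \(N_{R_r,R_r}=\Pos(R_r)\) is the input copy of the ports, \(P_{R_r,R_r}=\Pos(R_r)\) is the output copy, and \(\chi\) is the identity reindexing. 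So \(\tau_g=g^\sharp\) as a permutation of \([r]\), and \(\tau_{h\star g}=\tau_h\tau_g\), giving \(\underline{\tau_{h\star g}}=\underline{\tau_h}\,\underline{\tau_g}\). Now \((e_{kj}\otimes[h])(e_{j'i}\otimes[g])=\delta_{jj'}\,e_{ki}\otimes[h\star g]\). Its image is \(\delta_{jj'}e_{ki}\otimes\underline{\tau_h}\,\underline{\tau_g}=(e_{kj}\otimes\underline{\tau_h})(e_{j'i}\otimes\underline{\tau_g})\) by the mixed-product rule for Kronecker products. Bilinearity then gives \(\mathsf B_r(HG)=\mathsf B_r(H)\mathsf B_r(G)\).

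\textbf{Identity and dagger.} The identity of \(R_r^{\oplus n}\) is \(\sum_i e_{ii}\otimes[\id]\), and \(\tau_{\id}\) is the identity permutation, so \(\mathsf B_r(\id)=I_n\otimes I_r=I\). For the dagger, \((c\,e_{ji}\otimes[g])^\dagger=\bar c\,e_{ij}\otimes[g^\dagger]\) by Definition~\ref{def:perm-C}. Since \(g^\dagger\) is the inverse wiring, \(\tau_{g^\dagger}=\tau_g^{-1}\), whose permutation matrix is \(\underline{\tau_g}^{\,T}=\underline{\tau_g}^{\dagger}\). Hence the image is \(\bar c\,e_{ij}\otimes\underline{\tau_g}^\dagger=(c\,e_{ji}\otimes\underline{\tau_g})^\dagger\). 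Conjugate-linearity finishes the argument.

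\textbf{Main obstacle.} The step needing care is the boundary-orientation bookkeeping. I must confirm that with the covariant reading of Definition~\ref{def:boundary-bijection}, the ports of \(R_r\) serving as \(N\) for the second factor and as \(P\) for the first are identified so that \(\tau\) composes in the same order as morphisms, not reversed. I will settle this by fixing the identification \(\Pos(R_r)\cong[r]\) once for both the domain and codomain copies, so that each \(\tau_g\) is a permutation of \([r]\). Then \(\tau_{h\star g}=\tau_h\circ\tau_g\) follows directly from path-following in Proposition~\ref{prop:boundary-execution-gluing}. The absence of loops, which is what makes execution agree with matrix multiplication, is the fact about pure-sign sectors already noted after Definition~\ref{def:first-order}.
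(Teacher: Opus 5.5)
Your proposal is correct and follows essentially the same route as the paper: you observe that the pure-positive intermediate \(R_r\) creates no cycles under execution, so KL composition reduces to composition of permutations and \(\mathsf B_r\) turns into block (Kronecker) matrix multiplication. You spell out in more detail the reduction to elementary terms \(c\,e_{ji}\otimes[g]\), the orientation convention \(\tau_{h\star g}=\tau_h\tau_g\), and the identity and dagger cases, all of which the paper's brief proof leaves implicit.
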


\begin{proof}
A pure-positive intermediate has no opposite-polarity ports, so
execution creates no cycle. KL composition is ordinary permutation
composition, and the result is block-matrix multiplication.
\end{proof}

For \(r=1\), the matching space is one-dimensional and
\(\mathsf B_1(U)=U\).

%-----------------------------------------------------------------------------
\subsection{Qudits with MLL payloads}
%-----------------------------------------------------------------------------

For \(f:U\to V\) and \(a:\Unit\to U^*\otimes V\), use
\[
 \operatorname{name}(f):=(\id_{U^*}\otimes f)\eta_U,
 \qquad
 \operatorname{unname}(a):=
 (\epsilon_U\otimes\id_V)(\id_U\otimes a),
\]
where \(\eta_U:\Unit\to U^*\otimes U\) and
\(\epsilon_U:U\otimes U^*\to\Unit\). Fixed associativity and unit
coherence are suppressed; name and unname are inverse.

Let \(M=(M^+,M^-)\) and \(N=(N^+,N^-)\) be carrier KL
interfaces. For \(Y\in\{M,N\}\), put
\[
 Y_+:=(Y^+,\varnothing),
 \qquad
 Y_-:=(Y^-,\varnothing).
\]
These pure-positive interfaces give the canonical coefficient-one
factorization
\[
 Y\cong Y_-^*\otimes Y_+.
\]
Consequently, for
\[
 A:=\mathbf H_n\otimes M,
 \qquad
 B:=\mathbf H_m\otimes N,
\]
there are canonical displays
\[
 A\cong M_-^*\otimes(\mathbf H_n\otimes M_+),
 \qquad
 B\cong N_-^*\otimes(\mathbf H_m\otimes N_+).
\]

These displays explain the full qudit--MLL matrix. They move the qudit
index to the positive axis before the compact name is uncurried.

Let
\[
 \zeta_{M,N}^{n,m}:
 A^*\otimes B\xrightarrow{\cong}
 \bigl(\mathbf H_n\otimes(M_+\otimes N_-)\bigr)^*
 \otimes
 \bigl(\mathbf H_m\otimes(M_-\otimes N_+)\bigr)
\]
be the canonical coefficient-one compact-structural isomorphism
induced by duality, symmetry, and associativity.

\begin{definition}[Qudit--MLL carrier matrix]
\label{def:qudit-mll-carrier-matrix}
For \(F:A\to B\), define
\[
 \mathcal Q_{M,N}^{n,m}(F)
 :=
 \operatorname{unname}\!\left(
 \zeta_{M,N}^{n,m}
 \operatorname{name}_{\Perm(\CC)}(F)\right).
\]
This is a positive first-order carrier morphism from
\(\mathbf H_n\otimes(M_+\otimes N_-)\) to
\(\mathbf H_m\otimes(M_-\otimes N_+)\).
\end{definition}

This type supports a KL basis component precisely when
\[
 1+|M^+|+|N^-|=1+|M^-|+|N^+|.
\]
Assume this balance and write the common value as \(r\).
The extra \(1\) is the qudit's own KL port. After distribution,
\(\mathcal Q_{M,N}^{n,m}(F)\) lies in the fixed-width sector
\(R_r^{\oplus n}\to R_r^{\oplus m}\). Its full numerical matrix is
\[
 \mathbf Q_{M,N}^{n,m}(F)
 :=\mathsf B_r\!\left(\mathcal Q_{M,N}^{n,m}(F)\right):
 \CC[n]\otimes\CC[r]\longrightarrow\CC[m]\otimes\CC[r].
\]
Equivalently, if
\[
 \mathcal Q_{M,N}^{n,m}(F)_{ji}
 =\sum_g c_{ji,g}[g],
\]
then
\[
 \mathbf Q_{M,N}^{n,m}(F)
 =\sum_{j,i,g}c_{ji,g}\,
 e_{ji}\otimes\underline{\tau_g}.
\]

\begin{corollary}[Qudit--MLL unitarity]
\label{cor:qudit-mll-unitarity}
If \(\mathcal Q_{M,N}^{n,m}(F)\) is carrier dagger-unitary, then
\(\mathbf Q_{M,N}^{n,m}(F)\) is an ordinary two-sided unitary.
\end{corollary}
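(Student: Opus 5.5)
The plan is to reduce the statement to Proposition~\ref{prop:first-order-representation}, applied in the fixed \(r\)-sector. Write \(Q:=\mathcal Q_{M,N}^{n,m}(F)\). By Definition~\ref{def:qudit-mll-carrier-matrix} and the balance condition, \(Q\) is a morphism
\[
 R_r^{\oplus n}\longrightarrow R_r^{\oplus m}
\]
in the positive fixed-width sector, once the canonical distribution is applied to its domain and codomain. Remark~\ref{rem:normal-form-load-bearing} says this distribution only reindexes families, with identity KL components. So I will identify \(Q\) with a morphism of that sector and take \(\mathbf Q_{M,N}^{n,m}(F)=\mathsf B_r(Q)\) literally, as in the definition.

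The hypothesis is carrier dagger-unitarity:
\[
 Q^\dagger Q=\id_{R_r^{\oplus n}},
 \qquad
 QQ^\dagger=\id_{R_r^{\oplus m}}
\]
in \(\Perm(\CC)\). Both composites pass through pure-positive intermediates, namely \(R_r^{\oplus m}\) and \(R_r^{\oplus n}\) respectively. This is exactly the situation Proposition~\ref{prop:first-order-representation} covers: no cycles form, and no \(d\)-powers appear. Applying \(\mathsf B_r\) to the first identity and using multiplicativity, the dagger clause, and \(\mathsf B_r(\id)=I\) gives
\[
 \mathsf B_r(Q)^\dagger\mathsf B_r(Q)
 =\mathsf B_r(Q^\dagger Q)
 =\mathsf B_r(\id)
 =I_{\CC[n]\otimes\CC[r]}.
\]
The same argument applied to the second identity gives
\[
 \mathsf B_r(Q)\mathsf B_r(Q)^\dagger=I_{\CC[m]\otimes\CC[r]}.
\]
Together these are the two Gram identities, so \(\mathbf Q_{M,N}^{n,m}(F)\) is an ordinary two-sided unitary. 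As a side consequence, \(n=m\) by a dimension count.

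The step I expect to need the most care is the well-definedness of the identification of \(Q\) with a fixed-width sector morphism. The issue is whether the distribution isomorphisms, being coefficient-one reindexings, commute with \(\mathsf B_r\) and with dagger, so that unitarity of \(Q\) transfers to unitarity of its distributed form. I would handle this by conjugating with the distribution isomorphisms \(\theta\). These are carrier-unitary coefficient-one permutations of family indices whose KL components are identities, so \(\theta^\dagger=\theta^{-1}\). Hence \(\theta_m Q\theta_n^{-1}\) is dagger-unitary exactly when \(Q\) is.

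A second point to check is that \(\mathsf B_r\) sends the dagger to the conjugate transpose. This is already the dagger clause of Proposition~\ref{prop:first-order-representation}: on a basis matching, \([g]^\dagger=[g^{-1}]\) and \(\underline{\tau_{g^{-1}}}=\underline{\tau_g}^{\,T}\), while the coefficients are conjugated. With these checks in place, the corollary is a direct consequence of that proposition.
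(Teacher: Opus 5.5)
Your proposal is correct and follows the paper's route. The paper states this corollary as an immediate consequence of Proposition~\ref{prop:first-order-representation}, and later invokes that proposition in the same way for the sequent version: its identity, composition and dagger clauses carry the two carrier Gram identities of \(\mathcal Q_{M,N}^{n,m}(F)\), which compose through the pure-positive intermediates \(R_r^{\oplus m}\) and \(R_r^{\oplus n}\), to the two Gram identities of \(\mathsf B_r\). Your extra checks on the distribution reindexing and on dagger compatibility are harmless, since the normal form makes the identification with the fixed \(r\)-sector literal.
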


In particular, let \(U\in\mathsf U(n)\) and let \(p:M\to N\) be a
coefficient-one loop-free KL basis morphism. Up to the fixed boundary
permutations,
\[
 \mathbf Q_{M,N}^{n,n}(U\otimes p)
 =
 U\otimes\underline{\tau_{\id_{R_1}\otimes p}}
 =
 U\otimes\underline{\bigl(\id_{\{q\}}\sqcup\tau_p\bigr)},
\]
where \(\{q\}\) is the singleton qudit-port set. Thus the qudit and MLL
factors are jointly unitary. For general linear
combinations in this fixed block format, unitarity is a condition on
the full block
matrix; unitary constituent matchings alone do not suffice.

%-----------------------------------------------------------------------------
\subsection{Uniform qudit--MLL boundaries}
%-----------------------------------------------------------------------------

The preceding construction iterates over any nonempty one-sided
qudit--MLL boundary.  For a carrier formula \(X\) generated from
\(\mathbf H_n\), \(\mathbf H_n^*\), \(\otimes\), and \(\parr\), collect
its qudit rows by polarity:
\[
 \begin{aligned}
 \mathsf A_{\mathbf H_n}^-&:=\Unit,
 &\mathsf A_{\mathbf H_n}^+&:=\mathbf H_n,\\
 \mathsf A_{\mathbf H_n^*}^-&:=\mathbf H_n,
 &\mathsf A_{\mathbf H_n^*}^+&:=\Unit,\\
 \mathsf A_{X\diamond Y}^{\epsilon}
 &:=\mathsf A_X^{\epsilon}\otimes\mathsf A_Y^{\epsilon}
 &&(\diamond\in\{\otimes,\parr\},\ \epsilon\in\{-,+\}).
 \end{aligned}
\]
For a sequent \(\Sigma=(X_1,\ldots,X_s)\), tensor the corresponding
factors and put \(\mathsf A_\varnothing^\pm:=\Unit\).  Write
\(|\Sigma|:=X_1\otimes\cdots\otimes X_s\) for its carrier object
and \(|\varnothing|:=\Unit\).  Each polarized factor is pure positive.
Its uniform-row display is a fixed coefficient-one dagger
isomorphism:
\[
 d_\Sigma^\epsilon:
 \mathsf A_\Sigma^\epsilon
 \xrightarrow{\cong}
 R_{k_\Sigma^\epsilon}^{\oplus N_\Sigma^\epsilon}.
\]
Here \(k_\Sigma^\epsilon\) is its port width and
\(N_\Sigma^\epsilon\) its number of qudit-address rows; for the empty
factor these are \(0\) and \(1\).

The fixed compact coherence gives a coefficient-one dagger isomorphism
\[
 \zeta_\Sigma:
 |\Sigma|\xrightarrow{\cong}
 (\mathsf A_\Sigma^-)^*\otimes\mathsf A_\Sigma^+.
\]
This is the preceding qudit--MLL polarization applied simultaneously
to all literals of the boundary.

\begin{definition}[Qudit--MLL sequent matrix]
\label{def:qudit-mll-sequent-matrix}
For \(a:\Unit\to|\Sigma|\), define
\[
 \mathcal Q_\Sigma(a)
 :=\operatorname{unname}(\zeta_\Sigma a):
 \mathsf A_\Sigma^-\longrightarrow\mathsf A_\Sigma^+.
\]
If \(k_\Sigma^-=k_\Sigma^+=r\geq1\), its full numerical realization is
\[
 \mathbf Q_\Sigma(a)
 :=\mathsf B_r\!\left(
 d_\Sigma^+\mathcal Q_\Sigma(a)(d_\Sigma^-)^{-1}
 \right).
\]
\end{definition}

The sequent construction is the simultaneous iteration of
Definition~\ref{def:qudit-mll-carrier-matrix} over all boundary
literals.

\begin{definition}[Qudit--MLL essential unitarity]
\label{def:qudit-mll-EU}
For a morphism
\(F:\mathbf H_n\otimes M\to\mathbf H_m\otimes N\) satisfying
\[
 1+|M^+|+|N^-|=1+|M^-|+|N^+|,
\]
qudit--MLL essential unitarity means that
\(\mathbf Q_{M,N}^{n,m}(F)\) is a two-sided unitary. For a sequent
point \(a:\Unit\to|\Sigma|\) with
\(k_\Sigma^-=k_\Sigma^+\geq1\), it means that
\(\mathbf Q_\Sigma(a)\) is a two-sided unitary.
\end{definition}

Carrier dagger-unitarity of \(\mathcal Q_\Sigma(a)\) implies the
sequent condition by Proposition~\ref{prop:first-order-representation}.

%-----------------------------------------------------------------------------
\subsection{Limits of the qudit--MLL matrix}
%-----------------------------------------------------------------------------

The qudit--MLL construction is not functorial for arbitrary composition
through a mixed payload, and such composition need not preserve
qudit--MLL essential unitarity.
Proposition~\ref{prop:omg-prime-h-witness} gives a concrete six-port
witness whose three factors have unitary matrices but whose composite
does not; Appendix~\ref{app:QC-h-witness} gives the execution
calculation.

A distinct example distinguishes boundary EU from carrier
dagger-unitarity on a single mixed interface. Let \(X=(1,0)\),
\(A=X\otimes X^*\), and label the source and target ports by
\(x_{\mathrm{in}}^\pm\) and \(x_{\mathrm{out}}^\pm\). Let
\(E:A\to A\) be the cap--cup turnback with links
\(x_{\mathrm{in}}^+\text{--}x_{\mathrm{in}}^-\) and
\(x_{\mathrm{out}}^-\text{--}x_{\mathrm{out}}^+\). Then
\[
 E^\dagger=E,\qquad E^2=dE,\qquad
 F:=\id_A-\frac{2}{d}E
\]
is carrier dagger-unitary. In the natural two-port boundary orders,
\[
 \mathsf B_{A,A}(\id_A)
 =\begin{pmatrix}0&1\\1&0\end{pmatrix},
 \qquad
 \mathsf B_{A,A}(E)=I_2,
\]
whereas
\[
 \mathsf B_{A,A}(F)^\dagger\mathsf B_{A,A}(F)
 =
 \begin{pmatrix}
  1+4/d^2&-4/d\\
  -4/d&1+4/d^2
 \end{pmatrix}
 \ne I_2.
\]
Thus carrier dagger-unitarity implies full matrix unitarity on the
pure axes used later, but not on arbitrary mixed objects of
\(\Perm(\CC)\).

\section{The Coherent Syntax}
\label{sec:QC}

This section defines the source category \(\QC\), its carrier
interpretation, its unitary gates on positive qudit registers, and its
contextual control constructor. The qudit--MLL matrix interface is
developed in Section~\ref{sec:eu-predicate}; essential unitarity is
proved in Section~\ref{sec:calculus}.

%-----------------------------------------------------------------------------
\subsection{Why Qudit Control}
\label{subsec:QC-why-coherence}
%-----------------------------------------------------------------------------

Two carrier counterexamples in \(\Perm(\CC)\) motivate the source
restrictions below. For the first, put
\[
 Q:=(1,0)=\mathbf H_1,\qquad
 F:=Q^*\otimes Q,\qquad
 M:=F\otimes Q.
\]
In each copy of \(M\), label the positive function-output and payload
ports by \(a,b\), and the negative function-input port by \(c\);
primes mark the target copy. Let \(\bar w:M\to M\) be the
coefficient-one multiplicative basis morphism with links
\[
 a\longmapsto b',
 \qquad
 b\longmapsto c,
 \qquad
 c'\longmapsto a'.
\]
Let \(X:M\oplus M\to M\oplus M\) exchange the two summands, set
\(L:=\id_M\oplus\bar w\), and, for \(\theta\in\mathbb R\), put
\[
 h:=L\circ
 \bigl(\cos\theta\,\id_{M\oplus M}+\ii\sin\theta\,X\bigr)
 \circ L.
\]
Let
\[
 \alpha:\mathbf H_2\otimes F\xrightarrow{\cong}M\oplus M
\]
be the canonical coefficient-one carrier isomorphism, and define
\[
 \mathsf C(K):=
 \mathbf Q_{F,F}^{2,2}(\alpha^{-1}K\alpha)
 \qquad
 (K:M\oplus M\to M\oplus M).
\]

\begin{proposition}[The motivating witness \(h\)]
\label{prop:omg-prime-h-witness}
If \(\theta\notin\tfrac{\pi}{2}\mathbb Z\), then
\[
 \mathsf C(L),\
 \mathsf C\!\left(
   \cos\theta\,\id_{M\oplus M}+\ii\sin\theta\,X
 \right)
 \in\mathsf U(6),
 \qquad
 \mathsf C(h)\notin\mathsf U(6).
\]
\end{proposition}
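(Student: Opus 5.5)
The plan is to compute all three full matrices explicitly in block form. For the boundary, take \(A=\mathbf H_2\otimes F\) with \(F=Q^*\otimes Q=(1,1)\), so that \(M=N=F\) in Definition~\ref{def:qudit-mll-carrier-matrix}, with \(F_+=F_-=(1,0)\). The balance condition then gives width \(r=1+1+1=3\), and \(\mathbf Q^{2,2}_{F,F}\) takes values in \(\CC[2]\otimes\CC[3]\), which is six-dimensional.

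Transport along the coefficient-one isomorphism \(\alpha\) only reindexes families (Remark~\ref{rem:normal-form-load-bearing}). Hence for any \(K=(K_{ji})\) with each \(K_{ji}\in\CC[\mathbf{KL}^0(M,M)]\), the matrix \(\mathsf C(K)\) is the \(2\times2\) block matrix whose \((j,i)\) block is the \(3\times3\) linear combination of the permutation matrices \(\underline{\tau}\) read off from the entry \(K_{ji}\) after naming, applying \(\zeta\), and un-naming. Write \(P_0\) for the \(3\times3\) permutation attached to \(\id_M\) and \(P_w\) for the one attached to \(\bar w\). Since \(\bar w\ne\id_M\) as matchings and the un-naming is a bijection on basis matchings, we have \(P_w\ne P_0\).

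The two unitary claims follow directly from this block form:
\[
\mathsf C(L)=\begin{pmatrix}P_0&0\\0&P_w\end{pmatrix},
\qquad
\mathsf C(\cos\theta\,\id+\ii\sin\theta\,X)
=\begin{pmatrix}\cos\theta&\ii\sin\theta\\ \ii\sin\theta&\cos\theta\end{pmatrix}\otimes P_0 .
\]
The first is a permutation matrix. The second is the tensor product of a \(2\times2\) unitary with a permutation. Both therefore lie in \(\mathsf U(6)\).

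For \(h\), I expand bilinearly as \(h=\cos\theta\,L\circ L+\ii\sin\theta\,L\circ X\circ L\), and the real work is the carrier composites. The product \(L\circ X\circ L\) is only a family reindexing: its off-diagonal entries are \(\id\circ\bar w=\bar w\) and \(\bar w\circ\id=\bar w\), so both off-diagonal blocks are \(P_w\) with no loop factor. The diagonal of \(L\circ L\) consists of \(\id_M\) and \(\bar w\circ\bar w\). This last composite is an execution through the mixed interface \(M=(2,1)\), and it is exactly where functoriality of the qudit--MLL matrix fails, so I expect it to be the main obstacle.

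I compute \(\bar w\circ\bar w\) by tracing paths (Proposition~\ref{prop:boundary-execution-gluing}). Starting from \(a\), the path runs \(a\to b'_{\mathrm{mid}}\); the second copy sends \(b_{\mathrm{mid}}\) to \(c_{\mathrm{mid}}\); that port feeds back into the first copy as \(c'\to a'_{\mathrm{mid}}\); and the second copy sends \(a_{\mathrm{mid}}\) to the final \(b''\). The remaining wires are \(b\to c\) and \(c''\to a''\). No cycle closes, so I expect \(\bar w\circ\bar w=\bar w\) with coefficient \(1\). This idempotence must be double-checked carefully against the port labelling, since any loop factor \(d^\ell\) would only change the constants in the argument below. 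Granting it, we obtain
\[
\mathsf C(h)=\begin{pmatrix}\cos\theta\,P_0&\ii\sin\theta\,P_w\\ \ii\sin\theta\,P_w&\cos\theta\,P_w\end{pmatrix}.
\]

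Finally, I check one off-diagonal block of the Gram matrix. The \((1,2)\) block of \(\mathsf C(h)^\dagger\mathsf C(h)\) is
\[
\cos\theta\,P_0^{\top}(\ii\sin\theta\,P_w)+(-\ii\sin\theta)P_w^{\top}\cos\theta\,P_w
=\ii\cos\theta\sin\theta\,(P_0^{\top}P_w-I).
\]
The hypothesis \(\theta\notin\tfrac{\pi}{2}\mathbb Z\) gives \(\cos\theta\sin\theta\ne0\), and \(P_w\ne P_0\) gives \(P_0^{\top}P_w\ne I\). So this block is nonzero, and therefore \(\mathsf C(h)\notin\mathsf U(6)\). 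If the execution of \(\bar w\circ\bar w\) instead produces a loop factor or a different matching, the same Gram block computation still applies, with \(P_w\) in the lower-right block replaced by the computed entry. In that case I would verify that a nonzero cross term, or a column norm different from \(1\), still survives.
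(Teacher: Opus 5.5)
Your proof is correct and follows essentially the paper's route. Both arguments use the same block form of $\mathsf C$ and the same execution check that $\bar w\circ\bar w=\bar w$ with no loop factor, which you traced correctly, so your contingency paragraph is unnecessary; both then use linearity to write $\mathsf C(h)=\cos\theta\,\mathsf C(L)+\ii\sin\theta\,\mathsf C(LXL)$. The final step differs only cosmetically: the paper factors $\mathsf C(h)=\mathsf C(L)(\cos\theta I+\ii\sin\theta R)$ with $R=\mathsf C(L)^{-1}\mathsf C(LXL)$ a six-cycle and uses $R\neq R^{-1}$, whereas you read off the single off-diagonal Gram block $\ii\cos\theta\sin\theta\,(P_0^{\top}P_w-I)$, which needs only $P_w\neq P_0$.
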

\extendedonly{The execution calculation is given in
Appendix~\ref{app:QC-h-witness}.}

Thus arbitrary carrier-\(\oplus\) control is not stable over a mixed
payload, motivating explicit selector retention. The contextual
constructor below also requires a positive common output, ensuring the
pure-axis block form used by the compositionality and normalization
arguments.

For the second counterexample, put \(P:=(1,0)=\mathbf H_1\),
\(Y:=P\otimes P^*\), and \(Z:=Y\otimes Y\), and let
\(\sigma:Z\to Z\) exchange the two \(Y\)-factors.

\begin{proposition}[The mixed-interface witness]
\label{prop:QC-F2-witness}
There is a coefficient-one multiplicative basis morphism
\(\bar q:Z\to Y\) such that, for every
\(\theta\notin\tfrac{\pi}{2}\mathbb Z\),
\[
 \neg\operatorname{EU}\!\left(
   \bar q\circ
   \bigl(\cos\theta\,\id_Z+\ii\sin\theta\,\sigma\bigr)
 \right).
\]
\end{proposition}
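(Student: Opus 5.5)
The plan is to exhibit \(\bar q\) explicitly as the ``composition'' wiring on two function-type factors, and then to compute the boundary matrix of the composite directly from Definition~\ref{def:KL-boundary-matrix}. Label the ports of \(Z=Y\otimes Y\) as \(p_1,n_1\) for the first \(Y\)-factor and \(p_2,n_2\) for the second. Here \(p_k\) is the positive \(P\)-port and \(n_k\) the negative \(P^*\)-port. Label the ports of the target \(Y\) as \(p',n'\).

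A KL morphism \(Z\to Y\) is a bijection
\[
 N_{Z,Y}=\{p_1,p_2,n'\}\xrightarrow{\cong}\{n_1,n_2,p'\}=P_{Z,Y}.
\]
I will take \(\bar q\) to be the coefficient-one loop-free matching
\[
 \tau_1:\quad p_1\mapsto n_2,\qquad p_2\mapsto p',\qquad n'\mapsto n_1.
\]
This links the output of the first factor into the input of the second, i.e.\ it is ordinary function composition. It is a multiplicative (unit-free, no-Mix) basis morphism.

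Next I will compute the composite. Since \(\sigma\) and \(\id_Z\) are coefficient-one isomorphisms, gluing creates no closed cycles, so Proposition~\ref{prop:boundary-execution-gluing} contributes no powers of \(d\). By bilinearity of~\eqref{eq:delta-twisted-matrix-composition},
\[
 \bar q\circ(\cos\theta\,\id_Z+\ii\sin\theta\,\sigma)=\cos\theta\,[\bar q]+\ii\sin\theta\,[\bar q\star\sigma].
\]
Precomposing with \(\sigma\) relabels \(p_1\leftrightarrow p_2\) and \(n_1\leftrightarrow n_2\). Hence \(\bar q\star\sigma\) has boundary bijection
\[
 \tau_2:\quad p_1\mapsto p',\qquad p_2\mapsto n_1,\qquad n'\mapsto n_2.
\]
These two matchings are distinct, so the collected form has two terms, and
\[
 B:=\mathsf B_{Z,Y}=\cos\theta\,\underline{\tau_1}+\ii\sin\theta\,\underline{\tau_2}.
\]

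Finally I will check the first Gram identity. Expanding gives
\[
 B^\dagger B=I+\ii\sin\theta\cos\theta\,\bigl(\underline{\pi}-\underline{\pi}^{-1}\bigr),
 \qquad \pi:=\tau_1^{-1}\tau_2 .
\]
Tracing \(\pi\) through the two bijections gives \(p_1\mapsto p_2\mapsto n'\mapsto p_1\). So \(\pi\) is a \(3\)-cycle, and \(\underline{\pi}\neq\underline{\pi}^{-1}\). For \(\theta\notin\tfrac{\pi}{2}\mathbb Z\) we have \(\sin\theta\cos\theta\neq0\), hence \(B^\dagger B\neq I\), and EU fails.

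The only delicate step is the port bookkeeping: getting the polarized boundary order of Definition~\ref{def:boundary-bijection} right, and confirming that \(\sigma\) acts on \(\bar q\) by exactly the relabelling above. The argument needs only that \(\pi\) is not an involution, so any reasonable port convention should still produce a \(3\)-cycle. I would double-check this by writing out the four-port diagram explicitly, as in Appendix~\ref{app:QC-h-witness}.
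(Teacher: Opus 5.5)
Your proof is correct and follows the paper's route: the paper uses the same composition wiring (its \(\bar q\) is your \(\bar q\circ\sigma\), the other composition order), notes that the two matchings differ by a three-cycle, and concludes that the Gram identity fails. The only difference is presentational: the paper exhibits one pair of non-orthogonal columns (\(x_1,x_3\), with inner product \(-\ii\sin\theta\cos\theta\)), whereas you write out \(B^\dagger B=I+\ii\sin\theta\cos\theta\,(\underline{\pi}-\underline{\pi}^{-1})\) in full.
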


\extendedonly{The proof is given in
Appendix~\ref{app:QC-h-witness}.}

Although the parenthesized carrier map is dagger-unitary, composition
can destroy its boundary unitarity because \(Z\) is mixed. Primitive
unitaries are therefore admitted on positive qudit registers, not on
arbitrary multiplicative interfaces.

%-----------------------------------------------------------------------------
\subsection{Raw Types and Qudit Registers}
\label{subsec:QC-objects}
%-----------------------------------------------------------------------------

\begin{definition}[Raw types]
\label{def:QC-basic-types}
For \(n\geq1\), let \(\mathbf R_n\) be the positive \(n\)-level qudit
literal and put \(q:=\mathbf R_1\). Raw types are generated by
\[
 A::=\mathbf R_n
 \mid \mathbf R_n^*
 \mid A\otimes B
 \mid A\parr B.
\]
Their strict De Morgan duality is
\[
 (\mathbf R_n^*)^*=\mathbf R_n,
 \qquad
 (A\otimes B)^*=A^*\parr B^*,
 \qquad
 (A\parr B)^*=A^*\otimes B^*.
\]
The polarity set is determined by
\[
 \begin{aligned}
 \operatorname{pol}(\mathbf R_n)&=\{+\},
 &
 \operatorname{pol}(\mathbf R_n^*)&=\{-\},\\
 \operatorname{pol}(A\otimes B)
 &=\operatorname{pol}(A\parr B)
 =\operatorname{pol}(A)\cup\operatorname{pol}(B).
 \end{aligned}
\]
A type is \emph{positive}, \emph{negative}, or \emph{mixed} according
as its polarity set is \(\{+\}\), \(\{-\}\), or \(\{+,-\}\).
A positive or negative type is \emph{pure}.
\end{definition}

The source grammar is unit-free: it has no multiplicative-unit type or
nullary compact constructor. The carrier category \(\Perm(\CC)\)
remains compact and unital. With its compact \(*\)-autonomous
structure, both source multiplicatives are interpreted by tensor.

\begin{definition}[Object assignment]
\label{def:QC-object-interpretation}
The raw-object assignment \(O\) is determined by
\[
 \begin{aligned}
 O(\mathbf R_n)&=\mathbf H_n=(1,0)^{\oplus n},
 & O(\mathbf R_n^*)&=\mathbf H_n^*,\\
 O(A\otimes B)&=O(A)\otimes O(B),
 & O(A\parr B)&=O(A)\otimes O(B).
 \end{aligned}
\]
\end{definition}

The source qudit is not a multiplicative unit: every summand of
\(\mathbf H_n\) carries one positive KL port. In particular,
\(\mathbf R_m\otimes\mathbf R_n\) and \(\mathbf R_{mn}\) have equally
many carrier rows but different multiplicative interfaces.

Write \([n]:=\{1,\ldots,n\}\). For positive types, define the address
set \(I_P\) and port width \(k_P\) recursively by
\[
 \begin{aligned}
 I_{\mathbf R_n}&:=[n],
 &
 k_{\mathbf R_n}&:=1,\\
 I_{P\diamond S}&:=I_P\times I_S,
 &
 k_{P\diamond S}&:=k_P+k_S
 \qquad(\diamond\in\{\otimes,\parr\}),
 \end{aligned}
\]
and put \(N_P:=|I_P|\). The carrier \(\oplus/\otimes\) normal form
supplies the fixed lexicographic qudit-row display
\begin{equation}
 d_P:O(P)\xrightarrow{\cong}
 \bigoplus_{\mathbf i\in I_P}(k_P,0).
 \label{eq:QC-positive-display}
\end{equation}
Thus \(I_P\) records qudit addresses, whereas each row retains all
\(k_P\) multiplicative ports.

\begin{definition}[Positive-register gates]
\label{def:QC-register-unitaries}
Positive types \(P,Q\) are \emph{shape-compatible} when
\(N_P=N_Q\) and \(k_P=k_Q\). For every shape-compatible pair and every unitary
\(U:\CC[I_P]\to\CC[I_Q]\), introduce
\(u_U^{P,Q}:P\to Q\), subject to
\[
 u_I^{P,P}=\id_P,
 \qquad
 u_V^{Q,R}\circ u_U^{P,Q}=u_{VU}^{P,R}.
\]
Negative register gates are their duals.
\end{definition}

Write \(u_U^P:=u_U^{P,P}\). For \(P=\mathbf R_n\), write
\(u_U:=u_U^{\mathbf R_n}\). Under the carrier assignment below,
\(u_U\) has ordinary coefficient matrix \(U\). Allowing \(P\) to be
compound supplies joint,
including entangling, unitaries on
\(\mathbf R_m\otimes\mathbf R_n\) without identifying that two-port
type with the one-port literal \(\mathbf R_{mn}\).

%-----------------------------------------------------------------------------
\subsection{The Category \texorpdfstring{\(\QC\)}{QC}}
\label{subsec:QC-category}
%-----------------------------------------------------------------------------

The remaining constructor realizes a synchronized family with one
qudit selector.  For a finite raw context
\(\Gamma=(A_1,\ldots,A_k)\), write
\[
 \Gamma\mathbin{\cdot}D
 :=A_1\otimes\cdots\otimes A_k\otimes D,
 \qquad
 \varnothing\mathbin{\cdot}D:=D,
\]
using the fixed association convention.

Let \(\mathsf{Obj}_{\mathrm u}\) be the formulas generated from the
qudit literals by \(\Unit,\bot,\otimes,\parr\), and strict De Morgan
duality, with \(\Unit^*=\bot\) and \(\bot^*=\Unit\).  Unit
occurrences remain visible in these formulas; the raw types are exactly
the unit-free ones.  We use the typed-component presentation
of~\cite{BCST1996}; the category-labelled construction
of~\cite{Hughes2012} is its ordinary-category special case.

\begin{definition}[Contextual qudit control]
\label{def:basic-admissible-coherent-tuple}
For \(X,Y\in\mathsf{Obj}_{\mathrm u}\), let
\(\mathsf{Term}(X,Y)\) be the least family of finite typed terms
generated by the unital symmetric \(*\)-autonomous term formers,
the positive-register gates, and
\begin{equation}
 \frac{f_i:\Gamma\mathbin{\cdot}D\to P\quad(i\in[n])}
      {\operatorname{ctrl}^{\Gamma}_n(f_1,\ldots,f_n):
       \mathbf R_n\otimes(\Gamma\mathbin{\cdot}D)
       \to\mathbf R_n\otimes P},
 \qquad n\geq1,\quad D,P\ \text{positive}.
 \label{eq:QC-control-formation}
\end{equation}
Put \(\operatorname{ctrl}_n:=\operatorname{ctrl}^{\varnothing}_n\).

Let \(\equiv\) be the least typed congruence, closed under every term
former, that contains the symmetric \(*\)-autonomous equations, the
register equations, and the following relations.  For
\[
 \begin{gathered}
 f,f_i:\Gamma\mathbin{\cdot}D\to P,
 \qquad g_i:P\to R,\\
 h:\Gamma'\mathbin{\cdot}D'\to\Gamma\mathbin{\cdot}D,
 \qquad D,D',P,R\ \text{positive},
 \end{gathered}
\]
impose
\begin{align}
 \operatorname{ctrl}^{\Gamma}_n(f,\ldots,f)
 &\equiv\id_{\mathbf R_n}\otimes f,
 \label{eq:QC-control-uniform}\\
 \operatorname{ctrl}_n(g_1,\ldots,g_n)
 \circ\operatorname{ctrl}^{\Gamma}_n(f_1,\ldots,f_n)
 &\equiv\operatorname{ctrl}^{\Gamma}_n
 (g_1\circ f_1,\ldots,g_n\circ f_n),
 \label{eq:QC-control-composition}\\
 \operatorname{ctrl}^{\Gamma}_n(f_1,\ldots,f_n)
 \circ(\id_{\mathbf R_n}\otimes h)
 &\equiv\operatorname{ctrl}^{\Gamma'}_n
 (f_1\circ h,\ldots,f_n\circ h).
 \label{eq:QC-control-prenaturality}
\end{align}
For \(\rho\in S_n\), let
\(\Pi_\rho:\CC[n]\to\CC[n]\) be the permutation unitary defined by
\(\Pi_\rho\lvert i\rangle=\lvert\rho(i)\rangle\), and impose
\begin{equation}
 (u_{\Pi_\rho}\otimes\id_P)
 \circ\operatorname{ctrl}^{\Gamma}_n(f_1,\ldots,f_n)
 \circ
 (u_{\Pi_{\rho^{-1}}}\otimes
  \id_{\Gamma\mathbin{\cdot}D})
 \equiv
 \operatorname{ctrl}^{\Gamma}_n
 (f_{\rho^{-1}(1)},\ldots,f_{\rho^{-1}(n)}).
 \label{eq:QC-control-permutation}
\end{equation}
For shape-compatible positive \(D,P\) and unitaries
\(U_i:\CC[I_D]\to\CC[I_P]\), also impose
\begin{equation}
 \operatorname{ctrl}_n
 (u_{U_1}^{D,P},\ldots,u_{U_n}^{D,P})
 \equiv
 u_{\bigoplus_iU_i}^{\mathbf R_n\otimes D,\,
                         \mathbf R_n\otimes P},
 \label{eq:QC-control-register-overlap}
\end{equation}
where the block sum uses the fixed lexicographic order on
\([n]\times I_D\) and \([n]\times I_P\).
\end{definition}
The last relation fixes the intended overlap between contextual
control and compound-register gates.  In particular, uniformity gives
\[
 \id_{\mathbf R_n}\otimes u_U^{D,P}
 \equiv
 u_{\bigoplus_iU}^{\mathbf R_n\otimes D,\,
                      \mathbf R_n\otimes P};
\]
this is selector--tensor compatibility, not arbitrary tensoriality of
the gate presentation.  The positive-codomain restriction keeps the
controlled output on a pure axis.

\begin{definition}[The coherent quantum core \(\QC\)]
\label{def:basic-QC}
\label{def:signed-QC}
\label{def:QC-doctrine-equations}
The objects of \(\QC\) are the raw types.  A term is \emph{raw} when
every formula in its derivation, including
every control premise and composition intermediate, is a raw type.
Writing \(\mathsf{RawTerm}(A,B)\) for these terms, for raw types
\(A,B\), \(\QC(A,B)\) is the set of \(\equiv\)-equivalence
classes in \(\mathsf{Term}(A,B)\) that contain a raw representative:
\[
 \QC(A,B)
 :=
 \bigl\{
 [t]_{\equiv}
 \bigm|
 t\in\mathsf{Term}(A,B),\qquad
 \exists r\in\mathsf{RawTerm}(A,B),\ r\equiv t
 \bigr\}.
\]
\end{definition}
Thus \(\QC\) is the image of the raw term calculus under the single
global ambient congruence, rather than all equivalence classes between
raw objects.  Raw
representatives are closed under identities, composition, tensor,
\(\parr\), and duality.  The standard unit-free derivations of
\[
 \eta_{X,A}:X\to A^*\parr(X\otimes A),
 \qquad
 \ev_{A,B}:(A^*\parr B)\otimes A\to B
\]
show that currying and uncurrying also restrict to \(\QC\).  We call
this inherited multiplicative, dual, and closed structure
unit-free \(*\)-autonomous.\footnote{Removing the units from a
\(*\)-autonomous presentation is technically subtler than merely
deleting two objects; see~\cite{HoustonHughesSchalk2005}.  The
\(*\)-autonomous apparatus supplies duality and \(\parr\) for the
polarized boundary semantics and the proof of essential unitarity.
Most of the source-level content---including qudit control, the quantum
switch, and the expressivity constructions---can instead be presented
in a linear \(\lambda\)-calculus whose type grammar omits the unit, or
categorically in the corresponding fragment of a symmetric monoidal
closed category.}
See Appendix~\ref{app:QC-finite-presentation} for the finitary
presentation and the unit-free closed maps.

Let \(\QC_+\) be the full subcategory on positive types. For fixed
\(n\), let \(\mathsf{Par}_n(\QC_+)\) be the full subcategory of
\(\QC_+^n\) on diagonal objects. Thus its objects are positive types and
\[
 \mathsf{Par}_n(\QC_+)(P,R)=\QC(P,R)^n,
\]
with identities and composition componentwise. Pointwise tensor gives
it an associative symmetric tensor, and \(S_n\) acts by permuting
components.

Equations~\eqref{eq:QC-control-uniform} and
\eqref{eq:QC-control-composition} make
\[
 \mathsf C_n:\mathsf{Par}_n(\QC_+)\longrightarrow\QC_+,
 \qquad
 P\longmapsto\mathbf R_n\otimes P,
\]
a functor sending \((f_i)_{i\in[n]}\) to
\(\operatorname{ctrl}_n(f_1,\ldots,f_n)\).
Equation~\eqref{eq:QC-control-permutation} gives its selector
covariance: for \(f_i:P\to R\), the branch permutation
\((f_i)\mapsto(f_{\rho^{-1}(i)})\) is carried to postcomposition by
\(u_{\Pi_\rho}\otimes\id_R\) and precomposition by
\(u_{\Pi_{\rho^{-1}}}\otimes\id_P\). If
\(\Delta_n\) repeats each arrow, then
\(\mathsf C_n\Delta_n=\mathbf R_n\otimes-\).
The contextual operations extend this block action to inputs
\(\Gamma\mathbin{\cdot}D\), and
Equation~\eqref{eq:QC-control-prenaturality} is their common-input
naturality. Since control retains one selector port
whereas pointwise tensor produces two, \(\mathsf C_n\) is not
tensor-preserving.

Put \(Q:=(1,0)\). Since \(\mathbf H_n=Q^{\oplus n}\), the carrier
normal form supplies, for every raw \(A\), the selector-row display
\[
 d_{n,A}:\mathbf H_n\otimes O(A)
 \xrightarrow{\cong}
 \bigoplus_{i=1}^n\bigl(Q\otimes O(A)\bigr).
\]
For positive \(P\), let \(\phi_{n,P}\) be the canonical
coefficient-one lexicographic flattening.  The recursive row convention
gives
\begin{equation}
 d_{\mathbf R_n\otimes P}
 =
 \phi_{n,P}
 \left(\bigoplus_{s=1}^n(\id_Q\otimes d_P)\right)
 d_{n,P}.
 \label{eq:QC-selector-display-compatibility}
\end{equation}

Regard \(\Perm(\CC)\) as \(*\)-autonomous via its dagger compact
structure.

\begin{definition}[Carrier assignment]
\label{def:QC-semantics}
Extend \(O\) structurally to the ambient objects and interpret the
structural arrows canonically. For an ambient term
\(f\in\mathsf{Term}(X,Y)\), write
\(\widetilde{\llbracket f\rrbracket}:O(X)\to O(Y)\) for its carrier.
For shape-compatible positive \(P,Q\) and unitary
\(U:\CC[I_P]\to\CC[I_Q]\), assign the register gates by
\begin{equation}
 d_Q\,\widetilde{\llbracket u_U^{P,Q}\rrbracket}\,d_P^{-1}
 =
 \bigl(U_{\mathbf j,\mathbf i}\,
       \id_{(k_P,0)}\bigr)_{
       \mathbf j\in I_Q,\,\mathbf i\in I_P}.
 \label{eq:QC-register-unitary-carrier}
\end{equation}
Negative gates receive the dual carrier maps.  For
\(f_i:\Gamma\mathbin{\cdot}D\to P\), set
\begin{equation}
 \widetilde{\llbracket
   \operatorname{ctrl}^{\Gamma}_n(f_1,\ldots,f_n)
 \rrbracket}
 =
 d_{n,P}^{-1}
 \left(
  \bigoplus_{i=1}^n
  \bigl(\id_Q\otimes
        \widetilde{\llbracket f_i\rrbracket}\bigr)
 \right)
 d_{n,\Gamma\mathbin{\cdot}D}.
 \label{eq:QC-control-carrier}
\end{equation}
\end{definition}

Formula~\eqref{eq:QC-control-carrier} keeps the qudit port in every
branch; only its address selects a branch. It is therefore not the
arbitrary carrier direct sum excluded by
Proposition~\ref{prop:omg-prime-h-witness}.

\begin{theorem}[Carrier functor]
\label{thm:basic-structure}
The carrier assignments on ambient terms respect \(\equiv\) and
therefore induce uniquely a functor
\[
 \llbracket-\rrbracket:\QC\longrightarrow\Perm(\CC).
\]
This functor preserves the inherited symmetric multiplicative, dual,
and closed structure.
\end{theorem}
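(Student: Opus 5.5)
The plan is to define \(\widetilde{\llbracket-\rrbracket}\) by structural recursion on ambient terms in \(\mathsf{Term}(X,Y)\), check that each generating relation of \(\equiv\) holds in \(\Perm(\CC)\), and then restrict to \(\QC\). Because \(\equiv\) is the least typed congruence containing these relations, soundness on the generators is enough. The recursive interpretation commutes with every term former by construction, so the set of pairs of terms with equal carriers is itself a congruence, and it contains the generators. Well-definedness on \(\QC(A,B)\) then follows: each class \([t]_\equiv\) has a single carrier value. Identities and composition are interpreted by identities and composition of \(\Perm(\CC)\), so functoriality is immediate once well-definedness is settled.

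The relations split into three groups.

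\emph{The unital symmetric \(*\)-autonomous equations.} These hold because \(\Perm(\CC)\) is dagger compact closed (Proposition~\ref{prop:well-defined-assoc} together with Definition~\ref{def:perm-complex}), and hence is \(*\)-autonomous with \(\parr:=\otimes\) and \(\bot:=\Unit\). The structural arrows are interpreted canonically, so every equation of the free doctrine holds. This is the usual coherence and freeness argument, in the typed-component presentation of~\cite{BCST1996}.

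\emph{The register equations \(u_I=\id\) and \(u_Vu_U=u_{VU}\).} By~\eqref{eq:QC-register-unitary-carrier}, conjugation by the displays \(d_P\) turns gates into matrices \(U\otimes\id_{(k_P,0)}\) in the fixed \(k_P\)-sector. Proposition~\ref{prop:first-order-representation} then reduces both equations to ordinary matrix identities. Negative gates follow by duality.

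\emph{The control relations.} Each is checked by conjugating with the displays \(d_{n,-}\) and computing blockwise.
\begin{itemize}
\item Uniformity~\eqref{eq:QC-control-uniform}: \(d_{n,P}^{-1}\bigl(\bigoplus_i(\id_Q\otimes f)\bigr)d_{n,\Gamma\cdot D}=\id_{\mathbf H_n}\otimes f\). This is naturality of the distributivity reindexing, whose components are identity matchings (Remark~\ref{rem:normal-form-load-bearing}).
\item Composition~\eqref{eq:QC-control-composition} and prenaturality~\eqref{eq:QC-control-prenaturality}: the inner displays cancel, and biproduct functoriality of \(\oplus\) and \(\otimes\) gives the result.
\item Permutation~\eqref{eq:QC-control-permutation}: \(u_{\Pi_\rho}\otimes\id\) is carried by the displays to the block permutation of summands, and conjugating a block-diagonal matrix by it permutes the blocks.
\item Register overlap~\eqref{eq:QC-control-register-overlap}: compose the selector display with \(\bigoplus_s(\id_Q\otimes d_D)\) and use~\eqref{eq:QC-selector-display-compatibility}. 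Both sides then become the block matrix \(\bigoplus_iU_i\otimes\id_{(k,0)}\) in lexicographic order.
\end{itemize}

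Finally, the functor on \(\QC\) preserves the inherited structure. The structural maps of \(\QC\), namely tensor, \(\parr\), duality, and \(\eta_{X,A}\), \(\ev_{A,B}\), are sent to the canonical carrier ones, and \(O\) is strictly monoidal on objects. Currying therefore commutes with the functor by Remark~\ref{rem:curry-invisible}.

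I expect the main obstacle to be the bookkeeping of the normal-form displays rather than any conceptual step. Two points need care. First, in the prenaturality and overlap relations the lexicographic flattenings \(\phi_{n,P}\) and the displays \(d_{n,A}\) must be natural in \(A\) with respect to arbitrary carrier maps, including maps through mixed \(\Gamma\). Second, the symmetric and associativity coherences of \(\otimes\) on families must commute with these displays. I would first prove a single lemma: \(d_{n,-}\) is a natural coefficient-one isomorphism \(\mathbf H_n\otimes(-)\Rightarrow\bigoplus^n(Q\otimes -)\) that is compatible with \(\otimes\) and \(\oplus\), established from the index-level description via \(\mathbf{FinBij}\). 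All of the control cases then follow formally from this lemma.
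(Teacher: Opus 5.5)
Your proposal is correct and takes essentially the same route as the paper. Both define the carrier by structural recursion on ambient terms and use leastness of \(\equiv\), so only the generating relations need checking. The structural and register relations then hold in the compact closed carrier, and the control relations are verified blockwise in selector-row coordinates, with Equation~\eqref{eq:QC-selector-display-compatibility} handling the register overlap. Your proposed naturality lemma for \(d_{n,-}\) is exactly the paper's identity \(B_n(F,\ldots,F)=d_{n,B}(\id_{\mathbf H_n}\otimes F)d_{n,A}^{-1}\) for \(B_n(F_1,\ldots,F_n)=\bigoplus_i(\id_Q\otimes F_i)\) in Proposition~\ref{prop:QC-carrier-descent}, and it is what makes prenaturality go through.
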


\begin{proof}
Structural recursion defines the carrier on finite terms.  In
selector-row coordinates, uniformity, composition, prenaturality, and
permutation covariance are the corresponding block-diagonal
identities.  Equation~\eqref{eq:QC-selector-display-compatibility}
makes the register overlap the block matrix
\(\bigoplus_iU_i\).  Hence the assignment respects every generating
relation and descends uniquely through \(\equiv\).
\extendedonly{The coordinate verification is given in
Appendix~\ref{prop:QC-carrier-descent}.}
\end{proof}

For \(f=[t]_{\equiv}\), set
\[
 \llbracket f\rrbracket
 :=\widetilde{\llbracket t\rrbracket}.
\]
This is well defined by Theorem~\ref{thm:basic-structure}.

%-----------------------------------------------------------------------------
\subsection{An Applied Contextual Quantum Switch}
\label{subsec:contextual-switch-v2}
%-----------------------------------------------------------------------------

Fix a positive raw type \(D\), and put \(E:=D^*\parr D\). Define
\[
 c:=
 \ev_{D,D}\circ(\id_E\otimes\ev_{D,D}):
 E\otimes E\otimes D\longrightarrow D,
\]
and set
\[
 c_0:=c\circ(\sigma_{E,E}\otimes\id_D),
 \qquad
 c_1:=c.
\]
Thus \(c_0(f,g,x)=g(fx)\) and \(c_1(f,g,x)=f(gx)\). The applied
switch is
\begin{equation}
 \mathsf{QSwitch}^{\eta}_{D}
 :=\operatorname{ctrl}^{(E,E)}_2(c_0,c_1):
 \mathbf R_2\otimes E\otimes E\otimes D
 \longrightarrow
 \mathbf R_2\otimes D.
 \label{eq:QC-qswitch}
\end{equation}
Its curried form has type
\[
 \mathsf{QSwitch}_D:
 E\otimes E
 \longrightarrow
 \mathbf R_2^*\parr
 \bigl(D^*\parr(\mathbf R_2\otimes D)\bigr).
\]
This realizes, within the present essentially-unitary fragment, the
\(\mathtt{qcase}\) quantum switch of
Barsse, P\'echoux, and Perdrix~\cite{BarssePechouxPerdrix2026},
with an arbitrary positive payload \(D\). In selector-row coordinates,
the two diagonal blocks are the two causal orders; a unitary on
\(\mathbf R_2\) coherently superposes them.

Thus the positive-output discipline still realizes coherent control of
causal order: the shared higher-order arguments occur in both branches,
while the selector is retained in the codomain
\(\mathbf R_2\otimes D\).

%-----------------------------------------------------------------------------
\subsection{Interface with Essential Unitarity}
\label{subsec:QC-EU-interface}
%-----------------------------------------------------------------------------

For a source type \(X\), write
\(\mathsf A_X^\pm:=\mathsf A_{O(X)}^\pm\).

For \(h:A\to B\), the next section applies the qudit--MLL sequent
matrix of Definition~\ref{def:qudit-mll-sequent-matrix} to the compact
name of \(\llbracket h\rrbracket\).  The resulting carrier matrix
has type
\[
 M(h):\mathsf A_A^+\otimes\mathsf A_B^-
 \longrightarrow\mathsf A_A^-\otimes\mathsf A_B^+.
\]
The next section proves its carrier Gram identities.  The fixed-width
representation then makes the full qudit--MLL matrix dagger-unitary,
which is the source essential-unitarity condition.

\section{Essential Unitarity of the Coherent Syntax}
\label{sec:calculus}

Section~\ref{sec:eu-predicate} assigns a carrier matrix to every
nonempty qudit--MLL boundary and a full numerical matrix to every
balanced one.  A source morphism is essentially unitary when its full
matrix is dagger-unitary.  We prove this condition for every morphism
of \(\QC\).

The argument has four steps.  A source expression is translated to an
administrative expression with the same compact-name carrier point.
A categorical normalization replaces its compositions at mixed
intermediate types by carrier-equal compositions through pure
intermediates.  On pure interfaces, the qudit--MLL matrix preserves
composition and dagger.  Finally, multiplicative base maps,
positive-register gates, and contextual control contribute,
respectively, coefficient-one matching matrices, unitary qudit matrices,
and synchronized block-diagonal matrices.

Throughout, \(I:=\Unit\) denotes the tensor unit of \(\Perm(\CC)\).  Recall \(Q=(1,0)\) and \(R_r=(r,0)\) for \(r\geq0\).
For a source sequent \(\Sigma=(X_1,\ldots,X_s)\), put
\(O(\Sigma):=O(X_1)\otimes\cdots\otimes O(X_s)\), with
\(O(\varnothing):=I\).  In every
boundary datum imported from
Definition~\ref{def:qudit-mll-sequent-matrix}---namely
\(\mathsf A_\Sigma^\pm,k_\Sigma^\pm,N_\Sigma^\pm,d_\Sigma^\pm,
\zeta_\Sigma,\mathcal Q_\Sigma,\mathbf Q_\Sigma\)---the subscript
\(\Sigma\) abbreviates the carrier sequent
\((O(X_1),\ldots,O(X_s))\).
For positive \(P\), these data agree with the display
\eqref{eq:QC-positive-display}: \(\mathsf A_P^-=I\),
\(\mathsf A_P^+=O(P)\), \(d_P^+=d_P\), \(k_P^+=k_P\), and
\(N_P^+=N_P\), with the fixed unit coherence suppressed.

For \(h:A\to B\), let
\[
 a_h:=
 \operatorname{name}_{\Perm(\CC)}(\llbracket h\rrbracket):
 I\longrightarrow O(A^*,B),
 \qquad
 M(h):=\mathcal Q_{(A^*,B)}(a_h).
\]
Thus
\[
 M(h):
 \mathsf A_A^+\otimes\mathsf A_B^-
 \longrightarrow
 \mathsf A_A^-\otimes\mathsf A_B^+.
\]
If \(A\) and \(B\) are positive, unit coherence identifies \(M(h)\)
with \(\llbracket h\rrbracket\); if \(A=P^*\) and \(B=Q^*\) with
\(P,Q\) positive, it identifies \(M(h)\) with the compact transpose
\(\llbracket h\rrbracket^*:O(Q)\to O(P)\).
Put \(\operatorname{chg}(X):=k_X^+-k_X^-\).

\begin{lemma}[Boundary balance]
\label{lem:final-boundary-balance}
For every \(h:A\to B\) in \(\QC\),
\[
 k_A^+ + k_B^- = k_A^- + k_B^+ =:r_h\geq1.
\]
\end{lemma}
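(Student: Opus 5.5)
Two separate facts are needed: the balance \(k_A^+ + k_B^- = k_A^- + k_B^+\), and that this common value \(r_h\) is at least \(1\).

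\textbf{Reduction of the equation to a charge identity.} In terms of \(\operatorname{chg}\), the balance says \(\operatorname{chg}(A)=\operatorname{chg}(B)\). By the recursive definition of \(\mathsf A^\pm\), for every raw type \(X\),
\[
 \operatorname{chg}(X) = \#\{\text{positive literal occurrences in } X\} - \#\{\text{negative literal occurrences in } X\},
\]
since each \(\mathbf R_n\) contributes width \(1\) to \(k^+\), each \(\mathbf R_n^*\) contributes width \(1\) to \(k^-\), and both \(\otimes\) and \(\parr\) add widths. Two consequences follow at once:
\[
 \operatorname{chg}(X^*)=-\operatorname{chg}(X),
 \qquad
 \operatorname{chg}(X\diamond Y)=\operatorname{chg}(X)+\operatorname{chg}(Y)
 \quad(\diamond\in\{\otimes,\parr\}).
\]
The formula extends to \(\mathsf{Obj}_{\mathrm u}\) by giving \(\Unit\) and \(\bot\) charge \(0\).

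\textbf{Charge is invariant under every term.} I would show that every ambient term \(t\in\mathsf{Term}(X,Y)\) satisfies \(\operatorname{chg}(X)=\operatorname{chg}(Y)\). The invariant is stable under the term formers: composition by transitivity, tensor and \(\parr\) by additivity, and duality because it reverses the arrow and negates charge. It therefore suffices to check the generators.
\begin{itemize}
 \item The symmetric \(*\)-autonomous structural maps (associators, symmetries, unitors, \(\eta\), \(\epsilon\), De Morgan isomorphisms) only rearrange literal occurrences, cancel dual pairs, or add and remove units, so charge is preserved.
 \item A register gate \(u_U^{P,Q}\) has \(k_P=k_Q\) by shape-compatibility, and its negative dual inherits this.
 \item For control, \(\operatorname{chg}(\Gamma\cdot D)=\operatorname{chg}(P)\) by induction on the premises, so \(\operatorname{chg}\bigl(\mathbf R_n\otimes(\Gamma\cdot D)\bigr)=1+\operatorname{chg}(P)=\operatorname{chg}(\mathbf R_n\otimes P)\).
\end{itemize}
Every \(h\in\QC(A,B)\) has a term representative, so \(\operatorname{chg}(A)=\operatorname{chg}(B)\).

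\textbf{The bound \(r_h\geq1\).} Suppose \(r_h=0\). Then \(k_A^+=k_B^-=k_A^-=k_B^+=0\), so \(A\) and \(B\) contain no literal occurrences at all. But every raw type contains at least one literal, since the grammar is unit-free. Hence \(r_h\geq1\).

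\textbf{Main obstacle.} The only delicate point is stating the invariant over the unit-bearing ambient terms, since \(\QC\)-morphisms are classes possibly containing non-raw terms. Charge is defined on all of \(\mathsf{Obj}_{\mathrm u}\) and respected by every generator, including unit maps, so the argument does not depend on choosing a raw representative.
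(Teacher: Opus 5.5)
Your proof is correct and follows essentially the same route as the paper: charge is invariant by induction over the term grammar, with gates handled by shape-compatibility, \(\otimes\) and \(\parr\) by additivity, duality by negation, and control by adding \(1\) to both endpoints; and \(r_h\geq1\) because \(r_h=0\) would make all four widths vanish, which is impossible for nonempty raw types. The only difference is that you run the induction over all ambient terms, assigning \(\Unit\) and \(\bot\) charge \(0\), whereas the paper first picks a raw representative — this gives a slightly stronger invariant and makes that choice unnecessary, but does not otherwise change the argument.
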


\begin{proof}
Choose a finite raw representative of \(h\) and induct on that term.
Unit-free multiplicative structure pairs qudit literals with their
duals, and shape-compatible register gates have equal source and target
widths.  Tensor and \(\parr\) add charges, duality reverses them, and
composition preserves equality.  In the control case, the induction
hypothesis gives
\(\operatorname{chg}(\Gamma\mathbin{\cdot}D)=\operatorname{chg}(P)\); adjoining
\(\mathbf R_n\) to both endpoints preserves it.  Hence
\(\operatorname{chg}(A)=\operatorname{chg}(B)\).

If \(r_h=0\), the displayed equality forces all four widths of \(A\)
and \(B\) to vanish, contradicting the nonempty raw-type grammar.
\end{proof}

Write \(\mathbf M(h):=\mathbf Q_{(A^*,B)}(a_h)\) for the full
qudit--MLL matrix.  By definition, \(\operatorname{EU}(h)\) means that
\(\mathbf M(h)\) is dagger-unitary.

%-----------------------------------------------------------------------------
\subsection{Matrices of the constructors}
\label{subsec:calculus-constructors}
%-----------------------------------------------------------------------------

Write \(X_\pm:=\mathsf A_X^\pm\).  A coefficient-one entry means one
occupied Kelly--Laplaza basis isomorphism with scalar \(1\).

\begin{lemma}[Multiplicative base]
\label{lem:final-multiplicative-base}
If \(p:A\to B\) is represented by a correct unit-free no-Mix
multiplicative net, then the matrix
\(M(p):A_+\otimes B_-\to A_-\otimes B_+\) is dagger-unitary.
In canonical row displays, each row and column contains exactly one
nonzero coefficient-one entry.
\end{lemma}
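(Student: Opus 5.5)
The plan is to reduce the statement to two facts about the carrier image of a correct net: it has a single occupied entry per selector row that is a loop-free coefficient-one KL matching, and that entry is the same on every row up to the diagonal identification of addresses. First, read off the denotation \(\llbracket p\rrbracket\) of a unit-free no-Mix proof net. Its axiom links pair each literal occurrence \(\mathbf R_n\) with a dual occurrence \(\mathbf R_n^*\) of the same size. In the carrier, such a link is the identity or the unit/counit on \(\mathbf H_n=Q^{\oplus n}\). Its \(\oplus/\otimes\) normal form is therefore the diagonal family \(\bigoplus_{i\in[n]}\) of single-wire KL matchings, which equates the two qudit addresses and links the two KL ports. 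The multiplicative links \(\otimes,\parr\) are carrier tensor and associativity/symmetry coherences. Their KL components are identity matchings, and by Remark~\ref{rem:normal-form-load-bearing} they only reindex families. Consequently \(\operatorname{name}(\llbracket p\rrbracket)\), after \(\zeta_{(A^*,B)}\), is a tensor of axiom-link names, up to a coefficient-one reindexing permutation.

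Second, I would show that no power of \(d\) appears. A correct net's denotation may be computed as a composite in which cycles could arise, for example through the cut in \(\ev\circ\eta\) that implements currying. I would avoid this by working directly with the compact-name point, \(a_p=\) a tensor of \(\eta\)'s on the literal pairs. This is well defined because the carrier functor of Theorem~\ref{thm:basic-structure} preserves the \(*\)-autonomous structure, and \(\Perm(\CC)\) is compact closed, so a net's denotation depends only on its axiom linking. The structural point is a single KL basis monomial per address tuple with \(\kappa=0\). Correctness (no-Mix, unit-free) is used only to guarantee that the term denotes exactly its linking with no detached component: a detached closed component could only come from units or cut cycles, and cut elimination for correct nets removes cycles.

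Third, apply \(\mathcal Q_{(A^*,B)}\). Un-naming a tensor of axiom names along the polarized partition \(\mathsf A^-\to\mathsf A^+\) sends each axiom pair, one positive row coordinate and one negative, to an identity \(\mathbf H_n\to\mathbf H_n\) between one factor of \(A_+\otimes B_-\) and one factor of \(A_-\otimes B_+\). Boundary balance (Lemma~\ref{lem:final-boundary-balance}) makes both sides width \(r_p\). The result is a tensor product of identity matrices composed with a coefficient-one permutation of factors, with KL component \(\underline{\tau_g}\) for the single matching \(g\). In row displays this is a permutation of addresses tensored with a permutation of ports. Hence every row and column has exactly one coefficient-one entry, and dagger-unitarity follows from Proposition~\ref{prop:first-order-representation} and Corollary~\ref{cor:qudit-mll-unitarity}.

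The main obstacle is the second step: justifying that the carrier point of an arbitrary correct net, possibly presented through cuts at mixed types, equals the loop-free tensor of axiom names. I would first try induction on sequent derivations of the cut-free net. For the cut case I would invoke cut elimination together with Kelly--Laplaza coherence (Proposition~\ref{prop:boundary-execution-gluing}). Under no-Mix correctness, the switching-acyclicity condition ensures that execution of a cut produces only paths, so \(\ell=0\).
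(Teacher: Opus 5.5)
Your proposal is correct and follows essentially the paper's route: literal occurrences are matched bijectively with equal ranks, each axiom identity on \(\mathbf H_n\) expands diagonally, and sequentialization with cut elimination gives a cut-free presentation. Correctness then excludes loop scalars; you use the acyclicity half of the criterion where the paper uses connectedness of switchings, and both are available under the correctness hypothesis. One small correction: read off the carrier Gram identities directly from the one-coefficient-one-entry-per-row-and-column structure by reversing the matching, as the paper does, because Corollary~\ref{cor:qudit-mll-unitarity} only goes from carrier dagger-unitarity to unitarity of the numerical matrix, not the other way.
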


\begin{proof}
A correct unit-free no-Mix multiplicative net matches atomic
occurrences bijectively and preserves their qudit ranks.  At a literal
\(\mathbf R_n\), the carrier identity expands into \(n\) diagonal
rows.  Hence the whole net sends each tuple of qudit addresses to one
tuple, while its remaining KL ports carry the corresponding
coefficient-one matching.  The standard correctness, sequentialization, and cut-elimination theorem
for unit-free no-Mix MLL~\cite{GIRARD19871,Danos1989-DANTSO-5}
yields a correct cut-free presentation. Every switching of a correct
no-Mix net is connected, so no component is detached from the conclusions
and no loop scalar occurs.
Reversing the matching gives its dagger, and the row and column
equations give both Gram identities.
\end{proof}

\begin{lemma}[Register gates]
\label{lem:final-register-gates}
Let \(P,Q\) be shape-compatible positive types and let
\(U:\CC[I_P]\to\CC[I_Q]\) be unitary. Then
\[
 d_Q\,M(u_U^{P,Q})\,d_P^{-1}
 =
 \bigl(U_{\mathbf j,\mathbf i}\,
       \id_{R_{k_P}}\bigr)_{
       \mathbf j\in I_Q,\,\mathbf i\in I_P}.
\]
Consequently, \(M(u_U^{P,Q})\) and the matrix of its source dual are
dagger-unitary.
\end{lemma}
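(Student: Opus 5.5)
The plan is to reduce to the positive case by unit coherence and then read off the matrix from the carrier assignment~\eqref{eq:QC-register-unitary-carrier}. Since \(P\) and \(Q\) are positive, we have \(\mathsf A_P^-=\mathsf A_Q^-=I\), \(\mathsf A_P^+=O(P)\) and \(\mathsf A_Q^+=O(Q)\). The remark preceding Lemma~\ref{lem:final-boundary-balance} therefore identifies \(M(u_U^{P,Q})\) with \(\llbracket u_U^{P,Q}\rrbracket\) up to the fixed coefficient-one unit coherence. I would first check that this identification is an honest equality once those coherences are suppressed. To do so, unfold \(a_h=\operatorname{name}(\llbracket h\rrbracket)\) and \(\mathcal Q_{(P^*,Q)}=\operatorname{unname}\circ\zeta\). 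Because \(P^*\) contributes only negative rows and \(Q\) only positive rows, \(\zeta_{(P^*,Q)}\) is the identity up to unit coherence. Then name and unname are inverse, as stated before Definition~\ref{def:qudit-mll-carrier-matrix}. The displayed formula is now exactly~\eqref{eq:QC-register-unitary-carrier}, with \(\id_{(k_P,0)}=\id_{R_{k_P}}\).

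For dagger-unitarity, I would use the fact that \(d_P\) and \(d_Q\) are coefficient-one dagger isomorphisms between pure-positive objects. It therefore suffices to show that the block matrix \(G:=(U_{\mathbf j\mathbf i}\id_{R_k})\) with \(k=k_P=k_Q\geq1\) is carrier dagger-unitary. Its fixed-width representation is \(\mathsf B_k(G)=U\otimes I_{\CC[k]}\), because \(\underline{\tau_{\id}}=I\). This matrix is unitary since \(U\) is. Proposition~\ref{prop:first-order-representation} says \(\mathsf B_k\) is a faithful \(\dagger\)-homomorphism on this sector, because the entries \(\id_{R_k}\) are basis vectors. So \(G^\dagger G=\id\) and \(GG^\dagger=\id\) in the carrier as well. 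Conjugating back by \(d_P,d_Q\) gives the Gram identities for \(M(u_U^{P,Q})\).

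For the source dual \(u:Q^*\to P^*\), the carrier map is the compact transpose. Its \(M\) is identified, again by the earlier remark, with \(\llbracket u_U^{P,Q}\rrbracket:O(P)\to O(Q)\) read through the negative rows. This is the same block matrix, possibly up to the coefficient-one displays, so it is dagger-unitary by the previous step. Alternatively, one can write the dual gate's matrix as \(\bar U\otimes I\) or \(U^{T}\otimes I\), which is still unitary.

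The main obstacle is bookkeeping. One must confirm that the unit coherences and the displays \(d_\Sigma^\pm\) of Definition~\ref{def:qudit-mll-sequent-matrix} agree with the displays \(d_P,d_Q\) used in~\eqref{eq:QC-register-unitary-carrier}, with no stray permutation on ports or loop factor. No loop factor can arise, because the only intermediates are pure-positive, and the agreement of displays was asserted in the paragraph fixing \(d_P^+=d_P\).
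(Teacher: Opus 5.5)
Your reduction of \(M(u_U^{P,Q})\) to the carrier gate is the paper's argument: positivity gives \(\mathsf A_P^-=\mathsf A_Q^-=I\), \(\zeta_{(P^*,Q)}\) collapses to unit coherence, name and unname cancel, and the displayed block formula is \eqref{eq:QC-register-unitary-carrier}. Your treatment of the dual by transposition also matches the paper.

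The one step that fails as written is the passage from unitarity of \(\mathsf B_k(G)=U\otimes I_{\CC[k]}\) back to carrier dagger-unitarity of \(G\). Proposition~\ref{prop:first-order-representation} says only that \(\mathsf B_k\) preserves identities, composition and dagger. It does not say that \(\mathsf B_k\) is faithful, and \(\mathsf B_k\) is in fact not faithful on the fixed-width sector once \(k\geq 3\). The paper's element \(K=\sum_{\sigma\in S_3}\operatorname{sgn}(\sigma)[\sigma]\) on \(R_3\) is nonzero but has zero boundary matrix. Since \(k_P\) can be \(3\) or larger (for instance \(P=\mathbf R_2\otimes\mathbf R_2\otimes\mathbf R_2\)), the inference ``\(\mathsf B_k(G^\dagger G)=I\), hence \(G^\dagger G=\id\)'' cannot rest on faithfulness of the sector.

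The repair is the paper's one-line argument: compute the Gram products directly in the carrier. A pure-positive intermediate creates no cycles, so \([\id_{R_k}]\circ[\id_{R_k}]=[\id_{R_k}]\) and
\[
 (G^\dagger G)_{\mathbf i\mathbf i'}
 =\sum_{\mathbf j}\overline{U_{\mathbf j\mathbf i}}\,U_{\mathbf j\mathbf i'}\,\id_{R_k}
 =\delta_{\mathbf i\mathbf i'}\,\id_{R_k}.
\]
The same holds for \(GG^\dagger\), so the carrier Gram identities of the block matrix are literally those of \(U\).

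Your remark that the entries are basis vectors points at the correct salvage. \(\mathsf B_k\) is injective on block matrices whose entries are multiples of \([\id_{R_k}]\), and \(G^\dagger G\) and \(GG^\dagger\) lie in that span. But noticing this already is the direct computation, so the detour through \(\mathsf B_k\) buys nothing.

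A minor point on the dual: by \eqref{eq:final-duality-matrix} with \(\mathsf A_P^-=\mathsf A_Q^-=I\), the matrix of the source dual is the same block matrix as \(M(u_U^{P,Q})\) up to unitors. Your hedge about \(\bar U\) or \(U^{T}\) is therefore unnecessary.
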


\begin{proof}
Since \(P_-=Q_-=I\), naming and polarized un-naming reduce
\(M(u_U^{P,Q})\) to its carrier gate, up to the fixed unitors. The
formula is therefore \eqref{eq:QC-register-unitary-carrier}, and its
Gram identities are those of \(U\). Compact transposition commutes
with dagger and carries both identities to the dual gate.
\end{proof}
For source morphisms \(f:A\to B\) and \(g:C\to D\), the fixed
polarized rearrangements satisfy
\begin{equation}
 M(f\diamond g)
 =
 \sigma_{\mathrm{out}}^\diamond
 \bigl(M(f)\otimes M(g)\bigr)
 \sigma_{\mathrm{in}}^\diamond,
 \qquad \diamond\in\{\otimes,\parr\}.
\label{eq:final-multiplicative-matrix}
\end{equation}
Here \(\sigma_{\mathrm{in}}^\diamond\) regroups
\((A\diamond C)_+\otimes(B\diamond D)_-\) as
\((A_+\otimes B_-)\otimes(C_+\otimes D_-)\), and
\(\sigma_{\mathrm{out}}^\diamond\) performs the inverse gathering on
the output axes.  These maps, exchange, and the corresponding axis
swaps for source duality are coefficient-one dagger-unitaries.  Hence
tensor, \(\parr\), exchange, and duality preserve dagger-unitarity of
the collected matrix.

More precisely, for \(h:A\to B\),
\begin{equation}
 M(h^*)=
 \sigma_{A_-,B_+}\,M(h)\,\sigma_{B_-,A_+}:
 B_-\otimes A_+\longrightarrow B_+\otimes A_-,
\label{eq:final-duality-matrix}
\end{equation}
where \(\sigma_{X,Y}:X\otimes Y\to Y\otimes X\).

\begin{lemma}[Contextual control]
\label{lem:final-control-matrix}
Let \(\Gamma\) be a finite raw context, let \(D,P\) be positive,
let \(f_i:\Gamma\mathbin{\cdot}D\to P\), and put
\(A:=\Gamma\mathbin{\cdot}D\) and
\(t:=\operatorname{ctrl}^{\Gamma}_n(f_1,\ldots,f_n)\).  There are fixed
coefficient-one dagger-unitary selector-row isomorphisms
\[
 \begin{aligned}
 \sigma_{\mathrm{in}}:\;&
 \mathbf H_n\otimes A_+
 \xrightarrow{\cong}\bigoplus_{i=1}^n(Q\otimes A_+),\\
 \sigma_{\mathrm{out}}:\;&
 \bigoplus_{i=1}^n(Q\otimes A_-\otimes P_+)
 \xrightarrow{\cong}A_-\otimes\mathbf H_n\otimes P_+
 \end{aligned}
\]
such that
\begin{equation}
 M(t)
 =
 \sigma_{\mathrm{out}}
 \left(
   \bigoplus_{i=1}^n
   (\id_Q\otimes M(f_i))
 \right)
 \sigma_{\mathrm{in}}.
\label{eq:final-control-matrix}
\end{equation}
If every \(M(f_i)\) is dagger-unitary, then so is \(M(t)\).
\end{lemma}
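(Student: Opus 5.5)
The plan is to compute \(M(t)\) directly from the carrier formula~\eqref{eq:QC-control-carrier}, pushing compact naming and polarized un-naming through the selector-row direct sum one block at a time.

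First, the boundary data are read off. Since the selector \(\mathbf R_n\) and \(P\) are positive, \((\mathbf R_n\otimes A)_+=\mathbf H_n\otimes A_+\) and \((\mathbf R_n\otimes A)_-=A_-\). Likewise \((\mathbf R_n\otimes P)_+=\mathbf H_n\otimes P_+\) and \((\mathbf R_n\otimes P)_-=I\). Up to fixed unitors, \(M(t)\) therefore has type \(\mathbf H_n\otimes A_+\to A_-\otimes\mathbf H_n\otimes P_+\), which matches the two sides of~\eqref{eq:final-control-matrix}.

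Next, naming and un-naming are pushed through the block sum. Tensor distributes over \(\oplus\) in normal form, and naming, \(\zeta_\Sigma\), and un-naming are all built from compact-structural maps together with \(\otimes\). Each of these is additive in the carrier, being composition with fixed maps and tensoring with identities in a \(\CC\)-linear category with biproducts. Applying \(\operatorname{name}\), then \(\zeta\), then \(\operatorname{unname}\) to the expression \(d_{n,P}^{-1}\bigl(\bigoplus_i(\id_Q\otimes\llbracket f_i\rrbracket)\bigr)d_{n,A}\) thus distributes over the \(n\) selector summands. On the \(i\)-th summand, the qudit port \(Q\) is positive on both sides and is carried by an identity. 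Its transport through name/unname is therefore a straight coefficient-one wire from the input axis to the output axis, and it contributes \(\id_Q\otimes(-)\). The remaining part is exactly \(\operatorname{unname}(\zeta\,\operatorname{name}\llbracket f_i\rrbracket)=M(f_i)\).

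All reorderings that appear are compositions of the displays \(d_{n,-}\), symmetries, and associators. They collect into \(\sigma_{\mathrm{in}}\) and \(\sigma_{\mathrm{out}}\), which are coefficient-one permutation-type isomorphisms and hence dagger-unitary. This step is the main obstacle. The crucial point is that selector retention keeps the qudit port \(Q\) in every branch, so the selector addresses sit on the positive axis on both input and output. The decomposition is then a genuine block-diagonal on pure axes, with no mixed execution that could create cycles as in Proposition~\ref{prop:omg-prime-h-witness}. I would verify this via the identity \(Y\cong Y_-^*\otimes Y_+\) applied summandwise, together with naturality of the compact structure in each branch. Compatibility with the global rows follows from~\eqref{eq:QC-selector-display-compatibility}.

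Finally, if each \(M(f_i)\) is dagger-unitary, then so is each \(\id_Q\otimes M(f_i)\), and therefore so is their direct sum, since dagger acts blockwise on biproducts. Composing with the dagger-unitary \(\sigma_{\mathrm{in}}\) and \(\sigma_{\mathrm{out}}\) preserves both Gram identities. Passing to the full numerical matrix then follows from Proposition~\ref{prop:first-order-representation}, because all axes involved are pure positive.
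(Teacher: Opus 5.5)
Your proposal is correct and takes essentially the same route as the paper. You apply naming and polarized un-naming to the carrier formula \eqref{eq:QC-control-carrier}, let each selector address pick out its own branch while the selector port survives as \(\id_Q\), absorb the reorderings into the coefficient-one maps \(\sigma_{\mathrm{in}},\sigma_{\mathrm{out}}\), and conclude from the orthogonality of distinct selector rows together with the premise Gram identities; your explicit linearity of name/unname is only added detail, and the closing passage to the numerical matrix is not needed for this lemma.
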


\begin{proof}
Apply polarized naming to \eqref{eq:QC-control-carrier}.  The selector
address chooses its unique branch, while the selector KL port remains
as \(\id_Q\).  Thus the middle map in
\eqref{eq:final-control-matrix} has blocks
\[
 \id_Q\otimes M(f_i):
 Q\otimes A_+
 \longrightarrow
 Q\otimes A_-\otimes P_+.
\]
The fixed maps \(\sigma_{\mathrm{in}}\) and
\(\sigma_{\mathrm{out}}\) only identify the selector rows and
reorder the polarized factors.  Orthogonality of distinct selector
rows and the two premise Gram identities prove the result.
\end{proof}

%-----------------------------------------------------------------------------
\subsection{Administrative representation}
\label{subsec:calculus-administrative}
%-----------------------------------------------------------------------------

Let \(\mathsf{Adm}_{\QC}\) be the least class of nonempty one-sided
expressions generated by the ordinary finite unit-free multiplicative
rules without Mix, together with exchange and explicit composition.
Rule contexts may be empty:
\[
 \frac{}{\vdash p^*,p}\ \mathsf{Ax},
 \qquad
 \frac{\pi:\vdash\Gamma,A\quad\theta:\vdash\Delta,B}
      {\vdash\Gamma,\Delta,A\otimes B}\ \otimes,
 \qquad
 \frac{\pi:\vdash\Gamma,A,B}
      {\vdash\Gamma,A\parr B}\ \parr,
\]
\[
 \frac{\pi:\vdash\Gamma}{\vdash\sigma\Gamma}\ \mathsf{Ex},
 \qquad
 \frac{\pi:\vdash\Gamma,D\quad
       \theta:\vdash D^*,\Delta}
      {\theta\circ_D\pi:\vdash\Gamma,\Delta}\ \mathsf{Comp}_D.
\]
Here \(p\) is a qudit literal; write \(\mathsf I_p\) for its axiom
expression. Compound identities are expanded recursively through their
outer \(\otimes\)- and \(\parr\)-nodes.

For shape-compatible positive \(P,Q\) and unitary
\(U:\CC[I_P]\to\CC[I_Q]\), add a gate coupon and its reflected
negative form:
\[
 \frac{}{\vdash P^*,Q}\ \mathsf{Gate}_{P,Q}(U),
 \qquad
 \frac{}{\vdash Q,P^*}\ \mathsf{Gate}_{P,Q}(U)^*.
\]
Their carrier points are the compact names of
\(\llbracket u_U^{P,Q}\rrbracket\) and
\(\llbracket (u_U^{P,Q})^*\rrbracket\), respectively.
The administrative control rule retains an arbitrary shared context:
\begin{equation}
 \frac{(\pi_i:\vdash\Delta,P)_{i\in[n]}}
      {\mathsf{Ctrl}_n^\Delta(\pi_1,\ldots,\pi_n):
       \vdash\Delta,\mathbf R_n^*,\mathbf R_n\otimes P}
 \qquad P\ \text{positive}.
\label{eq:final-administrative-control}
\end{equation}
The rule applies to every finite well-typed administrative family.
For source terms
\(f_i:\Gamma\mathbin{\cdot}D\to P\), take
\(\Delta=((\Gamma\mathbin{\cdot}D)^*)\), exchange
\(\mathbf R_n^*\) first, and restore the outer \(\parr\)-node:
\[
 \vdash\mathbf R_n^*\parr(\Gamma\mathbin{\cdot}D)^*,
        \mathbf R_n\otimes P
 =
 \vdash(\mathbf R_n\otimes(\Gamma\mathbin{\cdot}D))^*,
        \mathbf R_n\otimes P.
\]

Every \(\pi:\vdash\Sigma\) has a carrier point
\(a_\pi:I\to O(\Sigma)\).  The multiplicative clauses are standard.
For explicit composition, let
\[
 \alpha_{\Gamma,D,\Delta}:
 (O(\Gamma)\otimes O(D))
 \otimes(O(D^*)\otimes O(\Delta))
 \xrightarrow{\cong}
 O(\Gamma)\otimes
 (O(D)\otimes O(D)^*)\otimes O(\Delta)
\]
be the canonical coherence, using \(O(D^*)\cong O(D)^*\).  Then
\begin{equation}
 a_{\theta\circ_D\pi}
 =
 \bigl(
   \id_{O(\Gamma)}\otimes\epsilon_{O(D)}
   \otimes\id_{O(\Delta)}
 \bigr)
 \alpha_{\Gamma,D,\Delta}(a_\pi\otimes a_\theta).
\label{eq:final-administrative-composition}
\end{equation}

For control, let \(p_i:\mathbf H_n\to Q\) and
\(\iota_i:Q\to\mathbf H_n\) be the selector-row projection and
injection, and let
\[
 \chi_{\Delta,P}:
 (O(\Delta)\otimes O(P))\otimes(Q^*\otimes Q)
 \xrightarrow{\cong}
 O(\Delta)\otimes Q^*\otimes Q\otimes O(P)
\]
be the canonical symmetry.  Define
\[
 \begin{aligned}
 \beta_i
 &:=
 \bigl(
  \id_{O(\Delta)}\otimes p_i^*\otimes\iota_i
  \otimes\id_{O(P)}
 \bigr)\chi_{\Delta,P}\\
 &:
 O(\Delta)\otimes O(P)\otimes Q^*\otimes Q
 \longrightarrow
 O(\Delta)\otimes\mathbf H_n^*\otimes
 \mathbf H_n\otimes O(P).
 \end{aligned}
\]
Then
\begin{equation}
 a_{\mathsf{Ctrl}_n^\Delta(\vec\pi)}
 =
 \sum_{i=1}^n
 \beta_i\bigl(a_{\pi_i}\otimes\operatorname{name}(\id_Q)\bigr).
\label{eq:final-administrative-control-point}
\end{equation}
Gate points were specified above.  Define
\[
 M(\pi):=\mathcal Q_\Sigma(a_\pi):
 \mathsf A_\Sigma^-\longrightarrow\mathsf A_\Sigma^+.
\]
For \(\pi:\vdash\Gamma,A\) and \(\theta:\vdash\Delta,B\), let
\[
 \begin{aligned}
 \rho_{\mathrm{in}}:\;&
 \mathsf A_{\Gamma,\Delta,A\otimes B}^-
 \xrightarrow{\cong}
 \mathsf A_{\Gamma,A}^-\otimes\mathsf A_{\Delta,B}^-,
 \\
 \rho_{\mathrm{out}}:\;&
 \mathsf A_{\Gamma,A}^+\otimes\mathsf A_{\Delta,B}^+
 \xrightarrow{\cong}
 \mathsf A_{\Gamma,\Delta,A\otimes B}^+
 \end{aligned}
\]
be the coefficient-one axis coherences.  The tensor rule satisfies
\[
 M(\otimes(\pi,\theta))
 =
 \rho_{\mathrm{out}}
 (M(\pi)\otimes M(\theta))
 \rho_{\mathrm{in}}.
\]
The unary \(\parr\)-rule and exchange carry the premise matrix through
the corresponding coefficient-one axis identifications.
For \(\pi_i:\vdash\Delta,P\), let
\[
 \begin{aligned}
 s_{\mathrm{in}}^\Delta:\;&
 \mathsf A_\Delta^-\otimes\mathbf H_n
 \xrightarrow{\cong}
 \bigoplus_{i=1}^n(Q\otimes\mathsf A_\Delta^-),\\
 s_{\mathrm{out}}^\Delta:\;&
 \bigoplus_{i=1}^n
 (Q\otimes\mathsf A_\Delta^+\otimes P_+)
 \xrightarrow{\cong}
 \mathsf A_\Delta^+\otimes\mathbf H_n\otimes P_+ .
 \end{aligned}
\]
Then
\begin{equation}
 M(\mathsf{Ctrl}_n^\Delta(\vec\pi))
 =
 s_{\mathrm{out}}^\Delta
 \left(
   \bigoplus_{i=1}^n(\id_Q\otimes M(\pi_i))
 \right)
 s_{\mathrm{in}}^\Delta .
\label{eq:final-administrative-control-matrix}
\end{equation}
For \(\Delta=((\Gamma\mathbin{\cdot}D)^*)\),
\eqref{eq:final-administrative-control-point} is the compact name of
the contextual carrier map \eqref{eq:QC-control-carrier}.  The fixed
outer exchange and \(\parr\)-introduction specialize
\eqref{eq:final-administrative-control-matrix} to
\eqref{eq:final-control-matrix}.

\begin{lemma}[Administrative representation]
\label{lem:final-administrative-representation}
For every finite raw term \(t:A\to B\), there is a recursively
defined \(\mathcal T(t):\vdash A^*,B\) such that
\[
 a_{\mathcal T(t)}
 =
 \operatorname{name}(\llbracket t\rrbracket).
\]
\end{lemma}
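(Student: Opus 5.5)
The plan is structural recursion on the finite raw term \(t\), defining \(\mathcal T(t)\) constructor by constructor and checking the carrier-point equation at each clause. The reference identities are the carrier clauses of Definition~\ref{def:QC-semantics}, the administrative point clauses (the multiplicative ones, \eqref{eq:final-administrative-composition}, and \eqref{eq:final-administrative-control-point}), and the fact that \(\operatorname{name}\) and \(\operatorname{unname}\) are inverse.

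\emph{Base cases.} For a literal identity \(\id_p\), take \(\mathcal T(\id_p):=\mathsf I_p\); its point is \(\eta_{O(p)}=\operatorname{name}(\id)\) by definition. Compound identities and the remaining structural generators (symmetry, associativity, De Morgan maps, and the unit-free \(\eta_{X,A}\) and \(\ev_{A,B}\)) are cut-free unit-free MLL proofs of \(\vdash A^*,B\). Each such proof is chosen as the standard sequent translation, and its point agrees with the canonical carrier interpretation because \(\llbracket-\rrbracket\) interprets structural arrows canonically (Theorem~\ref{thm:basic-structure}). For gates, take \(\mathcal T(u_U^{P,Q}):=\mathsf{Gate}_{P,Q}(U)\), and use the reflected coupon for negative gates; both points hold by stipulation.

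\emph{Inductive cases.} For composition \(g\circ f\) with \(f:A\to C\) and \(g:C\to B\), set \(\mathcal T(g\circ f):=\mathcal T(g)\circ_C\mathcal T(f)\), with an exchange moving \(C^*\) first as needed. By \eqref{eq:final-administrative-composition}, the point is \(\operatorname{name}(\llbracket f\rrbracket)\otimes\operatorname{name}(\llbracket g\rrbracket)\) contracted by \(\epsilon_{O(C)}\). The snake identity in the compact closed carrier (Proposition~\ref{prop:well-defined-assoc}) shows that this equals \(\operatorname{name}(\llbracket g\rrbracket\llbracket f\rrbracket)\). No loop scalar appears, since this is an equation in \(\Perm(\CC)\) itself and the canonical trace is used consistently. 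For \(f\otimes g\), apply the \(\otimes\)-rule to \(\mathcal T(f)\) and \(\mathcal T(g)\), then an exchange and a \(\parr\)-rule to form \((A\otimes C)^*=A^*\parr C^*\). The point identity is naturality of the compact name under tensor together with the fixed coherences. \(\parr\) is the same argument, because both multiplicatives are interpreted by \(\otimes\). For duality \(f^*:B^*\to A^*\), take \(\mathcal T(f)\) followed by an exchange, using that the name of the compact transpose is the name of \(f\) composed with the symmetry.

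\emph{Control.} This is the main step. Set \(\mathcal T(\operatorname{ctrl}^\Gamma_n(\vec f))\) to be \(\mathsf{Ctrl}_n^\Delta(\mathcal T(f_1),\dots,\mathcal T(f_n))\) with \(\Delta=((\Gamma\cdot D)^*)\), followed by the outer exchange and \(\parr\)-introduction described after \eqref{eq:final-administrative-control}. It then remains to verify that \eqref{eq:final-administrative-control-point} is the compact name of \eqref{eq:QC-control-carrier}. The approach is to expand \(\id_{\mathbf H_n}=\sum_i\iota_i p_i\) using biproducts. Then \(d_{n,P}^{-1}\bigl(\bigoplus_i(\id_Q\otimes F_i)\bigr)d_{n,A}=\sum_i(\iota_i\otimes F_i)(p_i\otimes\id)\). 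Naming is linear, and the name of \((\iota_i\otimes F_i)(p_i\otimes\id)\) is \((p_i^*\otimes\iota_i\otimes\id)\) applied to \(\operatorname{name}(\id_Q)\otimes\operatorname{name}(F_i)\), which is precisely \(\beta_i\) after the symmetry \(\chi_{\Delta,P}\). The obstacle is bookkeeping: the selector-row displays \(d_{n,-}\) and the coherence ordering of factors must match \(\chi_{\Delta,P}\). I expect to discharge this by noting that every map involved is a coefficient-one canonical isomorphism, so that coherence for compact closed categories identifies the two composites.

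Finally, well-definedness is not at issue. \(\mathcal T\) is defined on terms, not on \(\equiv\)-classes, so only the raw-term recursion is needed, and rawness guarantees that every intermediate \(C\) in \(\mathsf{Comp}_C\) is a raw type.
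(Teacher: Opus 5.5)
Your proposal is correct and follows the paper's own proof. Both proceed by structural recursion on the raw term: axioms and multiplicative rules for the structural clauses, gate coupons, \(\circ_C\) for composition, \(\mathsf{Ex}\) for duality, and \(\mathsf{Ctrl}_n^\Delta\) with \(\Delta=((\Gamma\mathbin{\cdot}D)^*)\) followed by the fixed exchange and outer \(\parr\)-introduction, with compact coherence closing each case. Your control check, which expands \(\id_{\mathbf H_n}=\sum_i\iota_ip_i\) and uses dinaturality of \(\eta\) to produce the summands \(\beta_i(a_{\pi_i}\otimes\operatorname{name}(\id_Q))\), simply spells out what the paper compresses into ``compact coherence and \eqref{eq:QC-control-carrier}''.
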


\begin{proof}
Induct on the chosen finite raw term.  Translate the unit-free
structural clauses by the ordinary multiplicative rules and each
register gate by its coupon.  Composition becomes \(\circ_D\);
tensor, \(\parr\), exchange, and duality use their corresponding
rules.  In particular,
\[
 \mathcal T(t^*):=\mathsf{Ex}(\mathcal T(t)):\vdash B,A^*.
\]
For
\(t=\operatorname{ctrl}^{\Gamma}_n(f_1,\ldots,f_n)\), the induction
hypothesis gives
\[
 \mathcal T(f_i):
 \vdash(\Gamma\mathbin{\cdot}D)^*,P.
\]
Apply \eqref{eq:final-administrative-control} with this shared context,
then use the fixed exchange and outer \(\parr\)-introduction displayed
above.  Compact coherence and
\eqref{eq:QC-control-carrier} prove the asserted equality in every
case.
\end{proof}

%-----------------------------------------------------------------------------
\subsection{Pure-intermediate normal form}
\label{subsec:calculus-normal-form}
%-----------------------------------------------------------------------------

We first isolate the composition law used after normalization.

\begin{lemma}[Composition through a pure interface]
\label{lem:final-pure-composition}
Let
\(\pi:\vdash\Gamma,D\) and
\(\theta:\vdash D^*,\Delta\). The canonical coherence maps have types
\[
 \begin{aligned}
 \gamma_+:
 &(\mathsf A_\Gamma^+\otimes\mathsf A_D^+)
   \otimes\mathsf A_\Delta^-
 \xrightarrow{\cong}
 \mathsf A_\Gamma^+\otimes
   (\mathsf A_D^+\otimes\mathsf A_\Delta^-),\\
 \gamma_-:
 &\mathsf A_\Gamma^-\otimes
   (\mathsf A_D^-\otimes\mathsf A_\Delta^+)
 \xrightarrow{\cong}
 (\mathsf A_\Gamma^-\otimes\mathsf A_D^-)
   \otimes\mathsf A_\Delta^+.
 \end{aligned}
\]
If \(D\) is positive, then
\begin{equation}
 M(\theta\circ_D\pi)
 =
 (\id_{\mathsf A_\Gamma^+}\otimes M(\theta))
 \gamma_+
 (M(\pi)\otimes\id_{\mathsf A_\Delta^-}).
\label{eq:final-positive-composition}
\end{equation}
If \(D\) is negative, then
\begin{equation}
 M(\theta\circ_D\pi)
 =
 (M(\pi)\otimes\id_{\mathsf A_\Delta^+})
 \gamma_-
 (\id_{\mathsf A_\Gamma^-}\otimes M(\theta)).
\label{eq:final-negative-composition}
\end{equation}
Consequently, composition through a pure intermediate preserves
dagger-unitarity.
\end{lemma}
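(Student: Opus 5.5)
We prove \eqref{eq:final-positive-composition} directly from the carrier formula \eqref{eq:final-administrative-composition} for \(a_{\theta\circ_D\pi}\) and the definition \(M(-)=\mathcal Q_\Sigma(-)=\operatorname{unname}(\zeta_\Sigma(-))\). The negative case then follows by the axis swap \eqref{eq:final-duality-matrix}.

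\textbf{Axis decomposition for \(D\) positive.} Here \(\mathsf A_D^-=I\) and \(\mathsf A_D^+=O(D)\), and \(\mathsf A_{D^*}^-=O(D)\), \(\mathsf A_{D^*}^+=I\). Hence
\[
 M(\pi):\mathsf A_\Gamma^-\longrightarrow\mathsf A_\Gamma^+\otimes O(D),
 \qquad
 M(\theta):O(D)\otimes\mathsf A_\Delta^-\longrightarrow\mathsf A_\Delta^+ .
\]
So \(D\) lies entirely on the output axis of \(\pi\) and entirely on the input axis of \(\theta\).

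\textbf{Reduction to string-diagram coherence.} Substitute \(a_\pi=\operatorname{name}(M(\pi))\) and \(a_\theta=\operatorname{name}(M(\theta))\), each up to \(\zeta^{-1}\), into \eqref{eq:final-administrative-composition}. The counit \(\epsilon_{O(D)}\) then joins the \(O(D)\) output wire of \(M(\pi)\) to the \(O(D)\) input wire of \(M(\theta)\). By the snake equations, this contraction of names is the name of the sequential composite \((\id\otimes M(\theta))\gamma_+(M(\pi)\otimes\id)\). What remains is to check that the polarizations \(\zeta_{\Gamma,D}\), \(\zeta_{D^*,\Delta}\) and \(\zeta_{\Gamma,\Delta}\), together with \(\alpha_{\Gamma,D,\Delta}\), fit together. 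All of these are coefficient-one compact-structural isomorphisms. By Kelly--Laplaza coherence, any two such maps with the same typed port matching are equal, and the canonical choice is the associator \(\gamma_+\). The whole identity is therefore an equation between compact-structural composites, settled by coherence together with the yanking identities.

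\textbf{Loop scalars.} One point must be checked: the contraction must create no closed cycle, since otherwise a factor \(d^{\ell}\) would appear. Because \(D\) is pure, every port on the contracted interface has the same polarity. Each wire crossing \(\epsilon_{O(D)}\) therefore enters from the output of \(M(\pi)\) and leaves into the input of \(M(\theta)\), so no cycle can form. This is the argument of Proposition~\ref{prop:first-order-representation} applied to the pure-positive intermediate, and it is why ordinary matrix composition (no \(\delta\)-twist) appears. I expect this to be the main obstacle, because it is exactly what fails for mixed \(D\) (Proposition~\ref{prop:omg-prime-h-witness}).

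\textbf{Negative case and the consequence.} For negative \(D\), dualize: \(D^*\) is positive, and the roles of \(\pi\) and \(\theta\) are exchanged via \eqref{eq:final-duality-matrix}. This yields \eqref{eq:final-negative-composition}, with \(D\) now on the input axis of \(M(\pi)\) and the output axis of \(M(\theta)\). In both cases \(M(\theta\circ_D\pi)\) is a composite of three maps:
\begin{itemize}
 \item a tensor of dagger-unitaries with identities;
 \item the coefficient-one coherence \(\gamma_\pm\);
 \item a second tensor of dagger-unitaries with identities.
\end{itemize}
Tensor and composition preserve the two carrier Gram identities. By Proposition~\ref{prop:first-order-representation}, so does the fixed-width numerical matrix \(\mathsf B_r\). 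Hence composition through a pure intermediate preserves dagger-unitarity.
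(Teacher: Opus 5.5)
Your proposal is correct and follows the paper's own proof: contract the two \(O(D)\) occurrences in \eqref{eq:final-administrative-composition}, use \(\mathsf A_D^-=I\) and the snake equations (with \(\gamma_+\) the fixed associator) for positive \(D\), and obtain negative \(D\) by compact transposition. Your role swap does this, though it rests on sequent exchange and symmetry of the counit rather than the source-duality formula \eqref{eq:final-duality-matrix}; the Gram identities are then read off from the resulting composite. Your separate loop check is harmless but unnecessary, because the snake-equation computation is an exact identity in the compact closed carrier; it is also not what breaks in Proposition~\ref{prop:omg-prime-h-witness}, where no cycle forms ($\bar w^2=\bar w$) and the failure is that contraction through a mixed interface is a feedback rather than a sequential matrix composite.
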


\begin{proof}
Formula~\eqref{eq:final-administrative-composition} contracts the two
occurrences of \(O(D)\).  Substitute
\(O(D)\cong D_-^*\otimes D_+\).  If \(D\) is positive, then
\(D_-=I\), and the compact snake equations give
\eqref{eq:final-positive-composition}.  If \(D\) is negative, then
\(D_+=I\), and compact transposition gives
\eqref{eq:final-negative-composition}.  The maps \(\gamma_\pm\) are
the fixed associator and inverse associator, with unitors suppressed.
When \(M(\pi)\) and \(M(\theta)\) are dagger-unitary, each
right-hand side is a composite of their dagger-unitary tensor extensions
and coefficient-one coherence.
\end{proof}

Every composition subscript below selects the contracted occurrences.
Fixed exchanges put \(D\) last in the first premise and \(D^*\) first
in the second, and a final fixed exchange restores the displayed order
of the untouched contexts.

The directed relation \(\rightsquigarrow\) is closed under
administrative rule contexts and generated by the following typed
schemata.  For a qudit literal \(p\),
\[
 \theta\circ_p\mathsf I_p\rightsquigarrow\theta,
 \qquad
 \mathsf I_p\circ_p\pi\rightsquigarrow\pi.
\]
For \(\pi:\vdash\Gamma,A\), \(\rho:\vdash\Lambda,B\),
\(\theta:\vdash A^*,B^*,\Delta\), \(\xi:\vdash\Gamma,A,B\),
\(\lambda:\vdash A^*,\Delta\), and
\(\mu:\vdash B^*,\Lambda\), the principal conversions are
\begin{align}
 \parr_{A^*,B^*}(\theta)
 \circ_{A\otimes B}\otimes(\pi,\rho)
 &\rightsquigarrow
 (\theta\circ_B\rho)\circ_A\pi,
 \label{eq:final-principal-tensor-composition}\\
 \otimes(\lambda,\mu)
 \circ_{A\parr B}\parr_{A,B}(\xi)
 &\rightsquigarrow
 \mu\circ_B(\lambda\circ_A\xi).
 \label{eq:final-principal-par-composition}
\end{align}
For \(R\in\{\otimes,\parr,\mathsf{Ex}\}\), nonprincipal naturality is
\begin{align}
 \theta\circ_D R(\ldots,\pi,\ldots)
 &\rightsquigarrow
 R(\ldots,\theta\circ_D\pi,\ldots),
 \label{eq:final-left-nonprincipal-composition}\\
 R(\ldots,\theta,\ldots)\circ_D\pi
 &\rightsquigarrow
 R(\ldots,\theta\circ_D\pi,\ldots),
 \label{eq:final-right-nonprincipal-composition}
\end{align}
when the selected \(D\), respectively \(D^*\), is auxiliary for
\(R\).  Fixed formula-tree coherences are first expanded into these
rules.

Suppose that \(E\) is pure and
\(\beta\circ_E\alpha\) is an already-normalized internal composition.
If the selected \(D\)-occurrence lies in \(\alpha\), respectively
\(\beta\), then
\begin{align}
 \theta\circ_D(\beta\circ_E\alpha)
 &\rightsquigarrow
 \beta\circ_E(\theta\circ_D\alpha),
 \label{eq:final-reassociate-left-first}\\
 \theta\circ_D(\beta\circ_E\alpha)
 &\rightsquigarrow
 (\theta\circ_D\beta)\circ_E\alpha.
 \label{eq:final-reassociate-left-second}
\end{align}
If the selected \(D^*\)-occurrence lies in \(\alpha\), respectively
\(\beta\), then
\begin{align}
 (\beta\circ_E\alpha)\circ_D\pi
 &\rightsquigarrow
 \beta\circ_E(\alpha\circ_D\pi),
 \label{eq:final-reassociate-right-first}\\
 (\beta\circ_E\alpha)\circ_D\pi
 &\rightsquigarrow
 (\beta\circ_D\pi)\circ_E\alpha.
 \label{eq:final-reassociate-right-second}
\end{align}

For \(P\) positive, \(\pi_i:\vdash\Gamma,D,P\), and
\(\theta:\vdash D^*,\Delta\), contextual-control naturality gives
\begin{equation}
 \theta\circ_D
 \mathsf{Ctrl}_n^{(\Gamma,D)}(\pi_1,\ldots,\pi_n)
 \rightsquigarrow
 \mathsf{Ctrl}_n^{(\Gamma,\Delta)}
 \bigl(
   \theta\circ_D\pi_1,\ldots,\theta\circ_D\pi_n
 \bigr).
\label{eq:final-control-composition-naturality}
\end{equation}
For \(\rho:\vdash\Gamma,D\) and
\(\lambda_i:\vdash D^*,\Delta,P\),
\begin{equation}
 \mathsf{Ctrl}_n^{(D^*,\Delta)}
   (\lambda_1,\ldots,\lambda_n)\circ_D\rho
 \rightsquigarrow
 \mathsf{Ctrl}_n^{(\Gamma,\Delta)}
 \bigl(
   \lambda_1\circ_D\rho,\ldots,\lambda_n\circ_D\rho
 \bigr).
\label{eq:final-control-composition-naturality-op}
\end{equation}
All schemata range over their typed fixed-exchange variants.  Let
\(\longleftrightarrow\) be the equivalence relation generated by
\(\rightsquigarrow\).
\begin{lemma}[Carrier soundness]
\label{lem:final-administrative-soundness}
If \(\pi\longleftrightarrow\pi'\), then
\[
 a_\pi=a_{\pi'},
 \qquad
 M(\pi)=M(\pi')
\]
in \(\Perm(\CC)\).
\end{lemma}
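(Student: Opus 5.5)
Since \(M(\pi)=\mathcal Q_\Sigma(a_\pi)\) is a function of the carrier point \(a_\pi\) once the conclusion sequent \(\Sigma\) is fixed, and \(\rightsquigarrow\) preserves conclusion sequents up to the fixed exchanges, it suffices to prove \(a_\pi=a_{\pi'}\). The equality of matrices then follows by applying \(\operatorname{unname}(\zeta_\Sigma-)\). Exchanges change \(\zeta_\Sigma\) only by a fixed coefficient-one permutation, and this permutation is absorbed into the same axis identification on both sides.

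The plan is to reduce from \(\longleftrightarrow\) to single steps and then to generating schemata. Equality is symmetric and transitive, so the reduction to \(\rightsquigarrow\) is immediate. Closure under administrative rule contexts follows because each rule's carrier point is a fixed carrier operation on the premise points. For \(\otimes\), \(\parr\), \(\mathsf{Ex}\) and \(\mathsf{Comp}_D\) these operations are tensor followed by coherence, or by \eqref{eq:final-administrative-composition}. For \(\mathsf{Ctrl}\) the operation is the sum \eqref{eq:final-administrative-control-point}. Each of these is a well-defined function of the premise points, so equal premise points give equal conclusions.

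I would then check each schema in \(\Perm(\CC)\) using compact-closed reasoning on names.
\begin{itemize}
\item \textbf{Axiom cuts.} The axiom point is \(\operatorname{name}(\id)\), so composing with it is a snake equation.
\item \textbf{Principal conversions.} Apply \eqref{eq:final-principal-tensor-composition} and \eqref{eq:final-principal-par-composition}, using \(\epsilon_{O(A)\otimes O(B)}=(\epsilon_{O(A)}\otimes\epsilon_{O(B)})\circ\text{symmetry}\). This identity holds because the carrier interprets both \(\otimes\) and \(\parr\) by the single tensor with fixed De Morgan coherence.
\item \textbf{Nonprincipal commutations and reassociations.} Rules \eqref{eq:final-left-nonprincipal-composition}--\eqref{eq:final-reassociate-right-second} follow from bifunctoriality of \(\otimes\): contracting disjoint tensor factors commutes with contracting others, and with rearrangements that do not touch them. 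These are the symmetric monoidal coherence facts, and they hold because the KL composition of Proposition~\ref{prop:well-defined-assoc} is a genuine dagger compact category.
\end{itemize}

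The control naturalities \eqref{eq:final-control-composition-naturality} and \eqref{eq:final-control-composition-naturality-op} are where I expect the real work. The contraction \(\id\otimes\epsilon_{O(D)}\otimes\id\) is \(\CC\)-linear, so it distributes over the sum in \eqref{eq:final-administrative-control-point}. It also acts on factors disjoint from those touched by \(p_i^*\otimes\iota_i\) and \(\operatorname{name}(\id_Q)\), so it commutes with each \(\beta_i\) up to the symmetries \(\chi\). This yields \(\sum_i\beta_i(a_{\theta\circ_D\pi_i}\otimes\operatorname{name}(\id_Q))\).

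The main obstacle is bookkeeping. One must verify that the symmetry \(\chi_{\Delta,P}\) after contraction agrees with \(\chi_{(\Gamma,\Delta),P}\) composed with the fixed exchanges of the schema. One must also check that no closed loop arises, so no factor of \(d\) appears. This holds because the contracted \(D\)-ports are disjoint from the selector's \(Q^*\otimes Q\) wire, and every path through them ends in the external boundary of \(\pi_i\) or \(\theta\).

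For the loop question I would work basiswise. Via \eqref{eq:delta-twisted-basis-composition}, the cycle count \(\ell\) of each glued pair is the same on both sides, since both sides glue the same basis matchings along the same ports.
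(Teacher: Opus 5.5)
Your proposal is correct and follows essentially the paper's route. The multiplicative, axiom, and reassociation conversions are compact-coherence, naturality, and associativity identities in \(\Perm(\CC)\); the control conversions are a direct calculation on the sum \eqref{eq:final-administrative-control-point}; and \(M(\pi)=M(\pi')\) is then immediate from the definition of \(M\). The only differences are minor: you move the contraction past each \(\beta_i\) by linearity and interchange, whereas the paper compares selector rows using \(p_i\iota_i=\id_Q\) and \(p_i\iota_j=0\). Your closed-loop worry is unnecessary, and your path argument does not justify the claim that no loop arises, but every step is an equation in the dagger compact category \(\Perm(\CC)\), which already accounts for any cycle factors on both sides.
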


\begin{proof}
The first four conversion classes follow from compact coherence,
monoidality of evaluation and coevaluation, naturality, and
associativity of composition. For either control conversion, calculate
in selector-row coordinates. On both sides, row \(i\) is the same
carrier contraction through \(O(D)\), and
\[
 p_i\iota_i=\id_Q,
 \qquad
 p_i\iota_j=0\quad(i\ne j).
\]
Hence the carrier points agree entrywise. The matrix equality follows
from the definition of \(M\).
\end{proof}

For an administrative rule \(R\), put
\[
 \operatorname{ht}(R(\pi_1,\ldots,\pi_s))
 =
 \begin{cases}
  0,&s=0,\\
  1+\max_i\operatorname{ht}(\pi_i),&s>0.
 \end{cases}
\]
Fixed exchanges used only to expose and restore selected occurrences
do not contribute to \(\operatorname{ht}\); explicit exchange rules do.
After expanding the fixed multiplicative macros, let \(|D|\) be the
number of \(\otimes\)- and \(\parr\)-nodes in \(D\).

\begin{theorem}[Pure-intermediate normal form]
\label{thm:final-pure-intermediate-normal-form}
Every \(\pi:\vdash\Sigma\) in \(\mathsf{Adm}_{\QC}\) has an
administrative expression \(\pi^{\mathrm p}:\vdash\Sigma\) such that
\[
 a_\pi=a_{\pi^{\mathrm p}},
 \qquad
 M(\pi)=M(\pi^{\mathrm p}),
\]
and every explicit composition in \(\pi^{\mathrm p}\) has a positive
or negative intermediate type.
\end{theorem}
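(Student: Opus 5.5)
The plan is a terminating rewriting argument using the directed relation \(\rightsquigarrow\). Carrier soundness (Lemma~\ref{lem:final-administrative-soundness}) already gives \(a_\pi=a_{\pi'}\) and \(M(\pi)=M(\pi')\) whenever \(\pi\longleftrightarrow\pi'\). So it suffices to show that every \(\pi\) reaches, by finitely many \(\rightsquigarrow\)-steps, an expression in which every explicit \(\mathsf{Comp}_D\) has pure \(D\). I would proceed by structural induction on \(\pi\), normalizing premises first. The only nontrivial case is \(\pi=\theta\circ_D\pi_0\) with \(\pi_0,\theta\) already pure-intermediate normal and \(D\) mixed. For that case I define a procedure \(\mathrm{cut}_D(\theta,\pi_0)\) by lexicographic induction on \((|D|,\operatorname{ht}(\pi_0)+\operatorname{ht}(\theta))\). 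Its output is a pure-intermediate normal expression with the same carrier point.

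The case analysis follows the last rule of \(\pi_0\) or \(\theta\) at the selected occurrence. First suppose the occurrence is auxiliary in an \(\otimes\), \(\parr\) or \(\mathsf{Ex}\) rule. Then I commute the cut upward by \eqref{eq:final-left-nonprincipal-composition} or \eqref{eq:final-right-nonprincipal-composition}. The height sum drops and \(|D|\) is unchanged, so the induction hypothesis applies.

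Next suppose the occurrence sits inside an already-normalized pure composition \(\beta\circ_E\alpha\). I apply one of \eqref{eq:final-reassociate-left-first}--\eqref{eq:final-reassociate-right-second} and recurse into \(\alpha\) or \(\beta\). The outer \(\circ_E\) stays pure, and the height sum decreases because the other component is removed from the measure.

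If the selected occurrence lies in the shared context of a \(\mathsf{Ctrl}\), I push the cut into every branch by \eqref{eq:final-control-composition-naturality} or \eqref{eq:final-control-composition-naturality-op}. This produces \(n\) cuts of smaller height with the same \(D\), and each is normalized by induction. Here the positivity of \(P\) matters: a mixed \(D\) can never be the selector \(\mathbf R_n^*\) or the output \(\mathbf R_n\otimes P\), because those are pure. So the control case is always nonprincipal.

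Gate coupons and axioms have only pure conclusions, so they cannot carry a mixed \(D\). The remaining situation is that \(D\) is principal on both sides. A mixed \(D\) is compound, so this must be an \(\otimes/\parr\) pair. I apply \eqref{eq:final-principal-tensor-composition} or \eqref{eq:final-principal-par-composition}, which replaces the cut by two cuts on the strictly smaller formulas \(A,B\). These are handled by the outer induction on \(|D|\). Any pure subformula cut is left in place, and the axiom rules remove identity cuts on literals.

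The main obstacle is making the measure genuinely decrease. Two steps threaten it. The reassociation steps can move the mixed cut inside an inner cut whose components are themselves outputs of earlier normalization, so their heights are not controlled by the original term. The control naturality step duplicates \(\theta\), which resembles contraction. For duplication, I would argue that each copy is cut against a branch \(\pi_i\) of strictly smaller height, and induct on the multiset of heights rather than their sum. For reassociation, I would first try defining normalization by an inner induction on the height of the already-normal partner, treating it as a tree of pure cuts whose leaves are rule-headed expressions; the mixed cut then descends to a leaf in finitely many steps. If that fails, I would fall back on the standard cut-elimination measure, ordering by the multiset of mixed cut degrees \(|D|\) with height as tiebreaker. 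This is available because pure cuts are never eliminated and so never contribute to the measure.
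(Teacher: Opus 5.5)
Your proposal is correct and follows the paper's proof essentially step for step. The paper also uses structural induction on \(\pi\) after normalizing premises. It handles the root composition \(\theta\circ_D\rho\) by lexicographic induction on \((|D|,\operatorname{ht}(\rho)+\operatorname{ht}(\theta))\), using principal decomposition, nonprincipal commutation, reassociation past already-pure cuts, and the two control-naturality rules, with carrier soundness transporting \(a_\pi\) and \(M(\pi)\).

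The worries in your last paragraph are unnecessary. That induction ranges over all pairs of already-normal premises, so each recursive call has strictly smaller measure by itself. This covers each of the \(n\) control copies, and also the cut \(\theta\circ_D\alpha\) after reassociation, since \(\operatorname{ht}(\alpha)<\operatorname{ht}(\beta\circ_E\alpha)\). Your first-choice measure therefore already works, and neither the multiset ordering nor the fallback is needed.
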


\begin{proof}
First consider a root composition \(\theta\circ_D\rho\), where
\(\rho:\vdash\Gamma,D\) and \(\theta:\vdash D^*,\Delta\), whose
premises already contain only pure-intermediate compositions.
Coherence of the structural typed-component presentation lets us
work in the fixed formula-tree coordinates above: reassociation,
symmetry, and duality introduce no additional last-rule
cases~\cite{BCST1996}.
The last-rule cases split exhaustively between the
\(*\)-autonomous structure---identity, principal
tensor--\(\parr\) decomposition, auxiliary commutation, and
reassociation---and the register-control structure: register gates,
the two control-naturality orientations, and controlled input and
output.
Use lexicographic induction on
\[
 \bigl(|D|,
       \operatorname{ht}(\rho)+\operatorname{ht}(\theta)\bigr).
\]

If \(D\) is pure, retain the composition.  Otherwise \(D\) has an
outer \(\otimes\)- or \(\parr\)-node, since every qudit literal is pure.
When the adjacent rules are principal, multiplicative decomposition
replaces the distinguished composition by compositions through proper
subformulas and lowers \(|D|\).  Otherwise naturality moves it into a
strict premise and lowers the height sum.  When the adjacent rule is an
internal composition, reassociation leaves that already-pure
composition outside and moves the distinguished mixed composition into
a strict premise, again lowering the height sum.

If the distinguished occurrence lies in the shared context of a
control rule, Equations~\eqref{eq:final-control-composition-naturality}
and~\eqref{eq:final-control-composition-naturality-op} create \(n\)
separate root-composition problems.  Each lies in a strict
premise and has smaller height; the family is finite.  A composition
through the controlled input
\(\mathbf R_n\otimes(\Gamma\mathbin{\cdot}D)\) first decomposes
through \(\mathbf R_n\) and \(\Gamma\mathbin{\cdot}D\).  The
selector is positive, and the other occurrence is then in the shared
context.  The controlled output \(\mathbf R_n\otimes P\) is positive.
Literal axioms and gate coupons
expose only pure principal formulas.  This proves the root-composition
claim.

The exchanged presentation of a source dual contains the same control
occurrence and is covered by the fixed-exchange variants above.

Now use structural induction on \(\pi\).  Normalize its premises and,
when its final rule is an explicit composition, apply the root claim.
Lemma~\ref{lem:final-administrative-soundness} preserves the carrier
point and boundary matrix throughout.
\end{proof}

%-----------------------------------------------------------------------------
\subsection{Essential unitarity}
\label{subsec:calculus-EU}
%-----------------------------------------------------------------------------

\begin{proposition}[Pure-intermediate unitarity]
\label{prop:final-pure-intermediate-unitarity}
If every explicit composition in
\(\pi:\vdash\Sigma\) has a pure intermediate type, then
\[
 M(\pi)^\dagger M(\pi)=\id_{\mathsf A_\Sigma^-},
 \qquad
 M(\pi)M(\pi)^\dagger=\id_{\mathsf A_\Sigma^+}.
\]
\end{proposition}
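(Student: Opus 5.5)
We argue by structural induction on the administrative expression \(\pi\), proving both Gram identities for \(M(\pi)\) at every node. The hypothesis that every explicit composition has a pure intermediate is inherited by all subexpressions, so the induction hypothesis applies to every premise. In each case we show that \(M(\pi)\) is a composite of the premise matrices (tensored with identities) and fixed coefficient-one dagger isomorphisms. Such a composite is dagger-unitary once the premise matrices are.

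\emph{Leaves.} For a literal axiom \(\mathsf I_p\), Lemma~\ref{lem:final-multiplicative-base} applies: the axiom is a correct unit-free no-Mix net, so \(M(\mathsf I_p)\) is a coefficient-one permutation matrix, the identity on the \(n\) qudit rows with a single KL port. For a gate coupon \(\mathsf{Gate}_{P,Q}(U)\), the carrier point is the compact name of \(\llbracket u_U^{P,Q}\rrbracket\). Hence \(M\) of the coupon is \(M(u_U^{P,Q})\), which is dagger-unitary by Lemma~\ref{lem:final-register-gates}. The reflected form \(\mathsf{Gate}_{P,Q}(U)^*\) is handled by the dual clause of that lemma together with \eqref{eq:final-duality-matrix}.

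\emph{Multiplicative rules and exchange.} For the tensor rule we use the displayed identity \(M(\otimes(\pi,\theta))=\rho_{\mathrm{out}}(M(\pi)\otimes M(\theta))\rho_{\mathrm{in}}\). Since \(\rho_{\mathrm{in}},\rho_{\mathrm{out}}\) are coefficient-one dagger isomorphisms, the Gram identities follow from those of \(M(\pi)\otimes M(\theta)\), which hold because \((X\otimes Y)^\dagger=X^\dagger\otimes Y^\dagger\) in the dagger compact carrier. The unary \(\parr\) rule and exchange only conjugate the premise matrix by coefficient-one axis identifications, so unitarity transfers directly.

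\emph{Control.} For \(\mathsf{Ctrl}_n^\Delta(\vec\pi)\) we use \eqref{eq:final-administrative-control-matrix}. Each block \(\id_Q\otimes M(\pi_i)\) is dagger-unitary by induction. A biproduct of dagger-unitaries is dagger-unitary, using \(p_i\iota_j=\delta_{ij}\) and \(\sum_i\iota_ip_i=\id\). The maps \(s^\Delta_{\mathrm{in}},s^\Delta_{\mathrm{out}}\) are coefficient-one dagger isomorphisms, which completes this case.

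\emph{Composition.} This is the only case that actually uses the hypothesis. Since \(D\) is pure, Lemma~\ref{lem:final-pure-composition} expresses \(M(\theta\circ_D\pi)\) by \eqref{eq:final-positive-composition} or \eqref{eq:final-negative-composition}: a composite of \(M(\pi)\) and \(M(\theta)\), each tensored with an identity, and the associators \(\gamma_\pm\). Both Gram identities then follow at once from the induction hypothesis.

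The main obstacle is bookkeeping rather than a new idea. In each case one must check that the axis isomorphisms (\(\rho\), \(s^\Delta\), \(\gamma_\pm\), exchange permutations) are genuinely coefficient-one, so that they contribute no loop scalar \(d^\kappa\). One must also check that the unitors suppressed in empty-factor cases (\(\mathsf A^\pm=I\)) do not break the two-sided identities. We would discharge this uniformly by observing that every such map is built from symmetry, associativity, and distributivity reindexings of pure-positive factors. Pure-positive factors have no opposite-polarity ports, so by Proposition~\ref{prop:first-order-representation} no cycles can form, and these maps are unitary permutations.
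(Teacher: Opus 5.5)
Your proposal is correct and follows the paper's proof. Both argue by structural induction on the administrative expression. Axioms and gate coupons (reflected coupons via duality) are the base cases. The tensor, \(\parr\), exchange and control rules are handled by their displayed axis and block formulas, and explicit composition is discharged by Lemma~\ref{lem:final-pure-composition} using purity of the intermediate. Your closing remark that all axis coherences are coefficient-one isomorphisms between pure-positive factors, and hence loop-free dagger isomorphisms, only makes explicit what the paper leaves implicit.
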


\begin{proof}
Induct on the administrative expression.  In the qudit-row display, an
atomic axiom is diagonal, with one coefficient-one identity in every
row.  The administrative tensor, \(\parr\), and exchange rules act by
the sequent-axis formulas displayed above.  Shape-compatible gate coupons are covered by
Lemma~\ref{lem:final-register-gates}, and reflected coupons by duality.
The administrative control rule has the block form
\eqref{eq:final-administrative-control-matrix}.  Source duals introduce
only the fixed exchange and obey \eqref{eq:final-duality-matrix}.
At explicit composition, the
intermediate is pure, so Lemma~\ref{lem:final-pure-composition} applies
to the two premise matrices.  These are all administrative rules.
\end{proof}

\begin{theorem}[Essential unitarity of \(\QC\)]
\label{thm:calculus-EU}
For every \(h:A\to B\) in \(\QC\), the polarized carrier matrix
\(M(h)\) depends only on \(h\) and satisfies
\[
 M(h)^\dagger M(h)=\id_{A_+\otimes B_-},
 \qquad
 M(h)M(h)^\dagger=\id_{A_-\otimes B_+}.
\]
Consequently,
\[
 \mathbf M(h)^\dagger\mathbf M(h)
 =\id_{\CC[N_{(A^*,B)}^-]\otimes\CC[r_h]},
 \qquad
 \mathbf M(h)\mathbf M(h)^\dagger
 =\id_{\CC[N_{(A^*,B)}^+]\otimes\CC[r_h]},
\]
so \(\operatorname{EU}(h)\).
\end{theorem}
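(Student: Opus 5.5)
The plan is to assemble the results of this section in sequence. First I show that \(M(h)\) depends only on \(h\). The carrier functor of Theorem~\ref{thm:basic-structure} makes \(\llbracket h\rrbracket\) independent of the chosen raw representative. Hence \(a_h=\operatorname{name}(\llbracket h\rrbracket)\) is well defined. Then \(M(h)=\mathcal Q_{(A^*,B)}(a_h)\) is determined, because \(\zeta\) and \(\operatorname{unname}\) are fixed coherence operations.

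Next I fix a finite raw representative \(t\) of \(h\). I apply Lemma~\ref{lem:final-administrative-representation} to obtain \(\mathcal T(t):\vdash A^*,B\) with \(a_{\mathcal T(t)}=a_h\). By definition of \(M(\pi)\), this gives \(M(\mathcal T(t))=M(h)\). Theorem~\ref{thm:final-pure-intermediate-normal-form} then yields \(\mathcal T(t)^{\mathrm p}\) with the same carrier point and matrix, in which every explicit composition passes through a pure intermediate. Proposition~\ref{prop:final-pure-intermediate-unitarity} applied to \(\mathcal T(t)^{\mathrm p}\) gives the two carrier Gram identities. Their index objects are \(\mathsf A_{(A^*,B)}^-=\mathsf A_{A^*}^-\otimes\mathsf A_B^-\) and \(\mathsf A_{(A^*,B)}^+\). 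By the polarity-swapping clauses for \(\mathsf A_{\mathbf H_n^*}^\pm\), these are canonically \(A_+\otimes B_-\) and \(A_-\otimes B_+\), via coefficient-one identifications that I suppress.

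Finally I pass to the numerical matrix. Lemma~\ref{lem:final-boundary-balance} gives \(k^-_{(A^*,B)}=k^+_{(A^*,B)}=r_h\geq1\), so \(\mathbf M(h)=\mathsf B_{r_h}(d^+M(h)(d^-)^{-1})\) is defined. The displays \(d^\pm\) are coefficient-one dagger isomorphisms, so the conjugated morphism still satisfies both Gram identities in the fixed \(r_h\)-sector. Proposition~\ref{prop:first-order-representation} (\(\mathsf B_r\) preserves identities, composition and dagger) transports these identities to \(\mathbf M(h)^\dagger\mathbf M(h)=\id\) and \(\mathbf M(h)\mathbf M(h)^\dagger=\id\) on \(\CC[N^\mp]\otimes\CC[r_h]\). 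This is exactly \(\operatorname{EU}(h)\), as in Corollary~\ref{cor:qudit-mll-unitarity} and Definition~\ref{def:qudit-mll-EU}.

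The main obstacle I expect is the normalization step. The reduction from mixed-intermediate to pure-intermediate composition is where EU could fail, as the witnesses of Propositions~\ref{prop:omg-prime-h-witness} and~\ref{prop:QC-F2-witness} show. So I must check that the normalized expression is still administrative and finite. Termination follows from the lexicographic measure \((|D|,\operatorname{ht})\). In particular, the control naturalities must create only strictly smaller subproblems and never push a mixed composition onto the controlled output. Everything else is bookkeeping of fixed coefficient-one coherences, which are dagger-unitary and so cannot break either Gram identity.
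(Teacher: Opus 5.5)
Your proposal is correct and follows the paper's proof step for step. It establishes well-definedness via the carrier functor, translates a raw representative by Lemma~\ref{lem:final-administrative-representation}, normalizes to pure intermediates, applies Proposition~\ref{prop:final-pure-intermediate-unitarity}, and transfers the identities to the numerical matrix by Proposition~\ref{prop:first-order-representation}. Your appeal to Lemma~\ref{lem:final-boundary-balance} for \(r_h\geq1\), and to the dagger-unitarity of the displays \(d^\pm\), makes explicit bookkeeping that the paper leaves implicit.
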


\begin{proof}
The carrier interpretation respects the equations of \(\QC\), and
compact coherence makes naming and un-naming canonical. Hence \(M(h)\)
depends only on \(h\).
By Definition~\ref{def:basic-QC}, choose a finite raw term \(t\)
with \([t]_{\equiv}=h\), and put \(\pi_h:=\mathcal T(t)\).
Administrative representation gives
\[
 a_{\pi_h}
 =\operatorname{name}(\llbracket h\rrbracket),
 \qquad
 M(h)=M(\pi_h).
\]
Theorem~\ref{thm:final-pure-intermediate-normal-form} supplies a
carrier-equivalent \(\pi_h^{\mathrm p}\) with only pure-intermediate
compositions.  Hence
\[
 M(h)=M(\pi_h)=M(\pi_h^{\mathrm p}).
\]
Proposition~\ref{prop:final-pure-intermediate-unitarity} gives both
carrier Gram identities.  Proposition~\ref{prop:first-order-representation}
carries them to the two full-matrix identities above, which are
precisely the source essential-unitarity condition.
\end{proof}

%-----------------------------------------------------------------------------
\section{One-Slot Supermaps}
\label{sec:expr-v2}\label{sec:expressiveness}
%-----------------------------------------------------------------------------

\subsection{Resource-Linear Contexts}
\label{subsec:function-types-v2}

A one-slot context uses its argument once along every selected control
branch. Composition and tensor place fixed programs around that slot;
contextual control chooses among a synchronized family without copying
its input.

Fix a slot \(\Box:A\to B\) and a class
\(\mathsf{Const}\subseteq\operatorname{Mor}(\QC)\) of well-typed constant
programs. Resource-linear contexts are generated by
\[
\begin{aligned}
 C[\Box] ::={}&\Box
 \mid C[\Box]\circ K
 \mid K\circ C[\Box] \\
 &\mid C[\Box]\otimes K
 \mid K\otimes C[\Box]
 \mid \operatorname{ctrl}^{\Gamma}_n(D_1[\Box],\ldots,D_n[\Box]),
\end{aligned}
\]
where \(K\in\mathsf{Const}\) and every expression is well typed. In the
control clause, \(E=\Gamma\mathbin{\cdot}D\) for a positive \(D\); the
contexts \(D_i[\Box]:E\to P\) have a common positive
codomain \(P\), and the resulting context has type
\(\mathbf R_n\otimes E\to\mathbf R_n\otimes P\).
The grammar is closed under the typed currying and uncurrying
bijections of \(\QC\), each preserving its single slot.

\begin{theorem}[Resource-linear contexts]
\label{thm:resource-linear-supermaps-v2}
\label{thm:context-supermap-nholes-objectplus}
Every well-typed resource-linear context
\(C[\Box]:E\to F\), with \(\Box:A\to B\), defines
\(\Phi_C:\QC(A,B)\to\QC(E,F)\), \(\Phi_C(f):=C[f]\), and admits a
\(\CC\)-linear carrier map
\[
 \widetilde\Phi_C:
 \Perm(\CC)\bigl(O(A),O(B)\bigr)
 \longrightarrow
 \Perm(\CC)\bigl(O(E),O(F)\bigr)
\]
satisfying
\(\widetilde\Phi_C(\llbracket f\rrbracket)
=\llbracket\Phi_C(f)\rrbracket\) for every \(f:A\to B\) in \(\QC\).
Every \(\Phi_C(f)\) is essentially unitary.
\end{theorem}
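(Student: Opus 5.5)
The plan is to induct on the grammar of resource-linear contexts and prove three claims at once. First, \(\Phi_C\) is well defined on \(\equiv\)-classes and lands in \(\QC(E,F)\). Second, there is a \(\CC\)-linear carrier map \(\widetilde\Phi_C\) satisfying \(\widetilde\Phi_C(\llbracket f\rrbracket)=\llbracket\Phi_C(f)\rrbracket\). Third, essential unitarity of every \(\Phi_C(f)\) is then immediate from Theorem~\ref{thm:calculus-EU}, since each \(\Phi_C(f)\) is a morphism of \(\QC\). So the substance lies entirely in the first two claims.

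For the first claim, we plug a raw representative of \(f\) into the raw term formers. Each clause of the grammar is one of these: composition, tensor, or \(\operatorname{ctrl}^\Gamma_n\) with positive \(D\) and \(P\). The well-typedness hypotheses guarantee that every intermediate type stays raw. Because \(\equiv\) is a congruence closed under all term formers, the class \([C[t]]_\equiv\) depends only on \([t]_\equiv\). Hence \(\Phi_C(f):=[C[t]]_\equiv\) is well defined and has a raw representative, so it lies in \(\QC(E,F)\) by Definition~\ref{def:basic-QC}.

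For the second claim, we define \(\widetilde\Phi_C\) recursively on arbitrary carrier morphisms \(G:O(A)\to O(B)\):
\[
 \widetilde\Phi_{\Box}(G)=G,\qquad
 \widetilde\Phi_{C\circ K}(G)=\widetilde\Phi_C(G)\circ\llbracket K\rrbracket,\qquad
 \widetilde\Phi_{C\otimes K}(G)=\widetilde\Phi_C(G)\otimes\llbracket K\rrbracket,
\]
with the symmetric clauses defined in the same way. For the control clause we mirror \eqref{eq:QC-control-carrier}:
\[
 \widetilde\Phi_{\operatorname{ctrl}}(G)
 =d_{n,P}^{-1}\Bigl(\textstyle\bigoplus_{i}\bigl(\id_Q\otimes\widetilde\Phi_{D_i}(G)\bigr)\Bigr)d_{n,E}.
\]
Linearity follows from three facts. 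Composition with a fixed morphism is linear in \(\Perm(\CC)\). Tensor is bilinear, as it is matrix-entrywise inherited from the \(\Bbbk\)-bilinear extension in Definition~\ref{def:perm-C}. The map \(G\mapsto\bigoplus_i(\id_Q\otimes G_i)\) followed by fixed isomorphisms is linear. The equation \(\widetilde\Phi_C(\llbracket f\rrbracket)=\llbracket\Phi_C(f)\rrbracket\) then holds clause by clause, because the carrier functor of Theorem~\ref{thm:basic-structure} preserves composition and tensor and interprets control by exactly \eqref{eq:QC-control-carrier}.

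The only delicate point is closure of the grammar under currying. The statement asserts that the grammar is closed under the currying and uncurrying bijections, and we must check that \(\widetilde\Phi\) remains compatible with them. The approach is to express currying as composition with the fixed constants \(\eta_{X,A}\) and \(\ev_{A,B}\), together with tensor by identities, all of which are available in \(\QC\) by the remark after Definition~\ref{def:basic-QC}. A curried context therefore unfolds into a context generated by the basic clauses, with \(\mathsf{Const}\) enlarged to contain these structural maps. We expect this step to be the main obstacle, since currying must not break the one-slot discipline. If the stated \(\mathsf{Const}\) is not closed under these maps, we would include them, because essential unitarity only needs the resulting term to lie in \(\QC\).
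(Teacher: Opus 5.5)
Your proposal is essentially the paper's proof. It uses the same induction on the context and the same clause-by-clause carrier maps, with the control clause mirroring \eqref{eq:QC-control-carrier}. Linearity comes from bilinearity and biproduct assembly, the substitution identity comes from the carrier functor, and essential unitarity follows from closure of \(\QC\) together with Theorem~\ref{thm:calculus-EU}. Your congruence argument for well-definedness of \(\Phi_C\) just spells out what the paper calls closure of \(\QC\).

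The one divergence is currying, and there your reduction needs a small correction. Uncurrying is fine: \(\ev_{A,B}\circ(C[\Box]\otimes\id_A)\) is generated by the basic clauses once \(\ev_{A,B}\) and \(\id_A\) are constants. Currying, however, is \((\id_{A^*}\parr C[\Box])\circ\eta_{X,A}\), which needs a \(\parr\)-clause rather than tensoring with an identity. In the unit-free no-Mix source, \(\parr\) and \(\otimes\) are distinct type formers, and the listed grammar has no \(\parr\)-clause. So the curried context is not literally generated by the basic clauses, even after you enlarge \(\mathsf{Const}\).

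The repair is routine, and either of two fixes works:
\begin{itemize}
\item Add a clause \(K\parr C[\Box]\) (and its mirror) with carrier map \(x\mapsto\llbracket K\rrbracket\otimes\widetilde\Phi_C(x)\). This works because \(\equiv\) is closed under \(\parr\) and the carrier interprets \(\parr\) by \(\otimes\).
\item Do as the paper does and set \(\widetilde\Phi_{\operatorname{curry}C}:=\operatorname{curry}_{\Perm(\CC)}\circ\widetilde\Phi_C\), and similarly for uncurrying. This is linear because compact naming and un-naming are linear isomorphisms of carrier hom-spaces.
\end{itemize}
With either fix, this step is not the main obstacle you anticipated.
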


\begin{proof}
Induct on the context. The slot gives \(\widetilde\Phi_\Box=\id\).
Writing \(k:=\llbracket K\rrbracket\), the four multiplicative clauses
send \(x\) to \(\widetilde\Phi_D(x)\circ k\),
\(k\circ\widetilde\Phi_D(x)\),
\(\widetilde\Phi_D(x)\otimes k\), and
\(k\otimes\widetilde\Phi_D(x)\), respectively.
For a control clause with \(D_i[\Box]:E\to P\), set
\[
 \widetilde\Phi_C(x)
 :=d_{n,P}^{-1}
 \left(
  \bigoplus_{i=1}^n
  \bigl(\id_Q\otimes\widetilde\Phi_{D_i}(x)\bigr)
 \right)d_{n,E}.
\]
These maps are linear by bilinearity and finite biproduct assembly;
currying and uncurrying use the linear compact name and unname
isomorphisms. The carrier equations give the substitution identity.
Closure of \(\QC\) and Theorem~\ref{thm:calculus-EU} give the remaining
claims.
\end{proof}

%-----------------------------------------------------------------------------
\subsection{Equal-Ratio Unitary Dilation Stages}
\label{subsec:pure-comb-v2}
%-----------------------------------------------------------------------------

The source calculus realizes the two unitary stages of an exposed-memory
one-slot dilation. Memory initialization and partial trace remain in
the external Hilbert-space/channel semantics.

For a register \(X=\mathbf R_d\), write \(H_X\cong\CC^d\) for its
external Hilbert space. For finite-dimensional \(H\), put
\(\mathcal B(H):=\operatorname{End}(H)\). We use the Schr\"odinger
convention: channels are completely positive trace-preserving maps
\(\mathcal B(H_{\mathrm{in}})\to\mathcal B(H_{\mathrm{out}})\). A
deterministic one-slot supermap is a completely CP-preserving linear map
on quantum operations that sends channels to
channels~\cite{Chiribella2008Supermaps}.

\begin{theorem}[Internal realization of equal-ratio one-slot supermaps]
\label{thm:equal-ratio-supermaps-v2}
\label{thm:equal-dim-supermaps}
Let \(A_i=\mathbf R_{a_i}\) and \(B_i=\mathbf R_{b_i}\) for
\(i=0,1\), with \(a_0b_1=a_1b_0\), and let \(\Phi\) be a deterministic one-slot
supermap from channels
\(\mathcal B(H_{A_0})\to\mathcal B(H_{A_1})\) to channels
\(\mathcal B(H_{B_0})\to\mathcal B(H_{B_1})\). Then, for some
registers \(M=\mathbf R_m\) and \(N=\mathbf R_n\), there are morphisms
in \(\QC\)
\[
 V_1^{\QC}:B_0\otimes M\longrightarrow A_0\otimes N,
 \qquad
 V_2^{\QC}:A_1\otimes N\longrightarrow B_1\otimes M
\]
whose carriers are unitary extensions of the two isometries in a
Chiribella realization of \(\Phi\). For every \(f:A_0\to A_1\) in
\(\QC\),
\[
 \Phi^\sharp(f):=
 V_2^{\QC}\circ(f\otimes\id_N)\circ V_1^{\QC}:
 B_0\otimes M\longrightarrow B_1\otimes M
\]
is a morphism of \(\QC\) and is essentially unitary.
\end{theorem}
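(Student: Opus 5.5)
The plan has four steps: extract the Chiribella realization, pad the memories so both isometries become square, extend each to a unitary realized by a single register gate, and then read off the \(\QC\) claims from closure and Theorem~\ref{thm:calculus-EU}.

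\emph{Step 1: the realization.} By the characterization of deterministic one-slot supermaps~\cite{Chiribella2008Supermaps,Chiribella2008}, \(\Phi\) has a realization
\[
 \Phi(\mathcal E)(\rho)
 =\operatorname{Tr}_{K}\bigl[W_2(\mathcal E\otimes\id_{L})(W_1\rho W_1^\dagger)W_2^\dagger\bigr],
\]
with isometries \(W_1:H_{B_0}\to H_{A_0}\otimes L\) and \(W_2:H_{A_1}\otimes L\to H_{B_1}\otimes K\), for finite-dimensional memory spaces \(L\) and \(K\).

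\emph{Step 2: padding.} I want memory dimensions \(m,n\) with \(L\hookrightarrow\CC^n\), \(K\hookrightarrow\CC^m\), and
\[
 b_0m=a_0n,\qquad a_1n=b_1m .
\]
Put \(a_0/b_0=a_1/b_1=p/q\) in lowest terms. Take \(m=pt\) and \(n=qt\) with \(t\) large enough that \(qt\ge\dim L\) and \(pt\ge\dim K\). Then \(b_0m=b_0pt=a_0qt=a_0n\), and the second equation holds in the same way. Composing \(W_1\) with \(\id\otimes(L\hookrightarrow\CC^n)\) keeps it an isometry. For \(W_2\), precompose with the orthogonal projection \(\CC^n\to L\) and complete it on \(L^\perp\) by any isometry into the complement of its range. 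Such an isometry exists because \(W_2\) has range of dimension \(a_1\dim L\) inside a space of dimension \(b_1 m=a_1 n\), so the complement has dimension at least \(a_1(n-\dim L)\).

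\emph{Step 3: unitary extension and realization in \(\QC\).} Both padded maps are isometries between spaces of equal dimension, so each extends to a unitary \(U_1\), \(U_2\); their restrictions along the padding inclusions recover \(W_1,W_2\). The types \(B_0\otimes M\) and \(A_0\otimes N\) are positive, with \(k=2\) in both and \(N_P\) equal to \(b_0m=a_0n\), so they are shape-compatible; the same holds for \(A_1\otimes N\) and \(B_1\otimes M\). Definition~\ref{def:QC-register-unitaries} then gives gates
\[
 V_1^{\QC}:=u_{U_1}^{B_0\otimes M,\,A_0\otimes N},
 \qquad
 V_2^{\QC}:=u_{U_2}^{A_1\otimes N,\,B_1\otimes M}.
\]
By \eqref{eq:QC-register-unitary-carrier} their carriers have coefficient matrices \(U_1,U_2\) tensored with identity KL matchings. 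This is the claimed unitary extension of the Chiribella isometries.

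\emph{Step 4: the \(\QC\) claims.} \(\Phi^\sharp(f)\) is built by composition and tensor from \(f\), \(\id_N\), and two gates, so it lies in \(\QC\). Theorem~\ref{thm:calculus-EU} then gives essential unitarity; alternatively, it is the resource-linear context \(V_2\circ(\Box\otimes\id_N)\circ V_1\) of Theorem~\ref{thm:resource-linear-supermaps-v2}.

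The main obstacle is Step 2: making the padding compatible with \(W_2\) and proving that the resulting extension still realizes \(\Phi\). The point to verify is that the extra memory dimensions are never populated. \(W_1\) lands in \(H_{A_0}\otimes L\), and \(\mathcal E\otimes\id\) preserves \(\mathcal B(H_{A_1}\otimes L)\), so only the \(L\)-part of \(U_2\) is ever used. The partial trace over \(\CC^m\supseteq K\) therefore agrees with the trace over \(K\). I would check this support argument explicitly before concluding.
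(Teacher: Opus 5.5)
Your route is the paper's. You use the same lowest-terms padding (your $n=qt$, $m=pt$ is the paper's $n=rL$, $m=sL$), the same memory embeddings, the same two register gates on $B_0\otimes M\to A_0\otimes N$ and $A_1\otimes N\to B_1\otimes M$, and the same appeal to closure of $\QC$ and Theorem~\ref{thm:calculus-EU}. Your explicit completion of $W_2$ on $H_{A_1}\otimes L^\perp$ is a valid form of the paper's extension step. Since $a_1n=b_1m$, it is already unitary.

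There is one false step in Step~3, concerning $W_1$. Your padded first map $(\id\otimes(L\hookrightarrow\CC^n))W_1$ goes from $H_{B_0}$, of dimension $b_0$, to $H_{A_0}\otimes\CC^n$, of dimension $a_0n=b_0m$. So it is not an isometry between spaces of equal dimension unless $m=1$, and it does not ``extend to a unitary'' on those spaces. The gate $V_1^{\QC}$ has domain $B_0\otimes M$, but you never say how $H_{B_0}$ sits inside $H_{B_0}\otimes\CC^m$.

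The missing ingredient is a fixed initial memory vector $\ket{0}_M\in H_M$. Define the isometry on the subspace $H_{B_0}\otimes\ket{0}_M$ by $\psi\otimes\ket{0}_M\mapsto(\id\otimes j_N)W_1\psi$. Only then does $b_0m=a_0n$ guarantee a unitary extension $U_1:H_{B_0}\otimes H_M\to H_{A_0}\otimes H_N$ whose restriction recovers the Chiribella isometry. This is exactly what the paper does. The same vector also supplies the $f$-independent state $\omega_M=\ket{0}_M\bra{0}_M$ in Corollary~\ref{cor:equal-ratio-supermap-readout}.

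Your ``main obstacle'' is not needed for this theorem. Checking that the padded dilation still realizes $\Phi$ is the content of that corollary. If you carry out that support argument, it must start from $\rho\otimes\omega_M$ rather than from $\rho$ alone.
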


\begin{proof}
By \cite[Theorem~1, Eq.~(23)]{Chiribella2008Supermaps}, choose
isometries
\[
 V:H_{B_0}\longrightarrow H_{A_0}\otimes H_{N_0},
 \qquad
 W:H_{A_1}\otimes H_{N_0}
   \longrightarrow H_{B_1}\otimes H_{M_0}
\]
representing \(\Phi\).
Write
\(b_0/a_0=b_1/a_1=r/s\) in lowest terms. Choose \(L\) so that
\(rL\geq\dim H_{N_0}\) and \(sL\geq\dim H_{M_0}\), and put
\(n:=rL\), \(m:=sL\). Then \(b_0m=a_0n\) and \(a_1n=b_1m\).
Choose isometric embeddings \(j_M:H_{M_0}\hookrightarrow H_M\),
\(j_N:H_{N_0}\hookrightarrow H_N\), and a unit vector
\(\ket{0}_M\in H_M\). The
isometries
\[
 \psi\otimes\ket{0}_M
 \longmapsto(\id_{H_{A_0}}\otimes j_N)V\psi,
 \qquad
 (\id_{H_{A_1}}\otimes j_N)\xi
 \longmapsto(\id_{H_{B_1}}\otimes j_M)W\xi
\]
extend to unitaries \(\widetilde V_1\) and \(\widetilde V_2\) on the
equal-dimensional ambient spaces.

Both source--target pairs are positive, have port width \(2\), and
have equal address cardinality. If \(G_1,G_2\) are the matrices of
\(\widetilde V_1,\widetilde V_2\) in the lexicographic qudit bases,
Definition~\ref{def:QC-register-unitaries} supplies
\[
 V_1^{\QC}:=u_{G_1}^{B_0\otimes M,\,A_0\otimes N},
 \qquad
 V_2^{\QC}:=u_{G_2}^{A_1\otimes N,\,B_1\otimes M}.
\]
Their carriers are the required unitary extensions. Closure of \(\QC\)
and Theorem~\ref{thm:calculus-EU} give the conclusion.
\end{proof}

Thus the positive-interface discipline realizes the unitary stages of
every equal-ratio one-slot supermap covered by the theorem.

The stage construction does not require equal endpoint dimensions. By
Theorem~\ref{thm:calculus-EU}, however, \(\QC(A_0,A_1)\ne\varnothing\)
forces \(a_0=a_1\); the equal-ratio condition then forces \(b_0=b_1\).

For \(g:P\to Q\) in \(\QC\) between positive types, boundary balance
gives \(k_P=k_Q=:k\). Define its full fixed-width matrix by
\[
 \widehat U_g
 :=\mathsf B_k\!\left(
 d_Q\,\llbracket g\rrbracket\,d_P^{-1}
 \right):
 \CC[I_P]\otimes\CC[k]\longrightarrow
 \CC[I_Q]\otimes\CC[k].
\]
Proposition~\ref{prop:first-order-representation} gives
\(\widehat U_{h\circ g}=\widehat U_h\widehat U_g\), and
Theorem~\ref{thm:calculus-EU} makes \(\widehat U_g\) unitary. For
\(k=1\), identify \(\widehat U_g\) with \(U_g:\CC[I_P]\to\CC[I_Q]\).
For a unitary \(U\), write
\(\operatorname{Ad}_U(\rho):=U\rho U^\dagger\).

\begin{corollary}[External readout of the internal lift]
\label{cor:equal-ratio-supermap-readout}
For the realization of Theorem~\ref{thm:equal-ratio-supermaps-v2}
and every \(f:A_0\to A_1\) in \(\QC\), the full matrix factors as
\[
 \widehat U_{\Phi^\sharp(f)}
 =U_{\Phi^\sharp(f)}\otimes\id_{\CC[2]},
 \qquad
 U_{\Phi^\sharp(f)}
 =\widetilde V_2(U_f\otimes\id_{H_N})\widetilde V_1.
\]
There is a pure state \(\omega_M\) on \(H_M\), independent of \(f\),
such that, for every such \(f\) and every density operator \(\rho\) on
\(H_{B_0}\),
\[
 \Phi(\operatorname{Ad}_{U_f})(\rho)
 =\operatorname{Tr}_{H_M}\!\left[
 \operatorname{Ad}_{U_{\Phi^\sharp(f)}}(\rho\otimes\omega_M)
 \right].
\]
\end{corollary}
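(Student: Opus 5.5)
The corollary has two claims: a factorization of the full matrix, and an external readout identity. I would prove them in that order.

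\emph{Factorization.} All types involved are positive compound registers of port width $k$. These are $A_0,A_1$ (width $1$) and $B_0\otimes M$, $A_0\otimes N$, $A_1\otimes N$, $B_1\otimes M$ (width $2$). The first task is to show that every morphism in question has carrier, in the row display $d_P$, of the scalar form $(U_{\mathbf j,\mathbf i}\,\id_{R_k})$. For the gates $V_1^{\QC},V_2^{\QC}$ this is exactly \eqref{eq:QC-register-unitary-carrier}, with $U=G_1,G_2$.

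For $f:A_0\to A_1$ with $k=1$, the matching space is one-dimensional, so $\llbracket f\rrbracket$ is just its matrix. Theorem~\ref{thm:calculus-EU} makes $U_f$ unitary.

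The tensor $f\otimes\id_N$ is then $U_f\otimes\id_{H_N}$ in each row, carrying the coefficient-one matching $\id_{R_1}\sqcup\id_{R_1}=\id_{R_2}$. This uses the lexicographic compatibility \eqref{eq:QC-selector-display-compatibility} and Remark~\ref{rem:normal-form-load-bearing}.

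Applying $\mathsf B_2$ to a matrix with every entry a scalar multiple of $\id_{R_2}$ gives $U\otimes\underline{\id}=U\otimes\id_{\CC[2]}$. Proposition~\ref{prop:first-order-representation} makes $\mathsf B_2$ multiplicative, so composing yields
\[
 \widehat U_{\Phi^\sharp(f)}
 =(G_2\otimes\id_{\CC[2]})\bigl((U_f\otimes\id_{H_N})\otimes\id_{\CC[2]}\bigr)(G_1\otimes\id_{\CC[2]}).
\]
This is the claimed factorization, with $G_i=\widetilde V_i$.

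\emph{Readout.} Take $\omega_M:=\ket{0}_M\bra{0}_M$, with the same unit vector used in the proof of Theorem~\ref{thm:equal-ratio-supermaps-v2}; it is independent of $f$. For $\psi\in H_{B_0}$, compute $\widetilde V_1(\psi\otimes\ket0_M)=(\id\otimes j_N)V\psi$, and apply $U_f\otimes\id_{H_N}$. The result lies in the range $(\id_{H_{A_1}}\otimes j_N)(H_{A_1}\otimes H_{N_0})$, where $\widetilde V_2$ agrees with the prescribed isometry. Hence
\[
 U_{\Phi^\sharp(f)}(\psi\otimes\ket0_M)=(\id\otimes j_M)\,W(U_f\otimes\id_{H_{N_0}})V\psi.
\]
The isometry $j_M$ does not change the partial trace over memory, so
\[
 \operatorname{Tr}_{H_M}\!\left[\operatorname{Ad}_{U_{\Phi^\sharp(f)}}(\rho\otimes\omega_M)\right]
 =\operatorname{Tr}_{M_0}\!\left[W(U_f\otimes\id)V\rho V^\dagger(U_f\otimes\id)^\dagger W^\dagger\right].
\]
By the Chiribella realization \cite[Theorem~1]{Chiribella2008Supermaps}, this equals $\Phi(\operatorname{Ad}_{U_f})(\rho)$. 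For a unitary channel, the Kraus-free form of the realization applies directly, and linearity extends from pure states to density operators.

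\emph{Main obstacle.} I expect the main difficulty to be the bookkeeping of the two memory labels. The realization's memory $N_0$ sits between the isometries and $M_0$ is traced, while the padded registers $N,M$ appear in the stated types. One must confirm that the convention in the proof of the theorem indeed places $N_0$ as the intermediate memory and $M_0$ as the traced output. One must also check that the range condition holds, so that the unitary extension $\widetilde V_2$ acts as $W$ there. The rest is routine.
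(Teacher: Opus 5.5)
Your proposal is correct and follows essentially the same route as the paper. The factorization comes from three ingredients: the register-gate carrier equation, monoidality for \(f\otimes\id_N\), and multiplicativity of \(\mathsf B_2\) (Proposition~\ref{prop:first-order-representation}). The readout uses \(\omega_M=\ket{0}_M\!\bra{0}_M\), the extension identities of \(\widetilde V_1,\widetilde V_2\), and the Chiribella realization, with \(\operatorname{Tr}_{H_M}\) absorbing \(j_M\).

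The only difference is cosmetic. The paper first writes the padded identity for an arbitrary channel \(\mathcal F\) and then sets \(\mathcal F=\operatorname{Ad}_{U_f}\). You instead compute \(U_{\Phi^\sharp(f)}(\psi\otimes\ket{0}_M)\) directly at the vector level.
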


\begin{proof}
Use the embeddings and unitary extensions constructed in the proof of
Theorem~\ref{thm:equal-ratio-supermaps-v2}, and put
\(\omega_M:=\ket{0}_M\!\bra{0}_M\). For every channel
\(\mathcal F:\mathcal B(H_{A_0})\to\mathcal B(H_{A_1})\), the
extension identities give
\[
 \Phi(\mathcal F)(\rho)
 =\operatorname{Tr}_{H_M}\!\left[
 \widetilde V_2
 (\mathcal F\otimes\operatorname{id}_{\mathcal B(H_N)})
 \bigl(
  \widetilde V_1(\rho\otimes\omega_M)\widetilde V_1^\dagger
 \bigr)
 \widetilde V_2^\dagger
 \right].
\]
Here \(\mathcal F\otimes\operatorname{id}\) preserves the
\(j_N\)-supported block; the defining extension identity for
\(\widetilde V_2\) then produces the \(j_M\)-embedded sequential output,
and \(\operatorname{Tr}_{H_M}\) removes this final isometric embedding
\cite[Eq.~(23)]{Chiribella2008Supermaps}. Taking
\(\mathcal F=\operatorname{Ad}_{U_f}\), the register-gate carrier
equation, monoidality, and
Proposition~\ref{prop:first-order-representation} give the displayed
factorization of \(\widehat U_{\Phi^\sharp(f)}\), hence the result.
\end{proof}

%============================================================
% Conclusion
%============================================================
\section{Conclusion}

We asked what boundary predicate extends ordinary unitarity to
higher-order typed interfaces. For every arrow \(h:A\to B\) of
\(\QC\), compact naming and polarized un-naming give
\[
 M(h):
 \mathsf A_A^+\otimes\mathsf A_B^-
 \longrightarrow
 \mathsf A_A^-\otimes\mathsf A_B^+.
\]
The proof establishes
\[
 M(h)^\dagger M(h)
 =
 \id_{\mathsf A_A^+\otimes\mathsf A_B^-},
 \qquad
 M(h)M(h)^\dagger
 =
 \id_{\mathsf A_A^-\otimes\mathsf A_B^+},
\]
and the fixed-width representation transfers these identities to the
full numerical qudit--MLL matrix. Thus every \(\QC\)-morphism is
essentially unitary. The argument isolates the source discipline that
makes this possible: multiplicative structure, positive-register
gates, and contextual qudit control reduce to dagger-unitary matrices
composed through pure interfaces.

The applied contextual quantum switch illustrates coherent control at
a higher-order interface. Resource-linear one-slot contexts remain in
\(\QC\), and equal-ratio one-slot dilations admit internal lifts of
their two unitary stages while the memory wire remains explicit. An
internal treatment of measurement, partial trace, and mixed-state
semantics is left to future work.
\bibliographystyle{eptcs}
\bibliography{merged}

%============================================================
% Appendices — deferred proofs (extended version only)
%============================================================
% The conference build drops the appendix; conference-gated text
% must keep statements self-contained.

\ifconfversion\else
\appendix
%-----------------------------------------------------------------------------
% APPFINAL.tex -- proofs retained by structural.tex, formal.tex, and QCFINAL.tex
%-----------------------------------------------------------------------------

%-----------------------------------------------------------------------------
\section{Foundations of the coherent syntax}
\label{app:QC-finite-presentation}
%-----------------------------------------------------------------------------

For a finite term \(t\), define \(\operatorname{depth}(t)\) by
assigning depth \(0\) to control-free terms, taking the maximum over
the premises of every structural constructor, and setting
\[
 \operatorname{depth}
 \bigl(\operatorname{ctrl}^{\Gamma}_n(f_1,\ldots,f_n)\bigr)
 =
 1+\max_{i\in[n]}\operatorname{depth}(f_i).
\]

\begin{proposition}[Finitary presentation]
\label{prop:QC-finitary-presentation}
Every \(\equiv\)-class of ambient terms has a finite representative
of finite control depth, and every morphism of \(\QC\) has a finite
raw representative.  An interpretation of the register gates and
contextual controls in a symmetric unital \(*\)-autonomous category
that satisfies the defining relations extends uniquely to a functor
on the ambient quotient, and restricts to a functor from \(\QC\).
\end{proposition}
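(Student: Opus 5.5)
The proof has three parts, one for each assertion in Proposition~\ref{prop:QC-finitary-presentation}. For the first, induct on the derivation of an ambient term. Term formers are finitary: each constructor, including \(\operatorname{ctrl}^{\Gamma}_n\) for fixed \(n\), has finitely many premises, and \(\mathsf{Term}(X,Y)\) is the least family closed under them. Hence every element is a finite well-founded tree, and its control depth, as defined above, is a finite natural number by induction on the tree. Every \(\equiv\)-class is nonempty by definition, so it contains such a representative. The second assertion is then immediate from Definition~\ref{def:basic-QC}, since every class in \(\QC(A,B)\) contains a raw representative \(r\in\mathsf{RawTerm}(A,B)\), which is in particular a finite term.

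For the universal property, let \(\mathcal V\) be a symmetric unital \(*\)-autonomous category. Fix an object assignment on qudit literals, arrows \(G(u_U^{P,Q})\), and an operation sending each well-typed \(n\)-tuple of arrows to \(G(\operatorname{ctrl}^{\Gamma}_n(\vec f))\), such that the defining relations \eqref{eq:QC-control-uniform}--\eqref{eq:QC-control-register-overlap} and the register equations hold in \(\mathcal V\). Extend the assignment to \(\mathsf{Obj}_{\mathrm u}\) by the \(*\)-autonomous structure, sending \(\Unit,\bot\) to the unit and its dual. On terms, define \(\widetilde G\) by structural recursion: the canonical \(*\)-autonomous structure maps interpret the structural formers, the given data interpret gates and control, and premises are interpreted first. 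Well-foundedness of finite terms makes this a definition, and recursion on control depth handles the nested control premises. Uniqueness follows because any functor preserving the \(*\)-autonomous structure and agreeing on the generators must satisfy the same recursive clauses, so the two agree by induction.

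The main obstacle is descent through \(\equiv\). Since \(\equiv\) is the \emph{least} congruence containing the listed equations, it suffices to show that the kernel relation \(\widetilde G(s)=\widetilde G(t)\) is a congruence containing the generators. It is an equivalence relation. It is closed under every term former because \(\widetilde G\) is defined compositionally; for control, equal premises \(\widetilde G(f_i)=\widetilde G(f_i')\) give equal images because the operation depends only on the tuple of interpreted premises. It contains the symmetric \(*\)-autonomous equations because \(\mathcal V\) is symmetric \(*\)-autonomous, via the coherence theorem of~\cite{BCST1996} for the typed-component presentation. It contains the register and control equations by hypothesis. Hence \(\widetilde G\) factors uniquely through the ambient quotient. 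Restriction to \(\QC\) is then immediate, since \(\QC\) is a subcategory of the ambient quotient: raw representatives are closed under identities and composition, as noted after Definition~\ref{def:basic-QC}, so the restricted assignment is functorial.
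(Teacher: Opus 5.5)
Your proposal is correct and follows the paper's proof essentially step for step. Finiteness and finite control depth come from the finitary inductive grammar, the raw assertion comes directly from Definition~\ref{def:basic-QC}, and existence and uniqueness come by structural recursion; descent follows because equality of interpreted maps is a typed congruence, closed under control, that contains every generating relation, so leastness of \(\equiv\) applies. Your explicit appeal to~\cite{BCST1996} for the \(*\)-autonomous equations and your remark that raw representatives are closed under identities and composition (which gives the restriction to \(\QC\)) only spell out what the paper leaves implicit.
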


\begin{proof}
The grammar is inductive and finitary: every structural constructor
has finite arity and every control node has finitely many proper
premises.  Thus every term has finite depth.

Structural recursion defines the image of a finite term in any target
interpretation.  Equality of interpreted maps is a typed congruence
closed under contextual control.  Since it contains every generating
relation, leastness of \(\equiv\) makes the interpretation independent
of representatives.  The same recursion proves uniqueness.  The raw
assertion is Definition~\ref{def:basic-QC}.
\end{proof}

\begin{lemma}[Unit-free closed maps]
\label{lem:QC-unit-free-closed}
For raw \(X,A,B\), the canonical maps
\[
 \eta_{X,A}:X\longrightarrow A^*\parr(X\otimes A),
 \qquad
 \ev_{A,B}:(A^*\parr B)\otimes A\longrightarrow B
\]
have raw representatives.
\end{lemma}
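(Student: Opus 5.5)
We must show that for raw $X,A,B$ the canonical maps $\eta_{X,A}$ and $\ev_{A,B}$ are $\equiv$-equivalent to terms whose every intermediate formula is a raw (unit-free) type. The plan is to write each map directly as a unit-free MLL proof net, or equivalently as a composite of unit-free structural arrows. We never pass through the unital coevaluation $\Unit\to A^*\parr A$ or the evaluation $A\otimes A^*\to\bot$.

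For $\ev_{A,B}$ we use the linear distributor (weak distributivity) of the ambient symmetric $*$-autonomous structure,
\[
 \delta:(A^*\parr B)\otimes A\longrightarrow B\parr(A^*\otimes A),
\]
which is a structural arrow between raw types. We then need a unit-free ``cut'' $A^*\otimes A\to\ldots$ absorbed into $B$. Since that cannot be typed without $\bot$, we instead use the one-sided sequent presentation. The sequent $\vdash (A^*\parr B)^*, A^*, B$, i.e.\ $\vdash (A\otimes B^*), A^*, B$, is derived by $\otimes$ applied to the identity axioms $\vdash A, A^*$ and $\vdash B^*, B$. Compound identities are expanded through $\otimes/\parr$ to literal axioms, so every formula is raw. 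A $\parr$-rule then gives $\vdash (A\otimes B^*)\parr A^*, B$, which is $\vdash ((A^*\parr B)\otimes A)^*, B$. Dually, $\eta_{X,A}$ comes from $\vdash X^*, A^*\parr(X\otimes A)$, obtained from $\vdash X^*,X$ and $\vdash A^*,A$ by $\otimes$ and then $\parr$.

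The key step is converting such a raw cut-free one-sided derivation of $\vdash C^*,D$ into a raw two-sided term $C\to D$. The conversion uses the unit-free two-sided generators: the identities, associators and symmetries, the linear distributors $C\otimes(D\parr E)\to(C\otimes D)\parr E$, and the $\otimes$/$\parr$ functors. We then check that it is $\equiv$-equal to the canonical unital definition of $\eta$ and $\ev$. Concretely, $\ev_{A,B}$ should equal
\[
 (\epsilon'_A\parr\id_B)\circ\delta\circ\sigma,
\]
where $\epsilon'$ is the unital counit and the unitor $\bot\parr B\cong B$ is suppressed. The unit must then be eliminated. For literal $A=\mathbf R_n$ we would compute directly. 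For compound $A$ we induct on $A$, using the recursive expansion of compound identities and the coherence of linearly distributive categories (Cockett--Seely, \cite{CockettSeely1997WDC}), together with the BCST typed-component coherence \cite{BCST1996}. Two terms with the same unit-free proof net are then identified by $\equiv$.

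The main obstacle is this last identification. The unital and unit-free expressions must be shown equal in $\equiv$, not merely to have equal carrier images, and unit elimination is notoriously subtle \cite{HoustonHughesSchalk2005}. The first approach is to appeal to the fact that the free unit-free $*$-autonomous fragment embeds fully into the unital one on unit-free objects, with morphisms determined by their nets. Failing that, we would verify by hand the two defining triangle identities of the adjunction $-\otimes A\dashv A^*\parr-$ for the raw candidates. Uniqueness of adjunction data then forces them to coincide with the canonical $\eta_{X,A}$ and $\ev_{A,B}$ up to $\equiv$.
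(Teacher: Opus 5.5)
Your two derivations coincide with the paper's. You obtain \(\eta_{X,A}\) from \(\vdash X^*,X\) and \(\vdash A^*,A\) by \(\otimes\) and then \(\parr\). You obtain \(\ev_{A,B}\) from \(\vdash A^*,A\) and \(\vdash B,B^*\) by \(\otimes\), exchange and \(\parr\), using \(((A^*\parr B)\otimes A)^*=(A\otimes B^*)\parr A^*\). The paper's proof essentially stops there, adding only that \(\operatorname{curry}\) and \(\operatorname{uncurry}\) then have raw intermediates. In the typed-component presentation of \cite{BCST1996} underlying \(\mathsf{Term}\), such a derivation, with literal axioms from recursively expanded identities, is itself a term all of whose formulas are raw, and it is read as the compact-name sequent of the canonical map.

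Everything you add after this is extra, and two of those extra steps would fail. First, the conversion you call the key step cannot be carried out with the generators you list. Identities, associators, symmetries, linear distributors and the functorial actions of \(\otimes\) and \(\parr\) give only negation-free linearly distributive structure. Every composite of them links each literal occurrence of its codomain to an occurrence of the same literal in its domain. So no such composite has type \(X\to A^*\parr(X\otimes A)\) when, say, \(X=\mathbf R_m\) and \(A=\mathbf R_n\), and none can absorb the \(A^*\)--\(A\) pair inside \(\ev_{A,B}\). A raw two-sided account needs a rule that moves a formula across the turnstile, which the one-sided calculus already builds in. Taking a parametrized unit-free coevaluation as a generator instead would be circular, since that is \(\eta_{X,A}\) itself. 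You should drop the conversion and take the one-sided derivation as the representative.

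Second, your fallback via the triangle identities does not identify your candidates with the canonical maps. For the fixed functors \(-\otimes A\) and \(A^*\parr-\), unit--counit pairs satisfying both triangle identities are unique only up to a natural automorphism. Take any automorphism \(\varphi\) of \(A\), such as the symmetry of \(A=\mathbf R_n\otimes\mathbf R_n\) or a gate \(u_U\) on \(A=\mathbf R_n\). Then the pair \((\id_{A^*}\parr(\id_X\otimes\varphi))\circ\eta_{X,A}\) and \(\ev_{A,B}\circ(\id_{A^*\parr B}\otimes\varphi^{-1})\) satisfies both identities as well. The symmetry case also shows that the identification has real content: a correct unit-free net \(X\to A^*\parr(X\otimes A)\) need not have the canonical linking.

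If you want the identification explicit (the paper leaves it implicit), use your first route in a minimal form rather than a full-embedding theorem:
\begin{enumerate}
\item Cut elimination consists of \(*\)-autonomous equations contained in \(\equiv\).
\item By the subformula property, the cut-free form of the unital definition of a map between raw types uses no unit rule.
\item Computed by execution, its axiom linking is the straight one produced by your derivation.
\item Cut-free proofs with the same unit-free net are related by rule permutations, which are again \(*\)-autonomous equations.
\end{enumerate}
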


\begin{proof}
Using recursively expanded identities, the one-sided unit-free
multiplicative derivation
\[
 \frac{
   \dfrac{
     \vdash X^*,X
     \qquad
     \vdash A^*,A
   }{
     \vdash X^*,A^*,X\otimes A
   }\ \otimes
 }{
   \vdash X^*,A^*\parr(X\otimes A)
 }\ \parr
\]
represents \(\eta_{X,A}\).  For evaluation, tensor the identity
derivations with \(A\) and \(B^*\) principal:
\[
 \frac{
   \vdash A^*,A
   \qquad
   \vdash B,B^*
 }{
   \vdash A^*,B,A\otimes B^*
 }\ \otimes .
\]
After fixed exchange and par introduction this becomes
\[
 \vdash (A\otimes B^*)\parr A^*,B.
\]
Since
\[
 ((A^*\parr B)\otimes A)^*
 =(A\otimes B^*)\parr A^*,
\]
it is the compact-name sequent for \(\ev_{A,B}\).  Every formula in
both derivations is raw.

For raw \(f:X\otimes A\to B\) and
\(g:X\to A^*\parr B\), the standard terms
\[
 \operatorname{curry}(f)
 = (\id_{A^*}\parr f)\circ\eta_{X,A},
 \qquad
 \operatorname{uncurry}(g)
 = \ev_{A,B}\circ(g\otimes\id_A)
\]
have raw intermediates.  The ambient \(\beta\)- and \(\eta\)-equations
therefore restrict the currying bijection to \(\QC\).
\end{proof}

\begin{proposition}[Carrier descent]
\label{prop:QC-carrier-descent}
The carrier assignment of Definition~\ref{def:QC-semantics} respects
\(\equiv\).
\end{proposition}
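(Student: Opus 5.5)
The plan is to use the universal property of Proposition~\ref{prop:QC-finitary-presentation}. Since \(\equiv\) is the least typed congruence containing the generating relations, it suffices to show two things. First, the relation ``\(\widetilde{\llbracket s\rrbracket}=\widetilde{\llbracket t\rrbracket}\)'' is a typed congruence closed under every term former. Second, it contains each generating equation. Closure under the structural term formers is immediate from functoriality of composition and tensor in \(\Perm(\CC)\). Closure under \(\operatorname{ctrl}^{\Gamma}_n\) holds because \eqref{eq:QC-control-carrier} depends on the branches only through their carriers. The symmetric \(*\)-autonomous equations hold because \(\Perm(\CC)\) is dagger compact closed (Proposition~\ref{prop:well-defined-assoc}, specialized along \(\operatorname{ev}_d\)). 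In that setting \(\otimes\) and \(\parr\) are both interpreted by the tensor, and the De Morgan and unit coherences are canonical.

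For the register equations I will work in the displays \(d_P\) of \eqref{eq:QC-positive-display}. By \eqref{eq:QC-register-unitary-carrier}, \(u_U^{P,Q}\) is the block matrix \((U_{\mathbf j\mathbf i}\id_{(k_P,0)})\). Its entries lie in a positive sector, so composition creates no \(\delta\)-cycles, and block multiplication gives \(u_V\circ u_U=u_{VU}\) and \(u_I=\id\). The dual gates follow by compact transposition.

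The control relations are checked in selector-row coordinates, after conjugating by \(d_{n,-}\).
\begin{itemize}
\item Uniformity \eqref{eq:QC-control-uniform} holds because \(\bigoplus_i(\id_Q\otimes f)\) is the selector-row form of \(\id_{\mathbf H_n}\otimes f\). This is naturality of the distributivity reindexing, whose KL components are identities (Remark~\ref{rem:normal-form-load-bearing}).
\item Composition \eqref{eq:QC-control-composition} holds because the inner \(d_{n,P}d_{n,P}^{-1}\) cancels and block-diagonal matrices multiply blockwise.
\item Prenaturality \eqref{eq:QC-control-prenaturality} is naturality of \(d_{n,-}\) in its second argument. It gives \(d_{n,\Gamma\cdot D}(\id\otimes h)=\bigl(\bigoplus_i(\id_Q\otimes h)\bigr)d_{n,\Gamma'\cdot D'}\).
\item Permutation covariance \eqref{eq:QC-control-permutation} holds because, in selector rows, \(u_{\Pi_\rho}\otimes\id\) is the coefficient-one block permutation sending row \(i\) to row \(\rho(i)\). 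Conjugating a block-diagonal matrix by it permutes the diagonal blocks.
\end{itemize}

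The main obstacle is the register-overlap relation \eqref{eq:QC-control-register-overlap}. Here two different coordinate systems meet: the selector-row display \(d_{n,-}\) used for control, and the lexicographic display \(d_{\mathbf R_n\otimes D}\) used for gates. I will bridge them with the compatibility identity \eqref{eq:QC-selector-display-compatibility}. Conjugating \eqref{eq:QC-control-carrier} by \(\phi_{n,P}\bigl(\bigoplus_s(\id_Q\otimes d_P)\bigr)\) turns each diagonal block \(\id_Q\otimes\widetilde{\llbracket u^{D,P}_{U_i}\rrbracket}\) into \(\id_Q\otimes\bigl((U_i)_{\mathbf j\mathbf i}\id_{(k_D,0)}\bigr)\). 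The flattening \(\phi\) is coefficient-one, and it merges the selector port with the \(k_D\) payload ports into a width-\((1+k_D)\) row. The result is exactly the matrix with entries \((\bigoplus_iU_i)_{(s,\mathbf j),(s',\mathbf i)}\id_{(1+k_D,0)}\) in lexicographic order. That is \eqref{eq:QC-register-unitary-carrier} for \(u^{\mathbf R_n\otimes D,\mathbf R_n\otimes P}_{\bigoplus_iU_i}\), because \(k_{\mathbf R_n\otimes D}=1+k_D\).

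The care needed is to confirm that \(\phi_{n,P}\) and \(\phi_{n,D}\) use the same lexicographic convention on the source and target sides, so that no stray port permutation appears. Once every generator has been checked, leastness of \(\equiv\) finishes the proof.
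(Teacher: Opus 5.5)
Your proposal is correct and follows essentially the same route as the paper. The paper likewise verifies uniformity, composition, prenaturality and permutation covariance as block-diagonal identities in selector-row coordinates. It then uses the display-compatibility identity \eqref{eq:QC-selector-display-compatibility} to identify the controlled branch gates with the carrier of \(u_{\bigoplus_i U_i}\), and concludes by leastness of \(\equiv\).
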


\begin{proof}
For carrier maps \(F_i:O(A)\to O(B)\), set
\[
 B_n(F_1,\ldots,F_n)
 :=
 \bigoplus_{i=1}^n(\id_Q\otimes F_i):
 \bigoplus_i(Q\otimes O(A))
 \longrightarrow
 \bigoplus_i(Q\otimes O(B)).
\]
For \(G_i:O(B)\to O(C)\) and \(H:O(A')\to O(A)\),
\[
 \begin{aligned}
 B_n(F,\ldots,F)
 &=d_{n,B}(\id_{\mathbf H_n}\otimes F)d_{n,A}^{-1},\\
 B_n(G_1,\ldots,G_n)B_n(F_1,\ldots,F_n)
 &=B_n(G_1F_1,\ldots,G_nF_n),\\
 B_n(F_1,\ldots,F_n)B_n(H,\ldots,H)
 &=B_n(F_1H,\ldots,F_nH).
 \end{aligned}
\]
For \(\rho\in S_n\), put
\[
 S_{\rho,A}
 :=
 d_{n,A}
 (\widetilde{\llbracket u_{\Pi_\rho}\rrbracket}
  \otimes\id_{O(A)})
 d_{n,A}^{-1}.
\]
Then
\[
 S_{\rho,B}B_n(F_1,\ldots,F_n)S_{\rho^{-1},A}
 =
 B_n(F_{\rho^{-1}(1)},\ldots,F_{\rho^{-1}(n)}).
\]
Together with
\[
 \widetilde{\llbracket
   \operatorname{ctrl}^{\Gamma}_n(\vec f)
 \rrbracket}
 =
 d_{n,P}^{-1}
 B_n(\widetilde{\llbracket f_1\rrbracket},\ldots,
     \widetilde{\llbracket f_n\rrbracket})
 d_{n,\Gamma\mathbin{\cdot}D},
\]
these identities verify uniformity, composition, prenaturality, and
selector covariance.

It remains to check the register overlap.  Let
\[
 W:=\bigoplus_{s=1}^nU_s,
 \qquad
 W_{(s,\mathbf j),(t,\mathbf i)}
 =\delta_{s,t}(U_s)_{\mathbf j,\mathbf i}.
\]
Using Equation~\eqref{eq:QC-selector-display-compatibility} for
\(D\) and \(P\), the fully displayed carrier of the controlled
branch gates is
\[
 \left(
 W_{(s,\mathbf j),(t,\mathbf i)}
 \id_{(1+k_D,0)}
 \right)_{
 (s,\mathbf j)\in[n]\times I_P,\quad
 (t,\mathbf i)\in[n]\times I_D}.
\]
Equation~\eqref{eq:QC-register-unitary-carrier} assigns the same
carrier to \(u_W^{\mathbf R_n\otimes D,\mathbf R_n\otimes P}\), so
the overlap relation is respected.  The structural and register
relations hold by construction, completing the descent through the
global congruence.
\end{proof}

\section{The loop-erased carrier}
\label{app:structural-proofs}

\begin{proof}[Proof of Proposition~\ref{prop:well-defined-assoc}]
The notation \(\mathcal L\), the operations \(g\star f\) and
\(\ell(g,f)\), and the basis composition rule are those of
Definition~\ref{def:perm-C}.

\paragraph*{Identities and associativity.}
Let \(1_M\) be the straight loop-free identity matching. Gluing it to
either side of \(f\) gives the external paths of \(f\) and no closed
component:
\[
 f\star1_M=f,
 \qquad
 1_N\star f=f,
 \qquad
 \ell(f,1_M)=\ell(1_N,f)=0.
\]
Thus \([1_M]\) is the identity for
Equation~\eqref{eq:delta-twisted-basis-composition}.

For \(f:M\to N\), \(g:N\to P\), and \(h:P\to Q\), identify both
intermediate boundaries at once and call the resulting graph
\(\Gamma(h,g,f)\). Its external paths are independent of the order of
the two identifications, so
\[
 h\star(g\star f)=(h\star g)\star f.
\]
The closed components are independent of that order as well. The two
possible partitions of those components give
\[
 \ell(g,f)+\ell(h,g\star f)
 =
 \ell(h,g)+\ell(h\star g,f).
\]
Consequently
\[
\begin{aligned}
 [h]\circ([g]\circ[f])
 &=
 \delta^{\ell(g,f)+\ell(h,g\star f)}
 [h\star(g\star f)]                                      \\
 &=
 \delta^{\ell(h,g)+\ell(h\star g,f)}
 [(h\star g)\star f]
 =([h]\circ[g])\circ[f].
\end{aligned}
\]
Bilinearity proves associativity for arbitrary elements of
\(\mathcal L\). For composable matrices \(F,G,H\), finite
distributivity and the basis equation give
\[
 \bigl(H\circ(G\circ F)\bigr)_{qi}
 =
 \sum_{k,j}H_{qk}\circ(G_{kj}\circ F_{ji})
 =
 \sum_{k,j}(H_{qk}\circ G_{kj})\circ F_{ji}
 =
 \bigl((H\circ G)\circ F\bigr)_{qi}.
\]
The diagonal matrices \(\operatorname{diag}([1_{M_i}])\) are the
matrix identities.

\paragraph*{Dagger and tensor.}
Reversing every path in \(\Gamma(g,f)\) preserves its closed
components and reverses its external matching. Since
\(\bar\delta=\delta\),
\[
\begin{aligned}
 \ell(f^\dagger,g^\dagger)&=\ell(g,f),
 &
 f^\dagger\star g^\dagger&=(g\star f)^\dagger,\\
 ([g]\circ[f])^\dagger
 &=\delta^{\ell(g,f)}[(g\star f)^\dagger]
 =[f]^\dagger\circ[g]^\dagger.
\end{aligned}
\]
Entrywise expansion gives
\[
 \bigl((G\circ F)^\dagger\bigr)_{ik}
 =
 \sum_jF_{ji}^\dagger\circ G_{kj}^\dagger
 =
 (F^\dagger\circ G^\dagger)_{ik}.
\]

Disjoint union gives tensor on basis matchings. For two composable
pairs,
\[
\begin{aligned}
 (g_1\otimes g_2)\star(f_1\otimes f_2)
 &=(g_1\star f_1)\otimes(g_2\star f_2),\\
 \ell(g_1\otimes g_2,f_1\otimes f_2)
 &=\ell(g_1,f_1)+\ell(g_2,f_2).
\end{aligned}
\]
Substitution in Equation~\eqref{eq:delta-twisted-basis-composition}
proves tensor interchange on basis morphisms. On finite families,
tensor is the Kronecker matrix
\[
 (F\otimes F')_{(j,j'),(i,i')}
 =F_{ji}\otimes F'_{j'i'},
\]
so finite distributivity gives
\[
 (G\otimes G')\circ(F\otimes F')
 =
 (G\circ F)\otimes(G'\circ F').
\]
The associator, unitors, and symmetry are the corresponding loop-free
reindexing matrices. Both sides of each coherence equation have the
same external paths and create no closed component.

\paragraph*{Compact structure.}
For a signed interface \(M\), take the ordinary loop-free KL cup and
cap
\[
 [\eta_M]:\Unit\longrightarrow M^*\otimes M,
 \qquad
 [\epsilon_M]:M\otimes M^*\longrightarrow\Unit.
\]
In either yanking composite, every port belongs to an external path and
no closed component is formed. Hence
\[
 ([\epsilon_M]\otimes1_M)
 \circ(1_M\otimes[\eta_M])=1_M,
 \qquad
 (1_{M^*}\otimes[\epsilon_M])
 \circ([\eta_M]\otimes1_{M^*})=1_{M^*}.
\]
If \(\sigma_{M,M^*}:M\otimes M^*\to M^*\otimes M\) is the
loop-free symmetry matching, reversal gives
\[
 [\epsilon_M]
 =
 [\eta_M]^\dagger\circ[\sigma_{M,M^*}].
\]

For a finite family \(A=\bigoplus_iM_i\), the rows of
\(A^*\otimes A\) are indexed by \((i,j)\). Define \(\eta_A\) to have
the single entry \([\eta_{M_i}]\) in each diagonal row \((i,i)\) and
zero in every off-diagonal row; define \(\epsilon_A\) by the
transposed diagonal matrix of caps. Equation
\eqref{eq:delta-twisted-matrix-composition} makes each yanking
composite the diagonal identity matrix: its \(ii\)-entry is
\(1_{M_i}\), and every \(ij\)-entry with \(i\ne j\) is zero. Thus
\(\Perm(\Bbbk)\) is dagger compact closed. Its empty family is the zero
object, and the standard block injections and projections give finite
biproducts.

\paragraph*{Linearization and trace.}
Write a KL morphism as \((\tau,\kappa)\), where \(\tau\) is its
loop-free matching. If \(f=(\tau,\kappa)\) and
\(g=(\sigma,\lambda)\), KL execution gives
\[
 g\circ_{\mathbf{KL}}f
 =
 (\sigma\star\tau,\kappa+\lambda+\ell(\sigma,\tau)).
\]
With \(L(\tau,\kappa):=\delta^\kappa[\tau]\), the twist records
exactly the loop count, so
\[
 L(g)\circ L(f)
 =
 \delta^{\lambda+\kappa+\ell(\sigma,\tau)}
 [\sigma\star\tau]
 =L(g\circ_{\mathbf{KL}}f).
\]
The identity, dagger, and tensor equations above also commute with
\(L\). Hence \(L:\mathbf{KL}\to\mathcal L\) is a dagger monoidal
functor.

The cups and caps supply the canonical trace of a compact closed
category \cite{joyalStreetVerity1996}. On a basis matching, this trace
performs the same path execution and contributes one factor of
\(\delta\) for each closed component; it therefore agrees with
Equation~\eqref{eq:delta-twisted-basis-composition}. The matrix trace is
obtained from the displayed block cups and caps.

\end{proof}

%-----------------------------------------------------------------------------
\section{Boundary execution}
\label{app:boundary-proofs}
%-----------------------------------------------------------------------------

\begin{proof}[Proof of Proposition~\ref{prop:boundary-execution-gluing}]
For \(f:M\to N\), \(g:N\to P\), and
\(x\in\Pos(M)\sqcup\Neg(P)\), put \(x_0=x\) and follow the unique
alternating path
\[
 x_{k+1}=
 \begin{cases}
  \tau_f(x_k),&x_k\in\Pos(M)\sqcup\Neg(N),\\
  \tau_g(x_k),&x_k\in\Pos(N)\sqcup\Neg(P).
 \end{cases}
\]
Stop at the first \(x_m\in\Neg(M)\sqcup\Pos(P)\). The external
component of the glued graph satisfies
\[
 \tau_{g\circ f}(x)=x_m.
\]
Every remaining component lies wholly on the identified \(N\)-ports
and is a cycle. If their number is \(\ell(g,f)\), then
\[
 \kappa(g\circ f)
 =
 \kappa(f)+\kappa(g)+\ell(g,f).
\]
This is the paths-and-cycles decomposition of
\cite[Proposition~1]{AbramskyAS}.
\end{proof}

%-----------------------------------------------------------------------------
\section{The motivating witness calculations}
\label{app:QC-h-witness}
%-----------------------------------------------------------------------------

\begin{proof}[Proof of Proposition~\ref{prop:omg-prime-h-witness}]
Use the displayed port labels. In the composite of two copies, mark the
final target by double primes. Kelly--Laplaza execution gives
\[
 a\longmapsto b'',
 \qquad
 b\longmapsto c,
 \qquad
 c''\longmapsto a''.
\]
After renaming the final target, these are precisely the three links of
\(\bar w\); hence \(\bar w^2=\bar w\). This matching is not the identity, so
\(\bar w\ne\id_M\); an invertible idempotent is the identity, and hence
\(\bar w\) is not invertible.

Put \(L:=\id_M\oplus\bar w\) and \(K:=LXL\). In the fixed
qudit--MLL display used to define \(\mathsf C\), the six execution paths
are
\[
\begin{array}{c|c|c}
 x&\mathsf C(L)(x)&\mathsf C(K)(x)\\ \hline
 a_0&a'_0&b'_1\\
 b_0&b'_0&c_1\\
 c'_0&c_0&a'_1\\
 a_1&b'_1&b'_0\\
 b_1&c_1&c_0\\
 c'_1&a'_1&a'_0.
\end{array}
\]
Therefore
\[
\begin{aligned}
 \mathsf C(L)^{-1}\mathsf C(K):\quad
 &a_0\longmapsto a_1\longmapsto b_0\longmapsto b_1
       \longmapsto c'_0\longmapsto c'_1\longmapsto a_0.
\end{aligned}
\]
and hence
\[
 R:=\mathsf C(L)^{-1}\mathsf C(K)
   =(a_0\ a_1\ b_0\ b_1\ c'_0\ c'_1).
\]
The matrix \(\mathsf C(L)\) is a permutation.
In selector-row coordinates,
\[
 S:=\mathsf C(\id)^{-1}\mathsf C(X)
\]
is the qudit summand swap and hence a Hermitian involution. Therefore
\[
 \mathsf C(\cos\theta\,\id+\ii\sin\theta\,X)
 =\mathsf C(\id)(\cos\theta\,I+\ii\sin\theta\,S)
\]
is unitary, as is \(\mathsf C(L)\).
Since \(L^2=L\), linearity of the qudit--MLL matrix gives
\[
\begin{aligned}
 \mathsf C(h)
 &=\cos\theta\,\mathsf C(L)
   +\ii\sin\theta\,\mathsf C(K)\\
 &=\mathsf C(L)
   (\cos\theta\,I+\ii\sin\theta\,R),\\
 \mathsf C(h)^\dagger\mathsf C(h)
 &=I+\ii\cos\theta\sin\theta\,(R-R^{-1}).
\end{aligned}
\]
The permutation \(R\) is a six-cycle, so \(R\ne R^{-1}\). The last
Gram matrix is therefore not the identity when
\(\theta\notin\frac{\pi}{2}\mathbb Z\).
\end{proof}

\begin{proof}[Proof of Proposition~\ref{prop:QC-F2-witness}]
Let \(c_1,c_2\) be the two target-\(Y\) ports. Let \(x_1,x_3\) denote
the \(P\)-occurrences of \(Z\), and \(x_2,x_4\) its
\(P^*\)-occurrences. Take \(\bar q:Z\to Y\) with axiom links
\[
 x_1\text{--}c_1,
 \qquad
 x_4\text{--}c_2,
 \qquad
 x_2\text{--}x_3
 \quad\text{(cross-half cap)}.
\]
In the boundary matrix \(\mathsf B_{Z,Y}\), the
ratio of the matchings of \(\bar q\) and \(\bar q\sigma\) is the
three-cycle
\[
 t:=\tau_{\bar q\sigma}\tau_{\bar q}^{-1}
   =(c_1\ x_4\ x_2),
 \qquad
 t^\dagger=t^{-1}\ne t.
\]
The columns \(x_1\) and \(x_3\) of
\[
 \cos\theta\,\mathsf B_{Z,Y}(\bar q)
 +\ii\sin\theta\,\mathsf B_{Z,Y}(\bar q\sigma)
\]
have inner product \(-\ii\sin\theta\cos\theta\). It is nonzero when
\(\theta\notin\frac{\pi}{2}\mathbb Z\), so the displayed composite is
not essentially unitary.
\end{proof}

\fi
\end{document}